\title{Beamforming on the MISO interference channel with multi-user decoding capability\thanks{This work has been performed in the framework of the European research
project SAPHYRE, which is partly funded by the European Union under 
its FP7 ICT Objective 1.1 - The Network of the Future. This work is 
also supported in part by the Deutsche Forschungsgemeinschaft (DFG) under grant Jo 801/4-1.}}
\author{K.~M.~Ho$^\dagger$, D.~Gesbert$^\dagger$, E.~Jorswieck$^\ast$, R.~Mochaourab$^\ast$\\ \IEEEaftertitletext{\vspace{-2\baselineskip}}}
{\newtheorem{Thm}{Theorem}}
\newtheorem{Lem}{Lemma}
\newtheorem{Cor}{Corollary}
\newtheorem{Def}{Definition}
\newtheorem{Remark}{Remark}
\DeclareMathOperator{\mrt}{\scriptscriptstyle{MRT}} 
\DeclareMathOperator*{\argmax}{arg\,max}
\begin{document}
\maketitle

\begin{abstract}
 This paper considers the multiple-input-single-output interference channel (MISO-IC) with interference decoding capability (IDC), so that the interference signal can be decoded and subtracted from the received signal. On the MISO-IC with single user decoding, transmit beamforming vectors are classically designed to reach a compromise between mitigating the generated interference (zero forcing of the interference) or maximizing the energy at the desired user. The particularly intriguing problem arising in the multi-antenna IC with IDC is that transmitters may now have the incentive to amplify the interference generated at the non-intended receivers, in the hope that Rxs have a better chance of decoding the interference and removing it. This notion completely changes the previous paradigm of balancing between maximizing the desired energy and reducing the generated interference, thus opening up a new dimension for the beamforming design strategy. 

Our contributions proceed by proving that the optimal rank of the transmit precoders, optimal in the sense of Pareto optimality and therefore sum rate optimality, is rank one. Then, we investigate suitable transmit beamforming strategies for different decoding structures and characterize the Pareto boundary. As an application of this characterization, we obtain a \emph{candidate set of the maximum sum rate point} which at least contains the set of sum rate optimal beamforming vectors. We derive the Maximum-Ratio-Transmission (MRT) optimality conditions. Inspired by the MRT optimality conditions, we propose a simple algorithm that achieves maximum sum rate in certain scenarios and suboptimal, in other scenarios comparing to the maximum sum rate. \footnote{This work was partially presented in \cite{Ho2010}.}
 \end{abstract}

\section{Introduction}
Despite efforts since pioneering work such as \cite{Carleial1975, Costa1985}, the capacity region of interference channel is still an open problem. Numerous work have attempted to compute achievable rate regions and outer bounds on  the Single-Input-Single-Output Interference Channel (SISO-IC). It is proved in \cite{Carleial1975} that the capacity of any two-user SISO-IC is the same as in a corresponding IC in standard form: direct gain as unity and interference gain as a real positive scalar. Even since, the capacity region of two-user SISO-IC has been studied extensively (see e.g. \cite{Shang2009, Motahari2009, Sason2004, Etkin2008, Weng2008} and the references therein) : 
\begin{itemize}
\item In the weak interference regime where the cross interference gain is much weaker than the direct channel gain, the sum rate capacity is shown to be achievable by treating interference as thermal noise at the receiver which requires no feedback communication between Rx $j$ and Tx $i$ \cite{Shang2009, Motahari2009}.
\item At the other extreme, in the strong and very strong interference regime, both users should decode the interference signal while treating the desired signal as noise. The decoded interference signal is then subtracted from the received signal allowing the  desired signal to be decoded without any interference. \cite{Han1981, Carleial1975, Sato1981}.  
\item In the mixed interference regime, where one cross interference gain is stronger than direct channel gain and the other link is weaker, the sum rate capacity is shown to be attained by one user decoding interference and the other user treating interference as noise \cite{Motahari2009, Weng2008}.
\item The deterministic channel approach offers a good approximation of  the sum capacity of interference channel. In the deterministic channel approach, the input-output relationship of the channel is modeled as a bit-shifting operation \cite{Etkin2008, Cadambe2009, Jafar2010}. In \cite{Etkin2008}, the two-user SISO-IC sum capacity is approximated to within one bit using the deterministic channel approach.
\end{itemize}

To extend the above results, the conditions in which treating interference as noise achieving capacity on the vector Gaussian interference channel is studied in \cite{Bandemer2008}. The capacity region of a specific class of MIMO interference channels is characterized in \cite{Vishwanath2004}. Against intuition, the optimal rank of input covariance matrices remains inconclusive, unlike in single user detection (SUD) case where single mode beamforming attains capacity \cite{Shang2009}. The authors in \cite{Vishwanath2004} showed that the optimal input covariance matrix attaining capacity  of MISO-IC has rank less than the number of users in the IC.

The frontier of the achievable rate region, also known as the Pareto boundary, holds importance to the understanding of IC. Any rate points on the Pareto boundary are operating points such that one user cannot increase its rate without decreasing other users rates. Assuming perfect CSIT, the Pareto boundary of SISO-IC and MISO-IC with SUD are characterized in \cite{Charafeddine2007, Jorswieck2008}  respectively. In \cite{Lindblom2010}, the authors extended the results to partial CSIT. In this paper, we assume simple single user encoding transmitters and interference decoding capability at receivers, which yield a simpler scheme comparing to the Han-Kobayashi scheme \cite{Han1981}.  We study the effects of transmit beamforming on the achievable rate region and to characterize the Pareto boundary. We limit ourselves to the two transmitter-receiver (Tx-Rx) pairs interference channel with IDC. We assume each receiver can choose to fully decode interference (D) or treat interference as noise (N). No rate splitting-based averaging is considered between these two modes. The transmit beamforming vectors are optimized to achieve an operating point as close to the Pareto boundary as possible. In IC with SUD, interference mitigation may seem to be a reasonable strategy. Yet, with the IDC which we address in this paper, it is possible for the user to manipulate it's beamforming vector such that the generated interference is amplified for easier interference removal and yield a better operating point. The fundamental question becomes: \emph{when should the beamforming vectors be designed to amplify interference to improve performance and when to mitigate interference?}

The main contributions of this paper are:
\begin{itemize}
\item In Section \ref{section:definitions}, we describe an achievable rate region of the MISO-IC with IDC, with the assumption of linear precoding, taking into account of receivers choice of actions, D or N. 
\item We study and characterize its Pareto boundary in Section \ref{section:pareto}, in terms of beamforming vectors design and power allocation. We characterize the set $\Omega$ of tuples of beamforming vectors and power allocation which attain the Pareto boundary. 
\item As a special case, in Section \ref{section:sumrate}, we characterize the set of beamforming vectors which attain the maximum sum rate point in the form of a candidate set  $\tilde{\Omega}$. As the maximum sum rate problem is non-convex, conventional solutions rely on different searching techniques. Note that $\tilde{\Omega} \subset \Omega$. The cardinality of $\tilde{\Omega}$ is much smaller than the cardinality of $\Omega$ which provides a significant reduction of searching complexity. Further, we prove that with IDC full power must be used at each Tx to attain the maximum sum rate point. This result is interesting as non-full power should be employed in some Txs to achieve the maximum sum rate point in the SISO-IC-SUD \cite{Kiani2009, Charafeddine2007}.
\item In Section \ref{section:mrtopt}, we investigate the conditions of channel parameters for which simple strategies are sum rate optimal. In particular, we study the matched filter (MF)  with respect to the desired channel  and the MF with respect to the interference channel, which are termed as the maximum-ratio-transmission (MRT) schemes.  
\item Inspired by the MRT optimality conditions, we propose a suboptimal but very low complexity beamforming design in Section \ref{sec:sim_algo}. The suboptimal algorithm shows a promising tradeoff between complexity and performance, as illustrated by simulation results.
\item In Section \ref{section:simulation}, we provide simulations and discussions which illustrate cases where interference decoding is most beneficial to sum rate performances.
\end{itemize}

\subsection{Notations} The lower case bold face letter represents a vector. The conjugate transpose is denoted by $(.)^H$. $\mathbb{R}$ represents the set of real numbers. The projection matrix on vector $\mathbf{x}$ is $\Pi_{\mathbf{x}}=\mathbf{x} \mathbf{x}^H / ||\mathbf{x}||^2$ and the orthogonal projection matrix is $\mathbf{I}-\Pi_{\mathbf{x}}$ where $\mathbf{I}$ is the identity matrix. Denote a boolean statement by $B_i$. The complement of the statement $B_i$ is $\bar{B}_i$. The OR operation is denoted as $\cup$; AND operation as $\cap$. $\boldsymbol{\nu}(\mathbf{A})$ returns the dominant eigenvector of matrix $\mathbf{A}$. $tr(\mathbf{A})$ is the trace of matrix $\mathbf{A}$. The matrix $\mathbf{A}$ is positive semi-definite if $\mathbf{A} \succeq 0$. The symbol $\Leftrightarrow$ represents the \emph{if-and-only-if} relationship between two statements. $Re(z)$ and $Im(z)$ give the real and imaginary part of complex number $z$. The function $\arg(z)$ gives the phase of the complex number $z$. The operator $\times$ is the Cartesian product operator between two sets.

\section{Channel model}
We assume a system of two transmitter-receiver (Tx-Rx) pairs in which each Tx has $N$ transmit antennas and each Rx has only one receive antenna. This results in a two-user Multiple-Input-Single-Output Interference Channel (MISO-IC), which is illustrated in Fig.  \ref{fig:channelmodel} as an example with $N=3$. We assume that the Txs are using commonly known codebooks and therefore the Rx, if the channel qualities allow, can decode the interference and subtract it from the received signal. Also, we assume that the interference is successfully decoded if the rate of the interference signal is smaller than the Shannon capacity of the interference channel.

\tikzstyle{box}=[rectangle,draw=black!50,thick,minimum size=4mm]
\tikzstyle{ball}=[circle,draw=black!50,inner sep=0.25ex]
\begin{figure}
\begin{center}
\resizebox{0.5 \columnwidth}{!}{
\begin{tikzpicture}[node distance=5cm,auto,>=latex']
\node at (0,0) [box] (Rx1) {Rx1};
\node at (0,4) [box] (Tx1) {Tx1};

\node at (4,0) [box] (Rx2) {Rx2};
\node at (4,4) [box] (Tx2) {Tx2};

\draw[-open triangle 45 reversed] (0.2,4.25) -- (0.2,4.6);
\draw[-open triangle 45 reversed] (0,4.25) -- (0,4.6);
\draw[-open triangle 45 reversed] (-0.2,4.25) -- (-0.2,4.6);

\draw[-open triangle 45 reversed] (4.2,4.25) -- (4.2,4.6);
\draw[-open triangle 45 reversed] (4,4.25) -- (4,4.6);
\draw[-open triangle 45 reversed] (3.8,4.25) -- (3.8,4.6);



\draw (Tx1) to node {$\mathbf{h}_{11}$} (Rx1) ;
\draw (Tx1.south) -- (Rx2.north)  ;
\node at (2.9,1) {$\mathbf{h}_{21}$};
\draw (Tx2.south) -- (Rx1.north);
\node at (2.9,3.2) {$\mathbf{h}_{12}$} ;
\draw (Tx2) to node  {$\mathbf{h}_{22}$} (Rx2);
\end{tikzpicture}
}
\caption{The 2 users MISO-IC where Txs are equipped with 3 antennas.\label{fig:channelmodel}}

\end{center}
\end{figure}
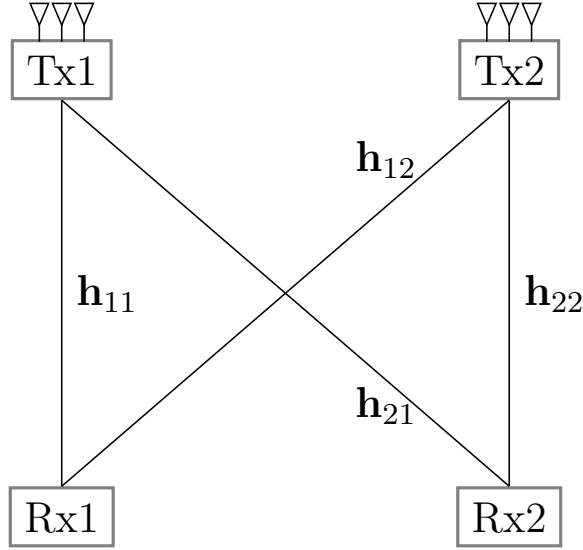

In the MISO-IC-SUD, it has been shown that the optimal transmit precoders are rank 1 and therefore beamforming attains the Pareto boundary. However, whether his conclusion holds in the MISO-IC-IDC is not known yet. We answer this question in the following by starting with a general transmit covariance matrix. Denote the transmit covariance matrix of Tx $i$ by $\mathbf{S}_i$ and the channel from Tx $i$ to Rx $\bar{i}$, where $i \in \left\{ 1, 2\right\}, \bar{i} \neq i$, $\mathbf{h}_{\bar{i}i}\in \mathbb{C}^{N \times 1}$. Note that the channel gains are proper i.i.d complex Gaussian coefficients with zero mean and unit variance. The received signal at Rx $i$ is therefore
\begin{equation}
y_i= \mathbf{h}_{ii}^H \mathbf{S}_i^{1/2} \mathbf{x}_i  + \mathbf{h}_{i\bar{i}}^H\mathbf{S}_{\bar{i}}^{1/2} \mathbf{x}_{\bar{i}} + n_i.
\end{equation}
The noise $n_i$ is a complex Gaussian random variable with zero mean and unit variance.  The symbol $\mathbf{x}_i$ is the transmit symbol at Tx $i$ with unit power. Denote the set of the transmit covariance matrices that satisfy the power constraint $tr(\mathbf{S}_i) \leq P_{max} $ to be  
\begin{equation}
\mathcal{S}= \left\{ \mathbf{S}\in \mathbb{C}^{N \times N}: \mathbf{S} \succeq 0,  tr(\mathbf{S}) \leq P_{max}\right\}, \; i=\{1,2\}.
\end{equation}

\section{Achievable Rate Region}\label{section:definitions}
We propose the following four decoding structures corresponding to the Rxs. actions: (N,N), (N,D), (D,N) and (D,D) \cite{Carleial1978}, with ``N" stands for treating interference as noise and ``D" stands for decoding and removing interference. Thus, (D,N) means Rx 1 decodes and removes interference and Rx 2  treats interference as noise. In \cite{Carleial1978}, these four decoding structures are proposed and its corresponding rate points are shown to be achievable in the SISO-IC. We extend the concept to the MISO-IC and define the following important quantities:
\begin{equation}\label{eqt:ch4_rate_def}
\begin{aligned}
C_1(\mathbf{S}_1) & \triangleq \log_2(1+\mathbf{h}_{11}^H \mathbf{S}_1 \mathbf{h}_{11}),\\ 
C_2(\mathbf{S}_2)  &\triangleq \log_2(1+\mathbf{h}_{22}^H \mathbf{S}_2 \mathbf{h}_{22}),\\
D_1(\mathbf{S}_1,\mathbf{S}_2)  &\triangleq \log_2 \left(1+ \frac{\mathbf{h}_{11}^H \mathbf{S}_1 \mathbf{h}_{11}}{ \mathbf{h}_{12}^H \mathbf{S}_2 \mathbf{h}_{12}+ 1} \right), \\
D_2(\mathbf{S}_1,\mathbf{S}_2)  &  \triangleq \log_2 \left(1+ \frac{\mathbf{h}_{22}^H \mathbf{S}_2 \mathbf{h}_{22}}{ \mathbf{h}_{21}^H \mathbf{S}_{1} \mathbf{h}_{21} + 1}\right),\\
T_2(\mathbf{S}_1,\mathbf{S}_2)    &\triangleq \log_2 \left(1+ \frac{\mathbf{h}_{12}^H \mathbf{S}_2 \mathbf{h}_{12}}{ \mathbf{h}_{11}^H \mathbf{S}_1 \mathbf{h}_{11}+ 1}\right), \\
T_1(\mathbf{S}_1,\mathbf{S}_2)    &\triangleq \log_2 \left(1+ \frac{\mathbf{h}_{21}^H \mathbf{S}_1 \mathbf{h}_{21}}{ \mathbf{h}_{22}^H \mathbf{S}_2  \mathbf{h}_{22} + 1}\right).
\end{aligned}
\end{equation} $C_1$ and $C_2$ are the \emph{single user rates}, the largest rate user 1 and 2 can achieve without the influence of interference. $D_1$ and $D_2$ are the rates corresponding to decoding the desired signal while treating interference as thermal noise and $T_1$  and $T_2$ are the rate corresponding to decoding the interference while treating the desired signals as noise.

Consequently, if both receivers decode interference, user $i$ must transmit at a rate that ensures interference decoding at Rx $\bar{i}$, thus we have the following:
\begin{equation} \label{eqt:rdd}
\begin{aligned}
R_1 & \leq \min \big\{ C_1(\mathbf{S}_1), T_1(\mathbf{S}_1,\mathbf{S}_2)  \big\}\\
R_2 & \leq  \min \big\{C_2(\mathbf{S}_2), T_2(\mathbf{S}_1,\mathbf{S}_2)  \big\}.
\end{aligned}
\end{equation}

Denote the rate region with interference decoding at both receivers by the Decode-Decode (DD) region:
\begin{equation}\label{def:rdd}
\mathcal{R}^{dd}= \bigcup_{ (\mathbf{S}_{1},\mathbf{S}_{2}) \in \mathcal{S} \times \mathcal{S} } \bigg\{ (R_1,R_2) \leq \bigg( \min \big\{ C_1(\mathbf{S}_1), T_1(\mathbf{S}_1,\mathbf{S}_2)  \big\},\min \big\{C_2(\mathbf{S}_2), T_2(\mathbf{S}_1,\mathbf{S}_2)  \big\} \bigg) \bigg\}.
\end{equation}

\begin{Remark}
For each selected pair of transmit beamformers, a corresponding rate region which satisfies the inequalities \eqref{eqt:rdd} is obtained. The achievable rate region $\mathcal{R}^{dd}$ is defined as the union of all regions achieved by all possible transmit beamformers.
\end{Remark}

On the other hand, if both Rxs choose to treat interference as noise, we obtain the NN region,
\begin{equation}\label{def:rnn}
 \mathcal{R}^{nn}= \bigcup_{\mathbf{S}_{1},\mathbf{S}_{2} \in \mathcal{S} } \bigg\{ (R_1,R_2) \leq \big(D_1(\mathbf{S}_1,\mathbf{S}_2) , D_2(\mathbf{S}_1,\mathbf{S}_2) \big) \bigg\}.
\end{equation}

If Rx 1 decodes interference but Rx 2 treats interference as noise,  Tx 2 must transmit at a rate that ensures interference decoding at Rx 1. Thus, the DN region is obtained as,
\begin{equation}\label{eqt:rdn_def}
\mathcal{R}^{dn}= \bigcup_{\mathbf{S}_{1},\mathbf{S}_{2} \in \mathcal{S} } \bigg\{ (R_1,R_2) \leq \bigg(C_1(\mathbf{S}_1) ,  \min \big\{D_2(\mathbf{S}_1,\mathbf{S}_2), T_2(\mathbf{S}_1,\mathbf{S}_2) \big\} \bigg) \bigg\}.
\end{equation}

\begin{Remark}
$\mathcal{R}^{dn}(\mathbf{S}_1,\mathbf{S}_2)$ is the rate region that the inequalities in \eqref{eqt:rdn_def} are satisfied for specific transmit covariance matrices $(\mathbf{S}_1,\mathbf{S}_2)$. It can be an empty region if the inequalities cannot be satisfied at the same time. This corresponds to the situation where the data rate of Tx 2 is too high for Rx 1 to decode.
\end{Remark}

Similarly,  we have for the ND region,
\begin{equation}\label{eqt:rnd_def}
\mathcal{R}^{nd}=\bigcup_{ \mathbf{S}_{1},\mathbf{S}_{2} \in \mathcal{S} } \bigg\{ (R_1, R_2) \leq \bigg( \min \big\{D_1(\mathbf{S}_1,\mathbf{S}_2),T_1(\mathbf{S}_1,\mathbf{S}_2) \big\},    C_2(\mathbf{S}_1) \bigg)  \bigg\}.
\end{equation}

Finally, one achievable rate region of the MISO-IC with interference decoding capability is therefore the union of the above regions:
\begin{equation}\label{eqt:ach_rate_region_union}
 \mathcal{R}= \mathcal{R}^{nn} \cup \mathcal{R}^{dd} \cup \mathcal{R}^{dn} \cup \mathcal{R}^{nd}. 
\end{equation}

We now turn our attention to the Pareto boundary of the rate region. To find the boundary achieving solutions, we proceed by identifying a set of smaller dimension than $\mathcal{S} \times \mathcal{S}$ which is guaranteed to contain the Pareto optimal solutions. We refer to such set as \emph{candidate set}. The main practical value of a candidate set is to offer a subtantial reduction of complexity compared with the exhaustive search over the full set $\mathcal{S} \times \mathcal{S}$.

\begin{Def}
Denote the set of points on the Pareto boundary of $\mathcal{R}$ by $\mathcal{B}(\mathcal{R})$. If the rate pair $(r_1,r_2) \in \mathcal{R}$ is on the boundary, $(r_1,r_2) \in \mathcal{B}(\mathcal{R})$, then there does not exist a rate pair $(r_1', r_2') \in \mathcal{R}$ such that $(r_1', r_2') \geq(r_1,r_2)$, with one strict inequality. Using $\mathcal{R}$ in \eqref{eqt:ach_rate_region_union},
\begin{equation}\label{def:pareto}
 \mathcal{B}(\mathcal{R}) \subset \mathcal{B}(\mathcal{R}^{nn}) \cup  \mathcal{B}(\mathcal{R}^{dd}) \cup  \mathcal{B}( \mathcal{R}^{dn}) \cup  \mathcal{B}(\mathcal{R}^{nd} )
\end{equation}
\end{Def}

\begin{Def}
The transmit covariance matrices $\mathbf{S}_1, \mathbf{S}_2$ are Pareto optimal in the rate region $\mathcal{R}$ if 
\begin{equation}
\bigg( R_1(\mathbf{S}_1, \mathbf{S}_2), R_2 (\mathbf{S}_1, \mathbf{S}_2)\bigg) \in \mathcal{B} (\mathcal{R}).
\end{equation}
\end{Def}

\begin{Def}
The candidate set $\Omega^{xy}$ of $\mathcal{B}(\mathcal{R}^{xy})$, $x,y \in \{ n,d\}$, is a set of transmit covariance matrices that contains the transmit covariance matrices that attain the Pareto boundary of $\mathcal{R}^{xy}$. If $(\mathbf{S}_1, \mathbf{S}_2)$ are Pareto optimal, then $(\mathbf{S}_1, \mathbf{S}_2) \in \Omega^{xy}$. Similarly, the candidate set of $\mathcal{B}(\mathcal{R})$ is $\Omega$ which contains all pairs of $(\mathbf{S}_1, \mathbf{S}_2)$ which are Pareto optimal in the region $\mathcal{R}$.
\end{Def}


\section{The Pareto optimal transmit covariance matrices}\label{section_bf}
In this section, we study the transmit covariance matrices that attain the Pareto boundary and prove that they are rank one.

\begin{Thm}\label{thm:transmit_cov}
The Pareto boundaries of the NN region, the DN region and the DD region are attained by rank 1 matrices. Consequently, the Pareto boundary of MISO-IC-IDC, defined in \eqref{def:pareto}, is attained by rank one transmit covariance matrices, or transmit beamforming. 
\end{Thm}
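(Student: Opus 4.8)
The plan is to exploit the fact that every rate expression in \eqref{eqt:ch4_rate_def} depends on a covariance matrix $\mathbf{S}_i$ only through the two real scalars $\alpha_i \triangleq \mathbf{h}_{ii}^H \mathbf{S}_i \mathbf{h}_{ii}$ (desired power) and $\beta_i \triangleq \mathbf{h}_{\bar i i}^H \mathbf{S}_i \mathbf{h}_{\bar i i}$ (generated interference power). Hence, for each of the regions $\mathcal{R}^{nn}, \mathcal{R}^{dn}, \mathcal{R}^{nd}, \mathcal{R}^{dd}$, the achieved pair $(R_1,R_2)$ is a fixed function of the four scalars $(\alpha_1,\beta_1,\alpha_2,\beta_2)$. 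I would fix a Pareto-optimal pair $(\mathbf{S}_1,\mathbf{S}_2)$ of a given region and modify one transmitter at a time, say $\mathbf{S}_1$ with $\mathbf{S}_2$ held fixed; the argument for $\mathbf{S}_2$ is identical under the index swap $1 \leftrightarrow 2$, and $\mathcal{R}^{nd}$ follows from $\mathcal{R}^{dn}$ by the same symmetry.

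First I would record the monotonicity of $(R_1,R_2)$ in $(\alpha_1,\beta_1)$ inside a single decoding structure. The key observation is that, within any one region, the interference scalar $\beta_1$ enters with a single consistent sign: in $\mathcal{R}^{nn}$ and $\mathcal{R}^{dn}$ it appears only through $D_2$ (harmful, decreasing), whereas in $\mathcal{R}^{dd}$ and $\mathcal{R}^{nd}$ it appears only through $T_1$ (helpful, increasing). The conflicting roles of interference surface only when comparing different structures, not inside one. Moreover, with $\alpha_1$ held fixed, $R_1$ is unaffected by $\beta_1$ in the $\mathrm{N}$-at-Rx-$2$ structures and $R_2$ is unaffected by $\beta_1$ in the $\mathrm{D}$-at-Rx-$2$ structures. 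Consequently, in every region one may keep $\alpha_1$ at its Pareto value and push $\beta_1$ in the favourable direction, minimizing it for $\mathcal{R}^{nn},\mathcal{R}^{dn}$ and maximizing it for $\mathcal{R}^{dd},\mathcal{R}^{nd}$, which weakly improves both $R_1$ and $R_2$ at once.

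The crux of the proof is then the following rank-reduction lemma: the extremal value of $\beta_1$ subject to the equality constraint $\mathbf{h}_{11}^H\mathbf{S}_1\mathbf{h}_{11} = \alpha_1$ and $\mathbf{S}_1 \in \mathcal{S}$ is attained by a rank-one matrix. I would prove it by first noting that only the restriction of $\mathbf{S}_1$ to the (at most two-dimensional) subspace $\mathrm{span}\{\mathbf{h}_{11},\mathbf{h}_{21}\}$ affects both the objective and the constraint, so any power placed outside this subspace merely wastes the trace budget and can be removed. The residual problem is a semidefinite program over $2\times 2$ positive semidefinite matrices with one linear equality (the $\alpha_1$ constraint) and the trace inequality $tr(\mathbf{S}_1)\le P_{max}$; by the standard rank bound for extreme points of such programs, namely that an optimal extreme point of an SDP with $m$ linear constraints has rank $r$ satisfying $r(r+1)/2 \le m$, here $m=2$ forces $r\le 1$. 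Equivalently, extremizing $\beta_1$ for fixed $\alpha_1$ amounts to optimizing $tr\big(\mathbf{S}_1 \mathbf{M}\big)$ for a Hermitian matrix $\mathbf{M} = \mathbf{h}_{21}\mathbf{h}_{21}^H - \lambda\,\mathbf{h}_{11}\mathbf{h}_{11}^H$, whose optimum over $\mathcal{S}$ is a single beam along $\boldsymbol{\nu}(\mathbf{M})$. This is the step I expect to carry the real content; the rest is bookkeeping.

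Finally I would assemble the argument. The rank-one matrix $\mathbf{S}_1'$ achieving the extremal $(\alpha_1,\beta_1)$ reproduces the Pareto rates exactly: it matches $\alpha_1$ by the equality constraint and attains the now-extremal $\beta_1$, so it weakly improves the rate pair; since the starting point is Pareto-optimal, no strict improvement is possible and the rates must be unchanged, whence $\mathbf{S}_1'$ is itself Pareto-optimal. Repeating the construction for $\mathbf{S}_2$ with $\mathbf{S}_1'$ fixed yields a rank-one pair $(\mathbf{S}_1',\mathbf{S}_2')$ on the boundary of the region. As this holds for each of $\mathcal{R}^{nn},\mathcal{R}^{dn},\mathcal{R}^{nd},\mathcal{R}^{dd}$, the inclusion \eqref{def:pareto} extends the conclusion to $\mathcal{B}(\mathcal{R})$, establishing that transmit beamforming is optimal.
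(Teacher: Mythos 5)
Your proof is correct, and it reaches the result by a route that differs from the paper's in two substantive ways. The paper also reduces everything to the received powers $(\alpha_i,\beta_i)$ and exploits monotonicity, but it organizes the argument around the boundary of the received power region: its key lemma performs a case analysis on which utilities increase or decrease in each received power, handles the one ``conflicting'' power per transmitter (e.g.\ $\alpha_1$ in the DN and DD regions, which helps $R_1$ through $C_1$ but hurts $R_2$ through $T_2$) by a separate power-control argument, and then invokes \cite[Lemma 3]{Mochaourab2010} for the fact that boundary points of the received power region are attained by rank-one matrices. You avoid both complications: by freezing $\alpha_1$ at its Pareto value and extremizing only $\beta_1$ --- whose sign of influence is indeed consistent inside each region, harmful via $D_2$ in NN/DN and helpful via $T_1$ in ND/DD --- the conflicting-monotonicity case never arises; and instead of citing an external lemma for the rank-one step, you prove it directly (restriction to $\mathrm{span}\{\mathbf{h}_{11},\mathbf{h}_{21}\}$ plus the extreme-point rank bound for spectrahedra), which also absorbs the NN region that the paper disposes of purely by citation to \cite{Shang2009,Mochaourab2010}. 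What your approach buys is a self-contained, uniform treatment of all four regions; what the paper's buys is a reusable lemma stated for general collections of monotone utilities, which is better suited to extensions beyond two users. Two small caveats: for complex Hermitian matrices the Pataki-type bound reads $r^2\le m$ rather than $r(r+1)/2\le m$ (with $m=2$ both still force $r\le 1$, so nothing breaks); and your Lagrangian aside about the dominant eigenvector of $\mathbf{M}=\mathbf{h}_{21}\mathbf{h}_{21}^H-\lambda\,\mathbf{h}_{11}\mathbf{h}_{11}^H$ presumes the existence of the multiplier $\lambda$, so the extreme-point argument should be regarded as the actual proof and the eigenvector remark as illustration.
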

\begin{proof}
Here, we provide a sketch of the proof. For details, please refer to Appendix \ref{app:transmit_cov}. We first show that the boundaries of rate region $\mathcal{R}^{nd}$ and $\mathcal{R}^{dd}$ are attained by rank one matrices. By exchanging the roles of the transmitters, we obtain that $\mathcal{B}(\mathcal{R}^{dn})$ is attained by rank one matrices. From \cite{Shang2009, Mochaourab2010}, it is shown that the boundary in the NN region is attained by rank one matrices. Hence,  the boundaries of all decoding structures are attained by rank one transmit covariance matrices. Since the Pareto boundary of the proposed achievable rate region in the MISO-IC-IDC, defined in \eqref{def:pareto}, is a subset of the union of the above boundaries, we conclude that this Pareto boundary is attained by rank one transmit covariance matrices, or transmit beamforming.
\end{proof}

From Theorem \ref{thm:transmit_cov}, we have established that the Pareto boundary is attained by transmit beamforming vectors. To facilitate the following discussions, we define the transmit beamforming vectors $\mathbf{w}_i$ and transmit power $P_i$, for $i=1,2$,
\begin{equation}
\mathbf{S}_i= \mathbf{w}_i \mathbf{w}_i^H P_i
\end{equation} with $\| \mathbf{w}_i \|^2=1$. As an abuse of notation, we write $\mathcal{S}$ as the set of all possible beamforming vectors,
\begin{equation}
\mathbf{w}_i \in \mathcal{S}, \hspace{1cm} \mathcal{S}= \left\{ \mathbf{w} \in \mathbb{C}^{N \times 1}: \| \mathbf{w}\|=1 \right\}.
\end{equation} Consequently, we redefine the candidate sets in terms of transmit power allocations and beamforming vectors. The candidate set $\Omega^{xy}$ of $\mathcal{B}(\mathcal{R}^{xy}), x,y \in \{ n,d\}$ contains the Pareto optimal beamforming vectors and transmit power allocations.
\begin{equation}
\Omega^{xy} \supset \bigg\{ (\mathbf{w}_1, \mathbf{w}_2, P_1, P_2) : \big(R_1(\mathbf{w}_1,\mathbf{w}_2, P_1, P_2), R_2(\mathbf{w}_1, \mathbf{w}_2, P_1, P_2) \big) \in \mathcal{B}(\mathcal{R}^{xy}) \bigg\}
\end{equation}
and the candidate set of $\mathcal{B}(\mathcal{R})$ is 
\begin{equation}
\Omega \supset \bigg\{ (\mathbf{w}_1, \mathbf{w}_2, P_1, P_2)  : \big(R_1(\mathbf{w}_1, \mathbf{w}_2, P_1, P_2) , R_2(\mathbf{w}_1, \mathbf{w}_2, P_1, P_2)  \big) \in  \mathcal{B}(\mathcal{R}) \bigg\}.
\end{equation}

In the following sections, we study the Pareto boundary in terms of power allocation and transmit beamforming vectors in different decoding structures namely $\mathbf{R}^{nd}$ and $\mathbf{R}^{dd}$. $\mathbf{R}^{nn}$ is the case of MISO-IC-SUD and is well studied in \cite{Jorswieck2008}. $\mathbf{R}^{dn}$ is symmetric to $\mathbf{R}^{nd}$ and is therefore omitted here. Then as a special case, we discuss the characterization of the maximum sum rate point in each decoding structures. 

\section{The received power region}
The received power region was first proposed in \cite{Mochaourab2010} as a powerful tool to illustrate the dependancy between the received power tuple and the Pareto boundary on the $K$-user MISO-IC-SUD.  The tuple at one receiver includes the received power from the desired signal and the received power from the interference signal(s). In a two-user MISO-IC, we can illustrate the received power region as a two-dimensional plot, as shown in Fig. \ref{fig:powergain}. Mathematically, the received power region of user $i$ for the two-user MISO-IC is defined as:
\begin{equation}
\Phi_i= \left\{ \left(| \mathbf{h}_{ii}^H \mathbf{w}_i|^2 P_i, |\mathbf{h}_{ji}^H \mathbf{w}_i|^2 P_i \right): \mathbf{w}_i \in \mathcal{S}, 0 \leq P_i \leq P_{max}\right\}
\end{equation}
 where the desired channel power of user $i$ is $|\mathbf{h}_{ii}^H \mathbf{w}_i|^2 P_i$ and the interference channel power of user $i$ is $ |\mathbf{h}_{ji}^H \mathbf{w}_i|^2 P_i$.

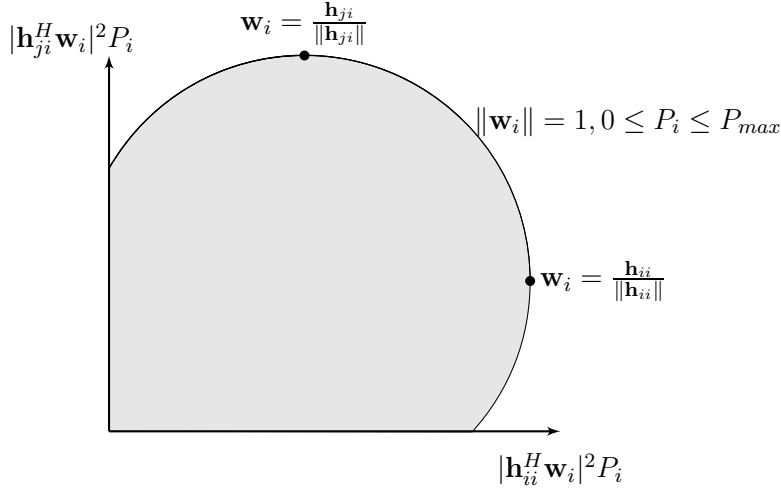
\begin{figure}
\begin{center}
\begin{tikzpicture}[node distance=5cm,auto,>=latex']

\draw[fill=black!10] (0,3.5) arc (150:-42:3cm) -- (0,0);

\draw[->,thick] (0,0) -- (6,0);
\node at (6,-0.5) {$|\mathbf{h}_{ii}^H \mathbf{w}_i|^2 P_i$};

\draw[->,thick] (0,0) -- (0,5);
\node at (-0.5,5.2) {$|\mathbf{h}_{ji}^H \mathbf{w}_i|^2 P_i$};

\draw (0,3.5) arc (150:90:3cm)  node[circle, draw=black, fill=black, inner sep=0.25ex]  {}  node [above] {$\mathbf{w}_i=\frac{\mathbf{h}_{ji}}{\|\mathbf{h}_{ji}\|}$} ;
\draw (0,3.5) arc (150:45:3cm)  node [right] {$\|\mathbf{w}_i\|=1, 0 \leq P_i \leq P_{max}$};
\draw (0,3.5) arc (150:0:3cm) node[circle, draw=black, fill=black, inner sep=0.25ex]  {}  node [right]{$\mathbf{w}_i=\frac{\mathbf{h}_{ii}}{\|\mathbf{h}_{ii}\|}$};

\end{tikzpicture}
\caption{The received  power region for Tx $i$. \label{fig:powergain}}

\end{center}
\end{figure}

The importance and relevance of the received power region can be summarized in the following.

\begin{itemize}
\item \emph{The boundary of received power region and the Pareto boundary}: The received power region is a convex and compact region with respect to the received power values. The Pareto boundary of the MISO-IC-SUD with linear pre-coding, the NN region here, is shown to be attained by the received power values \emph{on the boundary of the received power region} \cite{Jorswieck2008}. 
\item \emph{Monotonicity of rates}: The rate metrics defined in \eqref{eqt:ch4_rate_def} are either \emph{monotonically increasing or decreasing} with the channel powers. The optimization of such rates can be simplified by first computing the optimized channel powers and then the corresponding beamforming vectors that achieve such channel powers. 
\end{itemize}

Now we define the received power region achieved by beamforming vectors inside the Pareto boundary candidate set, $\Omega$,
\begin{equation}
\Phi(\Omega)= \bigg\{ \bigg( | \mathbf{h}_{11}^H \mathbf{w}_1|^2 P_1, |\mathbf{h}_{21}^H \mathbf{w}_1|^2 P_1, |\mathbf{h}_{22}^H \mathbf{w}_2|^2 P_2, |\mathbf{h}_{12}^H \mathbf{w}_2|^2 P_2 \bigg): \left( \mathbf{w}_1, \mathbf{w}_2, P_1, P_2\right) \in \Omega \bigg\}.
\end{equation}

Immediately, we have the following relations: the Pareto boundary candidate set achieves a received power region which is a subset of the Cartesian product of the received power regions for Rx 1 and 2, $\Phi_1$ and $\Phi_2$,
\begin{equation}
\Phi(\Omega) \subset \Phi_1 \times \Phi_2.
\end{equation}  The beamforming vectors and power allocations in $\Phi_1 \times \Phi_2$ contribute to the whole achievable rate region whereas the tuples in $\Phi(\Omega)$ only attain the Pareto boundary. This means that if we know $\Phi(\Omega)$, we can achieve the Pareto boundary without searching over all beamforming vectors in the remaining space in $\Phi_1 \times \Phi_2$. This reduces the search space from $\Phi_1 \times \Phi_2$ significantly.

In the following sections, we compute $\Omega^{nn},\Omega^{dn},\Omega^{nd},\Omega^{dd}$ which are candidate sets of the Pareto boundary of the corresponding regions: NN, DN, ND and DD. We define a candidate set of the overall Pareto boundary, $\Omega$, as  the union of the candidate sets mentioned above:
\begin{equation}\label{eqt:candidateset}
\Omega=  \bigcup_{x,y \in \{ n,d \} }\Omega^{xy}.
\end{equation}

\section{Pareto boundary characterization}

\subsection{ Pareto boundary characterization in the ND region}\label{section:pareto}

With decoding structure $\mathbf{R}^{nd}$, Rx 1 treats interference as noise and Rx 2 decodes and subtracts the interference signal from the received signal before decoding the desired signal.
\begin{Thm}\label{thm:pareto_rnd}
The Pareto boundary $\mathcal{B}(\mathcal{R}^{nd})$ is attained by candidate set $\Omega^{nd}$
\begin{equation}\label{eqt:omega_nd}
\Omega^{nd}= \left\{ \mathcal{W}_1, \mathcal{W}_2, P_1=P_{max}, 0 \leq P_2 \leq P_{max} \right\}
\end{equation} where $\mathcal{W}_1, \mathcal{W}_2$ defined in \eqref{eqt:bf_set_w}, are sets of beamforming vectors composed of linear combinations of two channel vectors; to attain the Pareto boundary, Tx 1 transmits with full power $P_{max}$ whereas Tx 2 transmits with less than full power $P_2 \leq P_{max}$.
\begin{figure*}[!t]
 \begin{equation}\label{eqt:bf_set_w}
 \mathcal{W}_i= \left\{ \mathbf{w}_i : \mathbf{w}_i= \sqrt{\lambda_i} \frac{\Pi_{ji}\mathbf{h}_{ii}}{\|\Pi_{ji}\mathbf{h}_{ii} \|} + \sqrt{1-\lambda_i} \frac{\Pi_{ji}^\perp \mathbf{h}_{ii}}{\| \Pi_{ji}^\perp \mathbf{h}_{ii}\|} \; ; \; 0 \leq \lambda_i \leq 1 \right\}, \hspace{0.5cm} i,j=1,2, i\neq j.
\end{equation}
\end{figure*}
\end{Thm}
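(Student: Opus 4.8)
The plan is to recast the three active rates in terms of received powers using the beamforming parametrization $\mathbf{S}_i=P_i\mathbf{w}_i\mathbf{w}_i^H$ guaranteed by Theorem~\ref{thm:transmit_cov}, and then read off the monotonicities. Writing $a_1=|\mathbf{h}_{11}^H\mathbf{w}_1|^2P_1$, $b_1=|\mathbf{h}_{21}^H\mathbf{w}_1|^2P_1$, $c_2=|\mathbf{h}_{22}^H\mathbf{w}_2|^2P_2$ and $d_2=|\mathbf{h}_{12}^H\mathbf{w}_2|^2P_2$, the relevant rates are $C_2=\log_2(1+c_2)$, $D_1=\log_2(1+a_1/(d_2+1))$ and $T_1=\log_2(1+b_1/(c_2+1))$, so that $R_2=C_2$ and $R_1=\min\{D_1,T_1\}$. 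The power of Tx~1 is pinned down first: both $D_1$ and $T_1$ are strictly increasing in $P_1$ (it enters only the numerators $a_1,b_1$), while $R_2=C_2$ is independent of $P_1$. Hence raising $P_1$ to $P_{max}$ can only increase $R_1$ and leaves $R_2$ untouched, so every point of $\mathcal{B}(\mathcal{R}^{nd})$ is attained with $P_1=P_{max}$. In contrast, increasing $P_2$ raises $C_2$ but lowers both $D_1$ (through $d_2$) and $T_1$ (through $c_2$), so $P_2$ governs a genuine $R_1$-versus-$R_2$ tradeoff and must be kept as a free parameter in $[0,P_{max}]$, exactly as in \eqref{eqt:omega_nd}.

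Next I would fix $\mathbf{w}_1$. Since $C_2$ does not depend on $\mathbf{w}_1$, Pareto optimality forces $\mathbf{w}_1$ to maximize $R_1=\min\{D_1,T_1\}$ for the prevailing $(\mathbf{w}_2,P_2)$. As $D_1$ is increasing in $a_1$ and $T_1$ in $b_1$, this minimum can only be increased by increasing the pair $(a_1,b_1)$ jointly; the maximizer must therefore lie on the north-east (Pareto) boundary of the received power region $\Phi_1$, where $a_1$ cannot be raised without lowering $b_1$ and vice versa. That boundary is precisely the curve $\mathcal{W}_1$ of \eqref{eqt:bf_set_w}, so $\mathbf{w}_1\in\mathcal{W}_1$ without loss.

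The beamformer $\mathbf{w}_2$ needs more care, since it affects both users. The key observation is that the interference power $d_2$ appears only in $D_1$, which it can only decrease, and enters neither $C_2$ nor $T_1$. Hence for any feasible tuple I may replace $\mathbf{w}_2$ by a vector achieving the same $c_2$ but the smallest $d_2$ compatible with it, which leaves $C_2$ and $T_1$ unchanged and can only increase $D_1$; no rate decreases, so the rate point does not move below the boundary. The pairs $(c_2,d_2)$ minimizing $d_2$ at fixed $c_2$ are again the upper boundary of $\Phi_2$, i.e. $\mathbf{w}_2\in\mathcal{W}_2$, and the residual one-parameter freedom along this curve together with $P_2$ sweeps out the tradeoff. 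Reconciling the dual role of $c_2$ (it helps $C_2$ but hurts $T_1$) with the purely harmful role of $d_2$ inside the $\min$ is where I expect the delicate bookkeeping to lie; I would argue it by the fixed-$c_2$ / reduce-$d_2$ perturbation above rather than by a direct joint optimization over $(c_2,d_2)$.

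It remains to verify that the north-east boundary of each $\Phi_i$ is exactly $\mathcal{W}_i$. Any component of $\mathbf{w}_i$ orthogonal to $\mathrm{span}\{\mathbf{h}_{ii},\mathbf{h}_{ji}\}$ consumes unit-norm budget but contributes to neither $|\mathbf{h}_{ii}^H\mathbf{w}_i|^2$ nor $|\mathbf{h}_{ji}^H\mathbf{w}_i|^2$, so it may be removed and the freed norm reassigned; the optimal $\mathbf{w}_i$ thus lies in this two-dimensional span. Writing $\mathbf{w}_i=\alpha\,\mathbf{e}_1+\beta\,\mathbf{e}_2$ in the orthonormal basis $\mathbf{e}_1=\Pi_{ji}\mathbf{h}_{ii}/\|\Pi_{ji}\mathbf{h}_{ii}\|$, $\mathbf{e}_2=\Pi_{ji}^\perp\mathbf{h}_{ii}/\|\Pi_{ji}^\perp\mathbf{h}_{ii}\|$ with $|\alpha|^2+|\beta|^2=1$, a direct computation gives $|\mathbf{h}_{ji}^H\mathbf{w}_i|^2=|\alpha|^2\|\mathbf{h}_{ji}\|^2$, so the interference gain fixes $|\alpha|^2$, while $\mathbf{h}_{ii}^H\mathbf{e}_1=\|\Pi_{ji}\mathbf{h}_{ii}\|$ and $\mathbf{h}_{ii}^H\mathbf{e}_2=\|\Pi_{ji}^\perp\mathbf{h}_{ii}\|$ are real and positive. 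Therefore, for each fixed $|\alpha|^2$ the desired gain $|\mathbf{h}_{ii}^H\mathbf{w}_i|^2=|\alpha\,\|\Pi_{ji}\mathbf{h}_{ii}\|+\beta\,\|\Pi_{ji}^\perp\mathbf{h}_{ii}\||^2$ is maximized by taking $\alpha,\beta$ real and nonnegative, i.e. $\alpha=\sqrt{\lambda_i}$, $\beta=\sqrt{1-\lambda_i}$, which is exactly \eqref{eqt:bf_set_w}. Collecting $P_1=P_{max}$, $\mathbf{w}_1\in\mathcal{W}_1$, $\mathbf{w}_2\in\mathcal{W}_2$ and $P_2\in[0,P_{max}]$ then yields $\Omega^{nd}$. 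This last reduction is standard for the received power region and could alternatively be invoked from \cite{Jorswieck2008, Mochaourab2010}.
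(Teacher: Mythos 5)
Your proof is correct, and it reaches the theorem by a genuinely different route than the paper's. The paper (Appendix~\ref{app:rnd_pareto}) casts Pareto optimality as the constrained problem of maximizing $R_j$ subject to $R_i\geq r_i$ and characterizes the solutions via a Lagrangian lemma (Lemma~\ref{lem:pow_gain_gen} with Corollary~\ref{cor:formulation}): KKT stationarity forces the optimizer to be an eigenvector of a weighted sum of two rank-one channel matrices, hence to lie in $\mathrm{span}\{\mathbf{h}_{ii},\mathbf{h}_{ji}\}$, after which a phase-alignment step makes the combining coefficients real and nonnegative. You never form a Lagrangian: you pin $P_1=P_{max}$ by monotonicity of $D_1$ and $T_1$ in $P_1$ (a claim the paper's appendix leaves essentially implicit), you place $\mathbf{w}_1$ on the efficient boundary of $\Phi_1$ because Pareto optimality forces $\mathbf{w}_1$ to maximize $\min\{D_1,T_1\}$, you handle $\mathbf{w}_2$ by the same-$c_2$/smaller-$d_2$ replacement (which leaves $C_2$ and $T_1$ untouched and can only raise $D_1$), and you recover the $\mathcal{W}_i$ parametrization by an elementary basis decomposition plus phase alignment rather than by stationarity. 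Your route buys explicit power claims, a more elementary argument, and sharper localization---it shows $\lambda_1\in[\lambda_1^{\mrt},1]$ and $\lambda_2\in[0,\lambda_2^{\mrt}]$ suffice, consistent with the sum-rate candidate sets of Theorem~\ref{thm:sumrate_rnd}---while the paper's KKT template, though heavier, transfers almost verbatim to the DD region (Theorem~\ref{Thm:ParetoRdd}). Two spots you should tighten: the pairs minimizing $d_2$ at fixed $c_2$ form the lower-right arc of $\Phi_2$ (from the zero-forcing point to the MRT point), not an ``upper boundary''; and your closing computation establishes the maximal-desired-gain-at-fixed-interference frontier, whereas the $\mathbf{w}_2$ step needs the minimal-interference-at-fixed-desired-gain frontier---these coincide wherever $d_2>0$ because $\Phi_2$ is convex and contains the origin (or simply rerun your computation with the roles of the two gains exchanged), and the $d_2=0$ points are covered by the $\lambda_2=0$ beamformer with reduced $P_2$.
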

\vspace{0.5cm}
\begin{proof}
See Appendix \ref{app:rnd_pareto}. 
\end{proof}

In the ND region, Rx 1 treats interference as noise and Rx 2 decodes interference. As described by Thm. \ref{thm:pareto_rnd}, the Pareto optimal transmit power for Tx 1 is to transmit at full power $P_{max}$ and less than full power for Tx 2. The interpretation is that Tx 1's transmit power does not affect the rate performance of Rx 2 as the interference from Tx 1 is decoded and removed. On the other hand, Tx 2 is not advised to transmit at full power because its increase of power will increase the interference power at Rx 1 and hence reduce the achievable rate of Rx 1.

The Pareto optimal transmit beamforming vectors are parameterized in the sets $\mathcal{W}_1, \mathcal{W}_2$ as positive linear combinations of two orthogonal vectors. These two vectors are the desired channel projection onto the span and the null space of the interference channel. As shown in Fig. \ref{fig:pow_gain_region_pareto_rnd}, the vectors in $\mathcal{W}_1$ and $\mathcal{W}_2$ are represented by blue regions. The blue regions cover from the point of zero interference power (Point A in Fig. \ref{fig:pow_gain_region_pareto_rnd}) to the point of maximum desired channel power (Point B) and the point of maximum interference power (Point C). Moving from point A to B and C on the Pareto boundary in Fig. \ref{fig:pow_gain_region_pareto_rnd}, the interference power increases monotonically. On the left figure of Figure \ref{fig:pow_gain_region_pareto_rnd}, we show the received power region of Tx 1, the channel powers between A and B correspond to a strong desired channel power of Tx 1 and a relatively small interference channel power from Tx 1 to Rx 2. These points may attain the Pareto boundary if the desired channel power of Tx 2 is weak. On the other hand, if Tx 2's desire channel power is large, the interference power from Tx 1 to Rx 2 must be increased to increase to interference rate $T_1$ which limits rate rate of $R_1$ as $R_1=\min(T_1, D_1)$. This is a novel concept comparing to the conventional single user decoding interference channel, the increase in interference power here is beneficial as it facilitates interference decoding and removal.

\begin{figure}
\begin{center}
\begin{tikzpicture}[node distance=5cm,auto,>=latex']
\draw (4.5,0) arc (-40:137:3cm) -- (0,0);

\draw[->,thick] (0,0) -- (6,0);
\node at (6,-0.5) {$|\mathbf{h}_{11}^H \mathbf{w}_1|^2 P_1$};

\draw[->,thick] (0,0) -- (0,5);
\node at (-0.5,5.2) {$|\mathbf{h}_{21}^H \mathbf{w}_1|^2 P_1$};

\draw[ultra thick, draw=black!20] (4.5,0) arc (-40:90:3cm)  node[circle, draw=black, fill=black, inner sep=0.25ex]  {}  node [above] {$\mathbf{w}_1=\frac{\mathbf{h}_{21}}{\|\mathbf{h}_{21}\|}$} node[below] {C};
\draw[ draw=black!20] (4.5,0) arc (-40:0:3cm) node[circle, draw=black, fill=black, inner sep=0.25ex]  {}  node [right]{$\mathbf{w}_1=\frac{\mathbf{h}_{11}}{\|\mathbf{h}_{11}\|}$}
node [left] {B};
\draw[ draw=black!20] (4.5,0) node[circle, draw=black, fill=black, inner sep=0.25ex] {} node [left=2pt, above=1pt] {A};

\draw ++(8,0);
\draw (12.5,0) arc (-40:137:3cm) -- (8,0);

\draw[ultra thick, draw=black!20,fill=black!20] (8,0) -- (12.5,0) arc (-40:90:3cm)  node[circle, draw=black, fill=black, inner sep=0.25ex]  {}  node [above] {$\mathbf{w}_2=\frac{\mathbf{h}_{12}}{\|\mathbf{h}_{12}\|}$} node[below] {C};
\draw[ draw=black!20] (12.5,0) arc (-40:0:3cm) node[circle, draw=black, fill=black, inner sep=0.25ex]  {}  node [right]{$\mathbf{w}_2=\frac{\mathbf{h}_{22}}{\|\mathbf{h}_{22}\|}$}
node [left] {B};
\draw[ draw=black!20] (12.5,0) node[circle, draw=black, fill=black, inner sep=0.25ex] {} node [left=2pt, above=1pt] {A};

\draw[->,thick] (8,0) -- (8,5);
\node at (7.5,5.2) {$|\mathbf{h}_{12}^H \mathbf{w}_2|^2 P_2$};

\draw[->,thick,draw=black] (8,0) -- (14,0);
\node at (14,-0.5) {$|\mathbf{h}_{22}^H \mathbf{w}_2|^2 P_2$};

\node at (7,-1.5) {$\Omega^{nd}=\left\{P_1=P_{max}, \mathbf{w}_1 \in \mathcal{W}_1, \mathbf{w}_2 \in \mathcal{W}_2, 0 \leq P_2 \leq P_{max} \right\}$};

\end{tikzpicture}
\caption{The graphical illustration of the candidate set $\Omega^{nd}$ as the shaded area which is a subset of the received power regions $\Phi_i$.\label{fig:pow_gain_region_pareto_rnd}}

\end{center}
\end{figure}
In the next section, we investigate the Pareto boundary attaining beamforming vectors in the DD region.

\subsection{The Pareto boundary characterization in the DD region}\label{section:pareto_rdd}
With decoding structure $\mathbf{R}^{dd}$, both Rx's decode interference. The Pareto boundary attaining solutions are:

\begin{Thm}\label{Thm:ParetoRdd}
The Pareto boundary $\mathcal{B}(\mathcal{R}^{dd})$ is attained by candidate set 
\begin{equation}
\Omega^{dd}=\left\{  \mathbf{w}_1\in \mathcal{V}_1, \mathbf{w}_2 \in \mathcal{V}_2, 0 \leq P_1, P_2 \leq P_{max}\right\},
\end{equation} with Pareto boundary attaining beamforming vectors composed of two orthogonal channel vectors, specifically for $i=1,2$,
\begin{equation}
\mathcal{V}_i= \left\{ \mathbf{w} \in \mathcal{S}: \mathbf{w}=\sqrt{\lambda_i} \frac{\Pi_{ii}\mathbf{h}_{ji}}{\|\Pi_{ii}\mathbf{h}_{ji} \|} + \sqrt{1- \lambda_i} \frac{\Pi_{ii}^\perp \mathbf{h}_{ji}}{\|\Pi_{ii}^\perp \mathbf{h}_{ji} \|}, 0 \leq \lambda_i \leq 1\right\},
\end{equation} and both Txs transmit at less than full power.
\end{Thm}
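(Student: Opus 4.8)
The plan is to reduce the vector problem over $(\mathbf{w}_1,\mathbf{w}_2,P_1,P_2)$ to a scalar problem in the four \emph{received powers} and then read off the beamforming and power structure. By Theorem~\ref{thm:transmit_cov} the boundary of $\mathcal{R}^{dd}$ is attained by rank-one covariances, so I may write $\mathbf{S}_i=P_i\mathbf{w}_i\mathbf{w}_i^H$ with $\|\mathbf{w}_i\|=1$. Inspecting $C_i,T_i$ in \eqref{eqt:ch4_rate_def}, the beamformer $\mathbf{w}_1$ enters $R_1,R_2$ only through the two real quantities $a_1=|\mathbf{h}_{11}^H\mathbf{w}_1|^2$ and $b_1=|\mathbf{h}_{21}^H\mathbf{w}_1|^2$, and symmetrically $\mathbf{w}_2$ only through $a_2=|\mathbf{h}_{22}^H\mathbf{w}_2|^2$, $b_2=|\mathbf{h}_{12}^H\mathbf{w}_2|^2$. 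Using \eqref{eqt:rdd}, $R_1=\min\{\log_2(1+a_1P_1),\,\log_2(1+b_1P_1/(a_2P_2+1))\}$ and analogously for $R_2$. From these I would record the monotonicities: $R_1$ is nondecreasing in $a_1$ and in $b_1$, while $R_2$ is nonincreasing in $a_1P_1$ (it enters only the denominator of $T_2$) and is \emph{independent} of $b_1$. This neutrality of the cross gain $b_1$ toward the other user is the structural fact driving the whole result.

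Two reductions then pin down the beamformers. First, any component of $\mathbf{w}_1$ orthogonal to $\mathrm{span}\{\mathbf{h}_{11},\mathbf{h}_{21}\}$ contributes nothing to $a_1,b_1$; replacing $\mathbf{w}_1$ by the normalized projection onto this span and rescaling $P_1$ leaves $(a_1P_1,b_1P_1)$ — hence all four rates — unchanged, so without loss of generality $\mathbf{w}_i\in\mathrm{span}\{\mathbf{h}_{ii},\mathbf{h}_{ji}\}$. Second, fix $a_1$: since $R_1$ is nondecreasing in $b_1$ and $R_2$ does not depend on $b_1$, replacing $\mathbf{w}_1$ by the unit vector in the span that maximizes $b_1$ at that value of $a_1$ weakly increases $R_1$ and leaves $R_2$ fixed, so no Pareto point is lost. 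Writing $\mathbf{w}_1$ in the orthonormal basis $\{\Pi_{11}\mathbf{h}_{21}/\|\Pi_{11}\mathbf{h}_{21}\|,\ \Pi_{11}^\perp\mathbf{h}_{21}/\|\Pi_{11}^\perp\mathbf{h}_{21}\|\}$ of that span, I would compute $a_1=\lambda_1\|\mathbf{h}_{11}\|^2$ and $\mathbf{h}_{21}^H\mathbf{w}_1=\sqrt{\lambda_1}\,\|\Pi_{11}\mathbf{h}_{21}\|+\sqrt{1-\lambda_1}\,\|\Pi_{11}^\perp\mathbf{h}_{21}\|$. The crucial point is that with \emph{real, nonnegative} coefficients the two contributions to $\mathbf{h}_{21}^H\mathbf{w}_1$ are co-phased, which is exactly what maximizes $b_1=|\mathbf{h}_{21}^H\mathbf{w}_1|^2$ for the given $a_1$; any other relative phase strictly reduces the modulus. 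This identifies the maximizing set as precisely $\mathcal{V}_1$, and symmetrically $\mathcal{V}_2$.

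For the power allocation I would use an exchange argument showing the boundary is attained with the decoding constraint active, $C_i=T_i$. If $C_1>T_1$ then $R_1=T_1$ has slack in $C_1$; moving along $\mathcal{V}_1$ toward the interference direction (decreasing $\lambda_1$) while lowering $P_1$ so that $b_1P_1$, and hence $T_1$, stays fixed strictly decreases $a_1P_1$, which weakly increases $R_2$ with $R_1$ unchanged, so such a point is dominated. If $C_1<T_1$ the symmetric adjustment reaches $C_1=T_1$ without changing any rate. Imposing $C_1=T_1$ fixes the ratio $b_1/a_1=a_2P_2+1$, i.e. it fixes $\lambda_1$ as a function of the partner's received power while leaving $P_1$ free; the same holds for user $2$. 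Thus each $P_i$ survives as an independent parameter ranging over $[0,P_{max}]$, which establishes $\Omega^{dd}$. I would close by contrasting with $\Omega^{nd}$: because in $\mathcal{R}^{dd}$ the desired gain of \emph{each} transmitter sits in the denominator of the \emph{other} user's decoding rate, raising $P_i$ always penalizes the partner, so — unlike Tx$1$ in $\Omega^{nd}$, which is driven to $P_{max}$ — neither transmitter is forced to full power.

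The main obstacle is making the ``maximize $b_i$ at fixed $a_i$'' exchange and the associated power step fully rigorous simultaneously: one must verify that moving a beamformer onto $\mathcal{V}_i$ and rescaling power can always be performed while holding the quantity that affects the other user, $a_iP_i$, fixed, and that the co-phasing genuinely attains the pointwise maximum of $b_i$ over the one-parameter family at each $a_i$. The delicate bookkeeping is tracking which gain helps or hurts which rate through which branch of the two $\min$ operators; I would therefore carry both branches of $R_1$ and $R_2$ explicitly throughout to avoid sign errors.
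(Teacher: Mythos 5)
Your derivation of the beamforming sets $\mathcal{V}_i$ is correct, but it follows a genuinely different route from the paper's. The paper (Appendix \ref{app:ParetoRdd}) fixes a rate target $r_2$, writes Pareto optimality as maximizing $R_1$ subject to $R_2\geq r_2$, isolates the subproblem \eqref{eqt:rdd_pareto_w1} in $\mathbf{w}_1$, and obtains the two-vector structure from the vanishing of a Lagrangian derivative, with \emph{complex} combining weights; a separate phase argument recycled from Appendix \ref{app:rnd_pareto} then makes the weights real and nonnegative. You bypass the Lagrangian entirely: since $\mathbf{w}_1$ enters the DD rates only through $a_1=|\mathbf{h}_{11}^H\mathbf{w}_1|^2$ and $b_1=|\mathbf{h}_{21}^H\mathbf{w}_1|^2$, and since $R_1$ is nondecreasing in $b_1$ while $R_2$ is independent of it, you maximize $b_1$ pointwise at fixed $a_1$, and the co-phasing computation in $\mathrm{span}\{\mathbf{h}_{11},\mathbf{h}_{21}\}$ identifies the maximizers exactly as $\mathcal{V}_1$. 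This is closer in spirit to the paper's Lemma \ref{lem:rank_one} and the received-power-region technique of \cite{Mochaourab2010} than to the paper's actual proof of this theorem; it is more elementary, it builds the realness of the coefficients into the exchange rather than appending it, and it avoids the tacit constraint-qualification assumptions needed to treat KKT stationarity as necessary. As a candidate-set argument (project onto the span, co-phase, observe weak domination, conclude every Pareto point survives), the beamforming half of your proof is complete.

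The power half, however, has genuine gaps. First, your exchange for the case $C_1>T_1$ rescales $P_1$ to hold $b_1P_1$ fixed while decreasing $\lambda_1$; since $b_1(\lambda_1)=\bigl(\sqrt{\lambda_1}\|\Pi_{11}\mathbf{h}_{21}\|+\sqrt{1-\lambda_1}\|\Pi_{11}^\perp\mathbf{h}_{21}\|\bigr)^2$ is concave with maximum at $\lambda_1=\|\Pi_{11}\mathbf{h}_{21}\|^2/\|\mathbf{h}_{21}\|^2$, decreasing $\lambda_1$ below that point decreases $b_1$ and forces $P_1$ to \emph{increase}, which is infeasible when $P_1$ is already at $P_{max}$. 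Second, the claim that the boundary is attained with $C_i=T_i$ active is false in general: the ratio $b_1/a_1$ is decreasing in $\lambda_1$ and minimized at $\lambda_1=1$, so whenever $\|\Pi_{11}\mathbf{h}_{21}\|^2>\|\mathbf{h}_{11}\|^2(1+a_2P_2)$ one has $T_1>C_1$ for every $\lambda_1$ and every $P_1>0$; no feasible adjustment equalizes the branches, yet such points (e.g.\ the point maximizing user 1's DD rate) do lie on the boundary. Consequently your concluding inference that activity of $C_i=T_i$ pins $\lambda_i$ and leaves each $P_i$ free is unsupported. These gaps do not threaten the theorem as a candidate-set statement, because $\Omega^{dd}$ places no restriction on $(P_1,P_2)$ — the parameterization $\mathcal{V}_i$ is the only substantive claim, and the paper's own appendix likewise proves nothing about the powers. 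But be aware that the ``less than full power'' remark cannot be rescued as a necessity claim inside your framework: since $b_i(\lambda_i)/\lambda_i$ is decreasing in $\lambda_i$, replacing $(\lambda_i,P_i)$ by $(\lambda_iP_i/P_{max},\,P_{max})$ preserves $a_iP_i$ and weakly increases $b_iP_i$, so full power with a retuned $\lambda_i$ weakly dominates; your own monotonicity argument therefore shows that $P_1=P_2=P_{max}$ together with $\mathcal{V}_1\times\mathcal{V}_2$ already attains $\mathcal{B}(\mathcal{R}^{dd})$.
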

\begin{proof}
See Appendix \ref{app:ParetoRdd}.
\end{proof}
\begin{Remark}
Note that $\mathcal{W}_i$ in Thm. \ref{thm:pareto_rnd} and $\mathcal{V}_i$ defined here are different candidate sets. In particular, $\mathcal{W}_i$ is a set of vectors that are  the positive linear combinations of $\Pi_{ji}\mathbf{h}_{ii}$ and $\Pi_{ji}^\perp \mathbf{h}_{ii}$ whereas $\mathcal{V}_i$ is a set of vectors that are the positive linear combinations of $\Pi_{ii} \mathbf{h}_{ji}$ and $\Pi_{ii}^\perp \mathbf{h}_{ji}$. This difference is shown graphically in Fig. \ref{fig:pow_gain_region_pareto_rnd} and \ref{fig:pow_gain_region_pareto_rdd}.
\end{Remark}

In the DD region, both Tx 1 and 2 decode interference and their choice of actions are symmetric. As described by Thm. \ref{Thm:ParetoRdd}, the Pareto optimal transmit power for Tx 1 and 2 are to transmit at  less than full power. The Pareto optimal transmit beamforming vectors are parameterized in the sets $\mathcal{V}_1, \mathcal{V}_2$ as positive linear combinations of two orthogonal vectors. These two vectors are the interference channel projection onto the span and the null space of the desired channel. As shown in Fig. \ref{fig:pow_gain_region_pareto_rdd}, the vectors in $\mathcal{V}_1$ and $\mathcal{V}_2$ are represented by blue regions. The blue regions cover from the point where the point of maximum desired channel power (Point B) to the point of maximum interference power (Point C) and the point of zero desired channel power (Point D). 

Notice that the points where the interference channel powers are zero is not Pareto optimal. It is because minimizing the interference power in the DD region makes decoding interference more difficult. This choice of action is not Pareto optimal. From the received power region representation in Fig. \ref{fig:pow_gain_region_pareto_rdd}, we can see that the interference channel power should be maximized despite the values of the direct channel gain. It means that for each achievable desired channel power value, the interference channel power should be increased for easy interference decoding and removal.

\begin{figure}
\begin{center}
\begin{tikzpicture}[node distance=5cm,auto,>=latex']
\draw (0,4.5) arc (137:-55:3cm) -- (0,0);

\draw[->,thick] (0,0) -- (6,0);
\node at (6,-0.5) {$|\mathbf{h}_{11}^H \mathbf{w}_1|^2 P_1$};


\draw[ draw=black!20,fill=black!20] (0,0) -- (0,4.5) arc (137:0:3cm) node[circle, draw=black, fill=black, inner sep=0.25ex]  {}  node [right]{$\mathbf{w}_1=\frac{\mathbf{h}_{11}}{\|\mathbf{h}_{11}\|}$}
node [left] {B};
\draw[ultra thick, draw=black!20] (0,4.5) arc (137:90:3cm)  node[circle, draw=black, fill=black, inner sep=0.25ex]  {}  node [above] {$\mathbf{w}_1=\frac{\mathbf{h}_{21}}{\|\mathbf{h}_{21}\|}$} node[below] {C};
\draw[ draw=black!20] (0,4.5) node[circle, draw=black, fill=black, inner sep=0.25ex] {} node [left] {D};

\draw[->,thick] (0,0) -- (0,5);
\node at (-0.5,5.2) {$|\mathbf{h}_{21}^H \mathbf{w}_1|^2 P_1$};

\draw (8,4.5) arc (137:-55:3cm) -- (8,0);

\draw[ draw=black!20,fill=black!20] (8,0) -- (8,4.5) arc (137:0:3cm) node[circle, draw=black, fill=black, inner sep=0.25ex]  {}  node [right]{$\mathbf{w}_1=\frac{\mathbf{h}_{11}}{\|\mathbf{h}_{11}\|}$}
node [left] {B};
\draw[ultra thick, draw=black!20] (8,4.5) arc (137:90:3cm)  node[circle, draw=black, fill=black, inner sep=0.25ex]  {}  node [above] {$\mathbf{w}_1=\frac{\mathbf{h}_{21}}{\|\mathbf{h}_{21}\|}$} node[below] {C};
\draw[ draw=black!20] (8,4.5) node[circle, draw=black, fill=black, inner sep=0.25ex] {} node [left] {D};
\draw[->,thick] (8,0) -- (8,5);
\node at (7.5,5.2) {$|\mathbf{h}_{12}^H \mathbf{w}_2|^2 P_2$};

\draw[->,thick,draw=black] (8,0) -- (14,0);
\node at (14,-0.5) {$|\mathbf{h}_{22}^H \mathbf{w}_2|^2 P_2$};

\node at (7,-1.5) {$\Omega^{dd}=\left\{0 \leq P_1,P_2 \leq P_{max}, \mathbf{w}_1 \in \mathcal{V}_1,\mathbf{w}_2 \in \mathcal{V}_2 \right\}$};

\end{tikzpicture}
\caption{The graphical illustration of candidate set $\Omega^{dd}$ as the shaded area which is a subset of the received power regions $\Phi_i$.\label{fig:pow_gain_region_pareto_rdd}}

\end{center}
\end{figure}

\subsection{The Pareto boundary characterization}
From Thm. \ref{thm:pareto_rnd} and \ref{Thm:ParetoRdd}, we have presented the Pareto boundary characterization in ND and DD region. We can easily obtain the candidate set  $\Omega^{dn}$ by reversing the role of Tx 1 and 2 from $\Omega^{nd}$ in Thm. \ref{thm:pareto_rnd}. Also, the candidate set $\Omega^{nn}$ is shown to be the following \cite{Jorswieck2008}:
\begin{equation}
\Omega^{nn}= \left\{ \mathcal{W}_1, \mathcal{W}_2,   P_1 = P_2 = P_{max} \right\}
\end{equation} where $\mathcal{W}_1, \mathcal{W}_2$ are defined in Thm. \ref{thm:pareto_rnd}.

By definition in \eqref{eqt:candidateset} , the candidate set of the Pareto boundary $\mathcal{B}(\mathcal{R})$ is the union of the candidate sets in each decoding region. Hence, we have characterized the Pareto boundary $\mathcal{B}(\mathcal{R})$. In the candidate sets of the Pareto boundary, the beamforming vectors are parameterized with positive real scalars $0 \leq \lambda_1, \lambda_2 \leq 1$. By varying $\lambda_1, \lambda_2$ from zero to one and $P_2$ from zero to $P_{max}$, we obtain all beamforming vectors that may attain the the boundary in each decoding region and in turn the overall Pareto boundary. Intuitively, it means that the boundary attaining beamforming vectors in each decoding region exist only in a two-dimensional subspace, spanned by the direct channel and the interference channel, in a $N$-dimensional signal space.

As a direct application of the Pareto boundary characterization, we characterize the maximum sum rate point in the following section. Since the maximum sum rate point is always on the Pareto boundary, the candidate set of the maximum sum rate point is therefore a subset of the candidate set derived above. We reduce the size of the candidate set by eliminating beamforming vectors in the candidate set that achieve a smaller sum rate than other vectors in the set. 

\section{The maximum sum rate point characterization}\label{section:sumrate}
In this section, we characterize the candidate sets of the maximum sum rate point by first illustrating that Txs. should always transmit with full power, in Section \ref{section:sumrate_fullpow}. Then, we study the candidates sets of the boundaries of ND and DD regions by eliminating vectors that attain a smaller sum rate than other vectors in the candidate sets and obtain the candidate sets of maximum sum rate point in $\mathcal{B}(\mathcal{R}^{nd})$ and $\mathcal{B}(\mathcal{R}^{dd})$ respectively, in Section \ref{section:sumrate_rnd} and \ref{section:sumrate_rdd}.

\subsection{Full power transmission}\label{section:sumrate_fullpow}
We observe that the maximum sum rate point is attained by maximum transmit power at each transmitter. To see this, we combine the power constraints and beamformer norm constraints:
\begin{equation}
\|\mathbf{w}_i \|^2 \leq P_i.
\end{equation}
Assume that the sum rate optimal beamformer is not transmitting at maximum power: $\| \mathbf{w}_i\|^2=p < P_i$. We can choose a beamformer $\mathbf{w}_i'= \mathbf{w}_i + \epsilon e^{j \phi} \Pi_{ji}^\perp \mathbf{h}_{ii}$ where $\epsilon$ is  chosen such that $\|\mathbf{w}_i' \|^2=P_i$ and $\phi= \arg(\mathbf{h}_{ii}^H \mathbf{w}_i)$. Notice that $|\mathbf{h}_{ii}^H \mathbf{w}_i'|^2 \geq |\mathbf{h}_{ii}^H \mathbf{w}_i|^2$ and $|\mathbf{h}_{ji}^H \mathbf{w}_i'|^2=|\mathbf{h}_{ji}^H \mathbf{w}_i|^2$. Or, we can choose $\mathbf{w}_i''=\mathbf{w}_i + \epsilon' e^{j \phi'} \Pi_{ii}^\perp \mathbf{h}_{ji}$ with $\phi'=\arg(\mathbf{h}_{ji}^H \mathbf{w}_i)$ to increase $|\mathbf{h}_{ji}^H \mathbf{w}_i|^2$
and keep $|\mathbf{h}_{ii}^H \mathbf{w}_i|^2$ constant. Thus, it contradicts that $\mathbf{w}_i$ is on the Pareto boundary. From now on, we set $P_i=P_{max}, i=1,2.$ Note that the argument above is limited to non-parallel channels, for parallel channels (e.g. $\mathbf{h}_{ji}^H \mathbf{h}_{ii}=0$), it reduces to SISO-IC where the maximum sum rate point is attained by one Tx transmitting with full power whereas the other Txs. transmit at less than full power \cite{Bandemer2008}.

In the following sections, we characterize the candidate sets that attain the maximum sum rate point. Note that, the candidate sets attaining the maximum sum rate point is a strict subset of those attaining the Pareto boundary. The sum rate metric does not distinguish between Tx 1 and 2's rate and therefore we can identify a much smaller candidate set as illustrated in the following sections. This is particularly useful for system optimization which does not put emphasis on user fairness.

Note that the computation of the global optimal solution 
\begin{equation}
\boldsymbol{\omega}^*=\argmax_{\mathbf{w}_1,\mathbf{w}_2 \in \mathcal{S}} \bar{R}^{nd}(\mathbf{w}_1, \mathbf{w}_2)
\end{equation} is in general NP-hard \cite{Liu2010}, even though there exist channels for which the solution is easily obtained (e.g. orthogonal channels). Here, we would like to reduce the search space and characterize the solutions set. 

\subsection{The maximum sum rate point characterization in the ND region}\label{section:sumrate_rnd}

\begin{Thm}\label{thm:sumrate_rnd}
The candidate set of maximum sum rate $\bar{R}^{nd}$ denoted as $\tilde{\Omega}^{nd}$, hence $\boldsymbol{\omega}^* \subset \tilde{\Omega}^{nd}\subset \Omega^{nd}$, is given by
\begin{equation}
\tilde{\Omega}^{nd}=\left\{\tilde{\mathcal{W}}_1, \tilde{\mathcal{W}}_2, P_{max}, P_{max} \right\}
\end{equation} where $\Omega^{nd}$ is the candidate set of  Pareto boundary $\mathcal{B}(\mathcal{R}^{nd})$ in \eqref{eqt:omega_nd}. In particular, $\tilde{\mathcal{W}}_1$  is the following set with cardinality three:
\begin{equation}
\tilde{\mathcal{W}}_1= \left\{ \frac{\mathbf{h}_{11}}{||\mathbf{h}_{11}||}, \frac{\mathbf{h}_{21}}{||\mathbf{h}_{21} ||}, \mathbf{w}_1(\lambda_1^{(b)}) \right\}
\end{equation} with $\lambda_1^{(b)}=\frac{c_1 ||\Pi_{21}^\perp \mathbf{h}_{11} ||^2}{c_2||\mathbf{h}_{21}||^2- 2 \sqrt{c_1c_2} |\mathbf{h}_{21}^H\mathbf{h}_{11} | + c_1 ||\mathbf{h}_{11} ||^2}$. The candidate set $\tilde{\mathcal{W}}_2$ is a set of beamforming vectors characterized by a parameter $\lambda_2$ in a smaller range than the range in $\mathcal{W}_2$:
\begin{equation}
\tilde{\mathcal{W}}_2= \left\{ \mathbf{w}_2 \in \mathcal{S}: \mathbf{w}_2= \sqrt{\lambda_2} \frac{\Pi_{12}\mathbf{h}_{22}}{\|\Pi_{12}\mathbf{h}_{22} \|} + \sqrt{1-\lambda_2} \frac{\Pi_{12}^\perp \mathbf{h}_{22}}{\| \Pi_{12}^\perp \mathbf{h}_{22} \|} \; ; \; \lambda_2^{(b)} \leq \lambda_2 \leq \lambda_2^{\mrt} \right\}
\end{equation} 
where $\lambda_2^{\mrt}=\frac{|\mathbf{h}_{12}^H \mathbf{h}_{22}|}{||\mathbf{h}_{12} || ||\mathbf{h}_{22} ||}$ is a parameter that gives the beamforming solution towards channel $\mathbf{h}_{22}$ and 
$\mathbf{w}_2(\lambda_2^{(b)})=\frac{\tilde{b}}{\sqrt{\tilde{a}+\tilde{b}}}\mathbf{v}_a + \frac{e^{j \phi}\tilde{a}}{\sqrt{\tilde{a}+\tilde{b}}} \mathbf{v}_b$ for some eigenvectors $\mathbf{v}_a, \mathbf{v}_b$ and positive scalars $\tilde{a}, \tilde{b}$. The vectors $\mathbf{v}_{a}, \mathbf{v}_b$ are the most and least dominant eigenvectors of the matrix $\mathbf{S}=\mathbf{h}_{22}\mathbf{h}_{22}^H -\frac{g_{21}}{g_{11}} \mathbf{h}_{12}\mathbf{h}_{12}^H$.
\end{Thm}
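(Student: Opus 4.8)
The plan is to combine Theorem \ref{thm:pareto_rnd} with the full-power result of Section \ref{section:sumrate_fullpow} to collapse the search to the two scalar parameters $\lambda_1,\lambda_2\in[0,1]$ of $\mathcal{W}_1,\mathcal{W}_2$, and then to peel away most of the $(\lambda_1,\lambda_2)$-square by monotonicity. Writing $g_{11}=|\mathbf{h}_{11}^H\mathbf{w}_1|^2$, $g_{21}=|\mathbf{h}_{21}^H\mathbf{w}_1|^2$, $g_{22}=|\mathbf{h}_{22}^H\mathbf{w}_2|^2$, $g_{12}=|\mathbf{h}_{12}^H\mathbf{w}_2|^2$ and setting $P_1=P_2=P_{max}$, the coherent-combining structure of $\mathcal{W}_i$ gives $g_{21}=\lambda_1\|\mathbf{h}_{21}\|^2$ and $g_{12}=\lambda_2\|\mathbf{h}_{12}\|^2$ (linear, increasing), while $g_{11}(\lambda_1)$ and $g_{22}(\lambda_2)$ are unimodal with peaks at the MRT values $\lambda_1^{\mrt},\lambda_2^{\mrt}$ where $\mathbf{w}_i$ aligns with $\mathbf{h}_{ii}$. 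The sum rate is $\bar R^{nd}=\min\{D_1,T_1\}+C_2$, where $C_2$ depends only on $g_{22}$, $D_1$ only on $(g_{11},g_{12})$, and $T_1$ only on $(g_{21},g_{22})$.

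First I would optimise over $\mathbf{w}_1$ with $\mathbf{w}_2$ fixed. Then $C_2$ is constant and the task is to maximise $\min\{D_1,T_1\}$ over $\lambda_1$. Since $D_1$ is unimodal (peak at $\lambda_1^{\mrt}$) and $T_1$ is strictly increasing in $\lambda_1$ through $g_{21}$, the max-min is attained at exactly one of three places: at $\lambda_1^{\mrt}$, i.e. $\mathbf{w}_1=\mathbf{h}_{11}/\|\mathbf{h}_{11}\|$, when $D_1\le T_1$ at the peak so that $R_1=D_1$ is globally maximised; at $\lambda_1=1$, i.e. $\mathbf{w}_1=\mathbf{h}_{21}/\|\mathbf{h}_{21}\|$, when $D_1>T_1$ for all $\lambda_1$ so $R_1=T_1$ is maximised at the endpoint; or at the unique crossing $\lambda_1^{(b)}$ where $D_1=T_1$. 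The endpoint $\lambda_1=0$ is discarded since there $g_{21}=0$ and $T_1=0$. Solving the SINR-balance $c_1 g_{11}=c_2 g_{21}$ (with $c_1,c_2$ the reciprocal interference-plus-noise terms) and substituting $x=\sqrt{\lambda_1}$ to linearise $\sqrt{g_{11}}=\sqrt{\lambda_1}\,\|\Pi_{21}\mathbf{h}_{11}\|+\sqrt{1-\lambda_1}\,\|\Pi_{21}^\perp\mathbf{h}_{11}\|$ yields the stated closed form for $\lambda_1^{(b)}$, establishing $\tilde{\mathcal{W}}_1$.

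Next I would bound $\lambda_2$. The upper bound $\lambda_2\le\lambda_2^{\mrt}$ holds because increasing $\lambda_2$ past $\lambda_2^{\mrt}$ both lowers $g_{22}$ (hence $C_2$) and raises $g_{12}$ (hence lowers $D_1$): when $R_1=D_1$ both $C_2$ and $D_1$ improve on moving back to $\lambda_2^{\mrt}$, and when $R_1=T_1$ the crucial simplification $R_1+C_2=\log_2(1+(g_{21}+g_{22})P_{max})$ shows the objective is increasing in $g_{22}$, again favouring $\lambda_2^{\mrt}$; applying this with the $\mathbf{w}_1$ that was optimal at the larger $\lambda_2$ shows $\lambda_2^{\mrt}$ can never be beaten. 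The lower bound $\lambda_2\ge\lambda_2^{(b)}$ uses the same identity: on the part of $[0,\lambda_2^{\mrt}]$ where $T_1\le D_1$ (so $R_1=T_1$), the sum rate equals $\log_2(1+(g_{21}+g_{22})P_{max})$, which strictly increases with $g_{22}$ and hence with $\lambda_2$, so no interior point of this $T_1$-binding region can be optimal and the optimum must satisfy $\lambda_2\ge\lambda_2^{(b)}$, the crossing $D_1=T_1$. To make $\lambda_2^{(b)}$ explicit, I would rewrite $D_1=T_1$ as the quadratic-form identity $\mathbf{w}_2^H\mathbf{S}\,\mathbf{w}_2=\rho$ with $\rho=(g_{21}-g_{11})/(g_{11}P_{max})$ and $\mathbf{S}=\mathbf{h}_{22}\mathbf{h}_{22}^H-\tfrac{g_{21}}{g_{11}}\mathbf{h}_{12}\mathbf{h}_{12}^H$, diagonalise $\mathbf{S}$ as $\sigma_a\mathbf{v}_a\mathbf{v}_a^H+\sigma_b\mathbf{v}_b\mathbf{v}_b^H$ with $\sigma_a>0>\sigma_b$, expand $\mathbf{w}_2$ in $\{\mathbf{v}_a,\mathbf{v}_b\}$, and solve the scalar quadratic; the weights come out as $\tilde a=\sigma_a-\rho$ and $\tilde b=\rho-\sigma_b$, with the phase $\phi$ fixed by membership in $\mathcal{W}_2$, giving the claimed $\mathbf{w}_2(\lambda_2^{(b)})$.

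The main obstacle I anticipate is the $\lambda_2$ analysis rather than the $\lambda_1$ one. Unlike $R_1=\min\{D_1,T_1\}$, which collapses to three points, the objective in $\lambda_2$ is a genuine two-term tradeoff (gain in $C_2$ against loss in $R_1=D_1$) whose maximiser generically sits strictly inside $[\lambda_2^{(b)},\lambda_2^{\mrt}]$, so the result is an interval rather than a finite set. The care lies in checking that the $T_1$-binding simplification rules out everything below $\lambda_2^{(b)}$ and that $g_{22}$-monotonicity rules out everything above $\lambda_2^{\mrt}$ uniformly over all three $\mathbf{w}_1$-cases, and in translating the crossing condition into the eigenvector form without sign or feasibility ($\sigma_b<\rho<\sigma_a$) slips.
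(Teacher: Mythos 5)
Your proposal follows essentially the same route as the paper's own proof in Appendix \ref{app:sumrate_rnd}: the same reduction to $\Omega^{nd}$ with full power, the same concave-versus-linear max--min trichotomy (the paper's Lemma \ref{lem:rnd_solset1}) yielding the three-element set $\tilde{\mathcal{W}}_1$ and the balance point $\lambda_1^{(b)}$, the same split of the $\mathbf{w}_2$ problem into $T_1$-binding and $D_1$-binding regions exploiting $2^{C_2+T_1}=1+(g_{21}+g_{22})P_{max}$ to pin the optimum into $[\lambda_2^{(b)},\lambda_2^{\mrt}]$, and the same diagonalization of the shifted matrix $\tilde{\mathbf{S}}=\mathbf{S}-\rho\mathbf{I}$ with phase alignment to obtain $\mathbf{w}_2(\lambda_2^{(b)})$. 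The differences are cosmetic --- you solve the balance equation by the substitution $x=\sqrt{\lambda_1}$ where the paper uses trigonometric identities, your weights $\tilde{a}=\sigma_a-\rho$, $\tilde{b}=\rho-\sigma_b$ actually repair an apparent sign slip in the paper's $\tilde{b}$, and the single-crossing caveat you flag at the end is a looseness the paper's proof shares rather than resolves.
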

\begin{proof}
See Appendix \ref{app:sumrate_rnd}.
\end{proof}

\begin{Remark}
 Note that for some channel realizations and chosen $\mathbf{w}_2$, $\mathbf{w}_1(\lambda_1^{(b)})$ may be equal to the maximum ratio transmission solutions $\frac{\mathbf{h}_{11}}{\| \mathbf{h}_{11}\|}$ or $\frac{\mathbf{h}_{21}}{\|\mathbf{h}_{21} \|}$. But we distinguish between them in the candidate sets to illustrate that for most channel realizations and $\mathbf{w}_2$, $\mathbf{w}_1(\lambda_1^{(b)}) \notin \left\{ \frac{\mathbf{h}_{11}}{\| \mathbf{h}_{11}\|} , \frac{\mathbf{h}_{21}}{\| \mathbf{h}_{21}\|} \right\}$.
 
 It is interesting to see that the sum rate optimal beamforming vector of Tx 1 is either the beamforming vector towards the desired channel or the beamforming vector towards the interference channel or a beamforming vector that balances the interference decoding rate $T_i$ and the treating interference as noise rate $D_i$ in a weighted manner with weights $c_1, c_2$ which depend on the choice of the beamforming vector at Tx 2.
\end{Remark}

Comparing the candidate set of the Pareto boundary to the candidate set of the maximum sum rate point of the ND region, namely $\tilde{\Omega}^{nd}$ and $\Omega^{nd}$,  we observe the following:
\begin{itemize}
\item For each Tx $i$, the candidate set $\tilde{\Omega}^{nd}$ consists of only \emph{three} closed-form beamforming vectors whereas $\Omega^{nd}$ consists of a set of beamforming vectors characterized by a real-valued parameter spanned between zero and one, as shown in Fig. \ref{fig:sumrate_pow_gain_rnd}. 
\item An interesting question rises: what are the conditions of each of these potential sum rate optimal solutions being sum rate optimal? We give the discussion in Section \ref{section:mrtopt}.
\end{itemize}

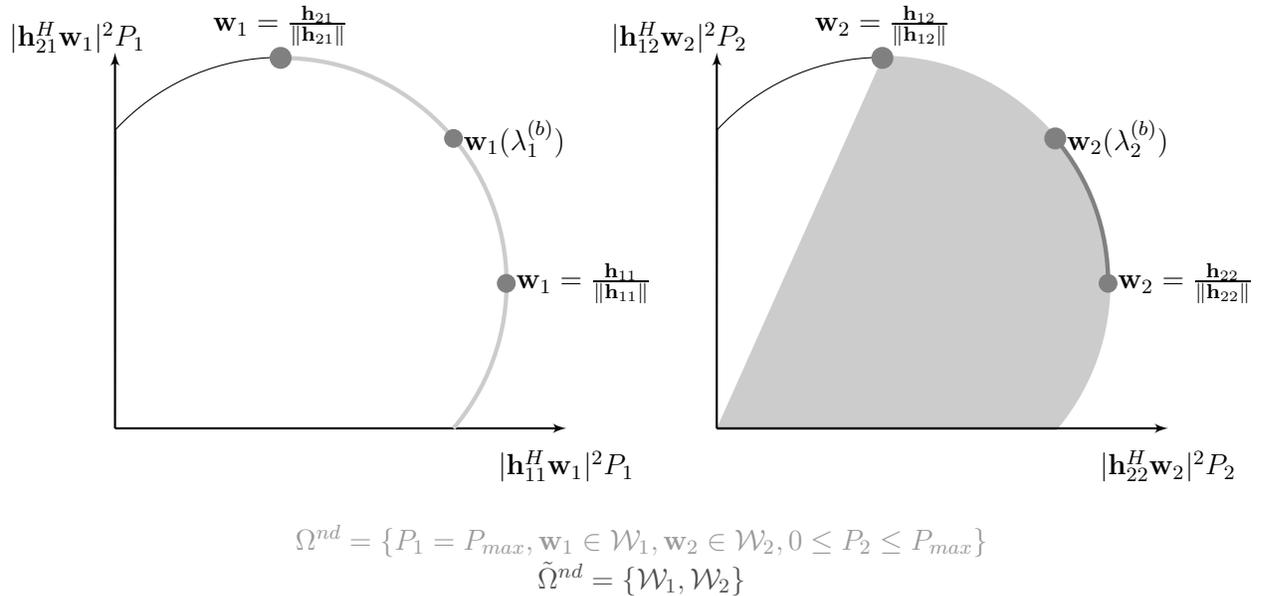
\begin{figure}
\begin{center}
\begin{tikzpicture}[node distance=5cm,auto,>=latex']
\draw (4.5,0) arc (-40:137:3cm) -- (0,0);

\draw[->,thick] (0,0) -- (6,0);
\node at (6,-0.5) {$|\mathbf{h}_{11}^H \mathbf{w}_1|^2 P_1$};

\draw[->,thick] (0,0) -- (0,5);
\node at (-0.5,5.2) {$|\mathbf{h}_{21}^H \mathbf{w}_1|^2 P_1$};

\draw[ultra thick, draw=black!20] (4.5,0) arc (-40:90:3cm)  node[circle, draw=black!50, fill=black!50, inner sep=0.5ex]  {}  node [above] {$\mathbf{w}_1=\frac{\mathbf{h}_{21}}{\|\mathbf{h}_{21}\|}$} ;
\draw[ draw=black!20] (4.5,0) arc (-40:40:3cm) node[circle, draw=black!50, fill=black!50, inner sep=0.5ex] {} node [right] {$\mathbf{w}_1(\lambda_1^{(b)})$};
\draw[ draw=black!20] (4.5,0) arc (-40:0:3cm) node[circle, draw=black!50, fill=black!50, inner sep=0.5ex]  {}  node [right]{$\mathbf{w}_1=\frac{\mathbf{h}_{11}}{\|\mathbf{h}_{11}\|}$};

\draw ++(8,0);
\draw (12.5,0) arc (-40:137:3cm) -- (8,0);

\draw[ultra thick, draw=black!20,fill=black!20] (8,0) -- (12.5,0) arc (-40:90:3cm)  node[circle, draw=black!50, fill=black!50, inner sep=0.5ex]  {}  node [above] {$\mathbf{w}_2=\frac{\mathbf{h}_{12}}{\|\mathbf{h}_{12}\|}$} ;
\draw[ draw=black!20] (12.5,0) arc (-40:0:3cm) node[circle, draw=black!50, fill=black!50, inner sep=0.5ex]  {}  node [right]{$\mathbf{w}_2=\frac{\mathbf{h}_{22}}{\|\mathbf{h}_{22}\|}$} coordinate (t);
\draw[ draw=black!50, ultra thick] (t) arc (0:40:3cm) node[circle, draw=black!50, fill=black!50, inner sep=0.5ex] {} node [right] {$\mathbf{w}_2(\lambda_2^{(b)})$} ;

\draw[->,thick] (8,0) -- (8,5);
\node at (7.5,5.2) {$|\mathbf{h}_{12}^H \mathbf{w}_2|^2 P_2$};

\draw[->,thick,draw=black] (8,0) -- (14,0);
\node at (14,-0.5) {$|\mathbf{h}_{22}^H \mathbf{w}_2|^2 P_2$};

\node at (7,-1.5) {\color{black!40} $\Omega^{nd}=\left\{P_1=P_{max}, \mathbf{w}_1 \in \mathcal{W}_1, \mathbf{w}_2 \in \mathcal{W}_2, 0 \leq P_2 \leq P_{max} \right\}$};
\node at (7,-2) {\color{black!70}$\tilde{\Omega}^{nd}=\left\{ \mathcal{W}_1,\mathcal{W}_2 \right\}$};

\end{tikzpicture}
\caption{The illustration of the candidate set of the maximum sum rate point of ND region  in dark grey and the candidate set of the Pareto boundary $\mathcal{B}(\mathcal{R}^{nd})$ in light grey. The cardinality of the candidate set for $\mathbf{w}_1$ of the maximum sum rate point is only three, conditioned on $\mathbf{w}_2$. \label{fig:sumrate_pow_gain_rnd}}

\end{center}
\end{figure}

\subsection{The maximum sum rate point characterization in the DD region}\label{section:sumrate_rdd}
In this section, we compute the candidate set that attains the maximum sum rate point of the DD region. \begin{Thm}\label{thm:rddsumrate}
The  candidate set of the maximum sum rate in $\mathbf{R}^{dd}$ is
\begin{equation}
\tilde{\Omega}^{dd}=\left\{ \mathcal{V}^{dd}_1, \mathcal{V}^{dd}_2, P_{max}, P_{max}\right\} 
\end{equation}
where for user $i=1,2$, the sum rate optimal beamforming vectors are either a linear combination of two orthogonal vectors or maximizing the desired channel power or a specific vector:
\begin{equation}
\mathcal{V}^{dd}_i=\left\{ \tilde{\mathcal{V}}_i, \frac{\mathbf{h}_{ii}}{\|\mathbf{h}_{ii} \|}, \mathbf{w}_i(\lambda_i^A)\right\}
\end{equation}
\begin{equation}
\tilde{\mathcal{V}}_i=\left\{ \mathbf{w}_i: \sqrt{\lambda_i} \frac{\Pi_{ii}\mathbf{h}_{ji}}{\|\Pi_{ii}\mathbf{h}_{ji} \|}+ \sqrt{1-\lambda_i} \frac{\Pi_{ii}^\perp \mathbf{h}_{ji}}{\|\Pi_{ii}^\perp \mathbf{h}_{ji} \|} , \lambda_i^{A} \leq \lambda_i \leq \lambda_i^{\mrt} \right\} 
\end{equation} 
where $\lambda_i^{\mrt}=\frac{|\mathbf{h}_{ii}^H\mathbf{h}_{ji}|^2}{\|\mathbf{h}_{ii}\|^2 \|\mathbf{h}_{ji}\|^2}$ and $\lambda_i^{A}=\frac{\|\Pi_{ii}^\perp \mathbf{h}_{ji} \|}{\|\mathbf{h}_{ji}\|^2 + (1+g_{jj})\|\mathbf{h}_{ii} \|^2-2 |\mathbf{h}_{ii}^H \mathbf{h}_{ji}| \sqrt{1+g_{jj}}}$. 
\end{Thm}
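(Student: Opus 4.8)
The plan is to build directly on Theorem~\ref{Thm:ParetoRdd} and the full-power argument of Section~\ref{section:sumrate_fullpow}: together they already confine any sum-rate-optimal pair to $P_1=P_2=P_{max}$ and $\mathbf{w}_i=\mathbf{w}_i(\lambda_i)\in\mathcal{V}_i$ for some $\lambda_i\in[0,1]$, so it remains only to shrink the admissible interval of each $\lambda_i$. Writing $g_{ij}\triangleq|\mathbf{h}_{ij}^H\mathbf{w}_j|^2P_{max}$, I would first record how the two powers controlled by Tx~1 move along $\mathcal{V}_1$: a short calculation gives $g_{11}(\lambda_1)=\lambda_1\|\mathbf{h}_{11}\|^2P_{max}$, strictly increasing, while the generated interference $g_{21}(\lambda_1)$ is unimodal with a unique maximum at $\lambda_1=\lambda_1^{\mrt}$ (the MRT-to-interference vector $\mathbf{h}_{21}/\|\mathbf{h}_{21}\|$), increasing on $[0,\lambda_1^{\mrt}]$ and decreasing on $[\lambda_1^{\mrt},1]$; the endpoint $\lambda_1=1$ is the MRT-to-desired vector $\mathbf{h}_{11}/\|\mathbf{h}_{11}\|$. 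The symmetric statements hold for Tx~2.

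The engine of the proof is a convenient rewriting of the objective. Since $\min\{a,b\}+\min\{c,d\}=\min\{a+c,a+d,b+c,b+d\}$, the DD sum rate is
\[
\bar R^{dd}=\min\{C_1+C_2,\;C_1+T_2,\;T_1+C_2,\;T_1+T_2\},
\]
and from the definitions in \eqref{eqt:ch4_rate_def} two of these terms telescope:
\[
C_1+T_2=\log_2(1+g_{11}+g_{12}),\qquad T_1+C_2=\log_2(1+g_{21}+g_{22}).
\]
Thus, holding $\mathbf{w}_2$ (hence $g_{12},g_{22}$) fixed, the first telescoped term depends on $\lambda_1$ only through $g_{11}$ and is increasing, the second depends only through $g_{21}$ and is unimodal, $C_1+C_2$ is increasing, and $T_1+T_2$ is the sole genuinely two-sided term.

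With this in hand I would fix $\mathbf{w}_2$ and maximize over $\lambda_1$ by a case split on which term attains the outer minimum. First, below the crossing $C_1=T_1$ one has $R_1=C_1$, so $\bar R^{dd}$ equals either $C_1+C_2$ or the telescoped $C_1+T_2=\log_2(1+g_{11}+g_{12})$, both strictly increasing in $\lambda_1$; hence no maximizer lies there and $\lambda_1$ may be pushed up to the crossing. Solving $C_1=T_1$, i.e. $g_{11}(1+g_{22})=g_{21}$, along $\mathcal{V}_1$ reduces after taking square roots to a linear equation in $\sqrt{\lambda_1},\sqrt{1-\lambda_1}$ and yields exactly the stated $\lambda_1^A$, whose denominator is the geometric quantity $\|\sqrt{1+g_{22}}\,\mathbf{h}_{11}-\mathbf{h}_{21}\|^2=(1+g_{22})\|\mathbf{h}_{11}\|^2-2\sqrt{1+g_{22}}|\mathbf{h}_{11}^H\mathbf{h}_{21}|+\|\mathbf{h}_{21}\|^2$. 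Second, on $(\lambda_1^{\mrt},1)$ both $g_{21}$ and $T_2$ decrease while $g_{11}$ increases; checking the four regimes shows that whenever $R_1=T_1$ the binding sum ($T_1+C_2$ or $T_1+T_2$) is decreasing, and whenever $R_1=C_1$ the binding sum ($C_1+C_2$ or $C_1+T_2$) is increasing. As $R_1$ can switch only once on this interval, from $T_1$ to $C_1$, the objective is decreasing-then-increasing, so the optimum over $[\lambda_1^{\mrt},1]$ sits at $\lambda_1=\lambda_1^{\mrt}$ or at $\lambda_1=1$. The remaining window $[\lambda_1^A,\lambda_1^{\mrt}]$ must be kept in full, because on it $R_1=T_1$ and, in the regime $R_2=T_2$, the objective is $T_1+T_2$, whose interior stationary point can be the true maximizer; this is precisely $\tilde{\mathcal{V}}_1$.

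Combining the three steps gives $\lambda_1\in[\lambda_1^A,\lambda_1^{\mrt}]\cup\{1\}$, i.e. $\mathbf{w}_1\in\mathcal{V}^{dd}_1$; the same argument with the indices exchanged gives $\mathbf{w}_2\in\mathcal{V}^{dd}_2$. Finally, at a global sum-rate maximizer each beamformer must be optimal given the other, so the optimal pair lies in $\mathcal{V}^{dd}_1\times\mathcal{V}^{dd}_2$, which is the claim. I expect the main obstacle to be the regime bookkeeping on $(\lambda_1^{\mrt},1)$: one must verify that the only admissible transition of the outer minimum is decreasing-to-increasing (a local minimum), so that no interior maximum is silently lost, and dually argue that the $T_1+T_2$ branch forces retention of the entire subinterval $[\lambda_1^A,\lambda_1^{\mrt}]$ rather than merely its endpoints. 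The closed-form reduction of $\lambda_1^A$ and the identification of $\lambda_1=\lambda_1^{\mrt}$ and $\lambda_1=1$ with the two MRT vectors are comparatively routine.
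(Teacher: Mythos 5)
Your skeleton (full power plus Theorem~\ref{Thm:ParetoRdd} to reduce to $\lambda_i\in[0,1]$, the identity $\min\{a,b\}+\min\{c,d\}=\min$ of four sums, the telescoping $C_1+T_2=\log_2(1+g_{11}+g_{12})$ and $T_1+C_2=\log_2(1+g_{21}+g_{22})$, and the derivation of $\lambda_1^A$ from the crossing $C_1=T_1$) closely parallels the paper's four-region argument, but your Case~2 contains a genuine error: the direction of the regime switch on $(\lambda_1^{\mrt},1)$ is backwards. On that interval $C_1$ is increasing and $T_1$ is decreasing, so $C_1-T_1$ is increasing and $R_1$ can only switch from $C_1$ (active where $C_1\leq T_1$, i.e.\ $\lambda_1\leq\lambda_1^A$) to $T_1$, never from $T_1$ to $C_1$. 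Equivalently, the sum rate there is $\min\{C_1+R_2,\,T_1+R_2\}$, the minimum of an increasing and a decreasing function of $\lambda_1$, which is increasing-then-decreasing with an interior \emph{maximum} at the crossing $\lambda_1^A$; the ``decreasing-then-increasing, hence maximum at the endpoints'' behaviour you assert is what the \emph{maximum} of two such functions would do, not the minimum.

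The consequence is concrete: whenever $\lambda_1^{\mrt}<\lambda_1^A<1$, the optimum over $[\lambda_1^{\mrt},1]$ sits at the interior point $\lambda_1^A$, yet your final set $[\lambda_1^A,\lambda_1^{\mrt}]\cup\{1\}$ degenerates to $\{1\}$ (the interval is empty) and misses the maximizer $\mathbf{w}_1(\lambda_1^A)$. This is exactly the case the theorem is engineered to capture: $\mathbf{w}_i(\lambda_i^A)$ is listed as a separate element of $\mathcal{V}_i^{dd}$ because, as the paper's remark and Fig.~\ref{fig:sumrate_pow_gain_rdd} stress, $\lambda_i^A$ need not be smaller than $\lambda_i^{\mrt}$, in which case $\tilde{\mathcal{V}}_i$ is empty while $\mathbf{w}_i(\lambda_i^A)$ remains a candidate; the paper's proof produces it as the boundary point of $\Omega^A$ maximizing the desired power in the subproblems for $\tilde{Z}_2,\tilde{Z}_3,\tilde{Z}_4$. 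The repair is local: with the switch direction corrected, the maximum over $[\lambda_1^{\mrt},1]$ lies in $\{\lambda_1^{\mrt},\lambda_1^A,1\}$, and combining with your Case~1 (``push up to the crossing,'' which already yields $\lambda_1^A$, or $1$ when $\lambda_1^A\geq 1$) and the retained window $[\lambda_1^A,\lambda_1^{\mrt}]$ gives $[\lambda_1^A,\lambda_1^{\mrt}]\cup\{\lambda_1^A\}\cup\{1\}$, which is precisely $\mathcal{V}_1^{dd}$; the symmetric statement for $\mathbf{w}_2$ then follows as you indicate.
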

\begin{proof}
see Appendix \ref{appendix:sumraterdd}.
\end{proof}
\begin{Remark}
Note that $\mathbf{w}_i(\lambda_i^A)$ may not be a element of $\mathcal{V}_i$ because $\lambda^A_i$ may not be smaller than $\lambda_i^{mrt}$ and in this case $\tilde{\mathcal{V}}_i$ is empty. The vector $\mathbf{w}_i(\lambda_i^A)$ is a beamforming vector that balance the interference decoding rate $T_i$ and the treating interference as noise rate $D_i$ in a weighted manner. See Appendix  \ref{appendix:sumraterdd} for more details.

In Fig. \ref{fig:sumrate_pow_gain_rdd}, we illustrate the reduction of the candidate set of the maximum sum rate point of the DD region, in red,  comparing to the candidate set of the Pareto boundary of the DD region, in blue. As shown in Fig. \ref{fig:sumrate_pow_gain_rdd}, the beamforming vectors in $\tilde{\mathcal{V}}$ achieve channel powers that are in the direction of minimizing the direct channel power while maximizing the interference channel power.
\end{Remark}

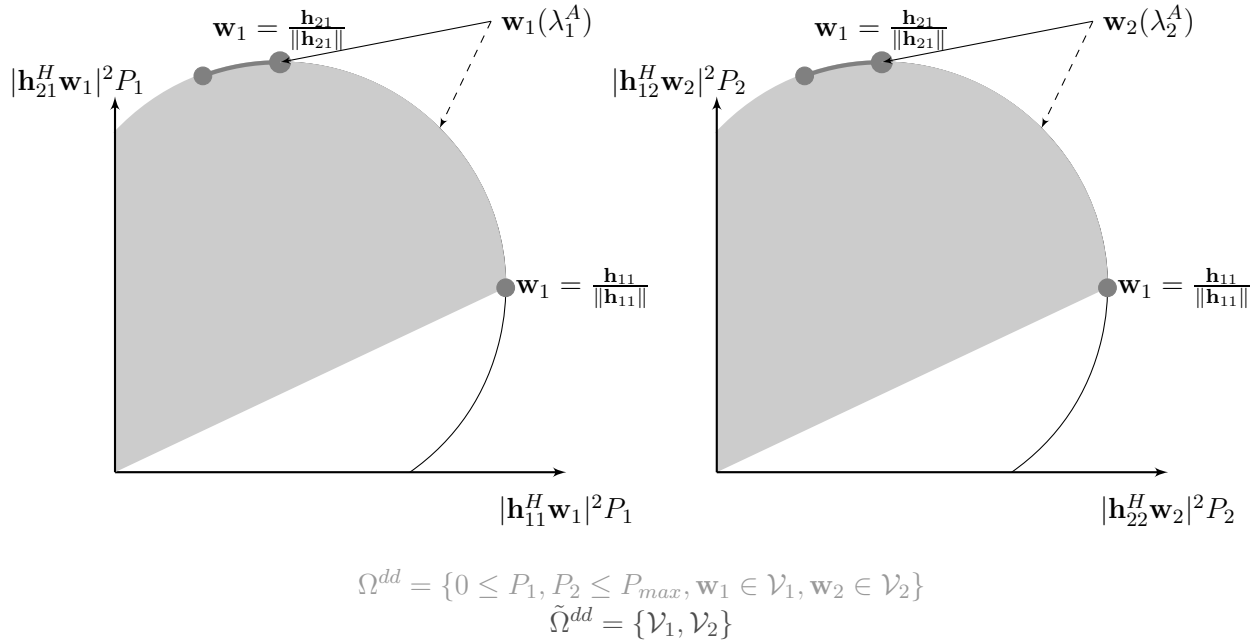
\begin{figure}
\begin{center}
\begin{tikzpicture}[node distance=5cm,auto,>=latex']
\draw (0,4.5) arc (137:-55:3cm) -- (0,0);

\draw[->,thick] (0,0) -- (6,0);
\node at (6,-0.5) {$|\mathbf{h}_{11}^H \mathbf{w}_1|^2 P_1$};


\draw[ draw=black!20,fill=black!20] (0,0) -- (0,4.5) arc (137:0:3cm) node[circle, draw=black!50, fill=black!50, inner sep=0.5ex]  {}  node [right]{$\mathbf{w}_1=\frac{\mathbf{h}_{11}}{\|\mathbf{h}_{11}\|}$};

\draw[ultra thick, draw=black!20] (0,4.5) arc (137:90:3cm)  node[circle, draw=black!50, fill=black!50, inner sep=0.5ex]  {}  node [above] {$\mathbf{w}_1=\frac{\mathbf{h}_{21}}{\|\mathbf{h}_{21}\|}$};

\draw[ draw=black!20] (0,4.5) arc (137:110:3cm) node [circle, draw=black!50, fill=black!50, inner sep=0.5ex] {} coordinate (s);
\draw[draw=black!50, ultra thick] (s) arc (110:90:3cm) coordinate (s1);
\draw[draw=black!20] (s1) arc (90:45:3cm) coordinate (s2);

\draw (5,6) node [right] {$\mathbf{w}_1(\lambda_1^A)$};
\draw[->] (5,6) -- (s1);
\draw[->, dashed] (5,6) -- (s2);

\draw[->,thick] (0,0) -- (0,5);
\node at (-0.5,5.2) {$|\mathbf{h}_{21}^H \mathbf{w}_1|^2 P_1$};

\draw (8,4.5) arc (137:-55:3cm) -- (8,0);

\draw[ draw=black!20,fill=black!20] (8,0) -- (8,4.5) arc (137:0:3cm) node[circle, draw=black!50, fill=black!50, inner sep=0.5ex]  {}  node [right]{$\mathbf{w}_1=\frac{\mathbf{h}_{11}}{\|\mathbf{h}_{11}\|}$};

\draw[ultra thick, draw=black!20] (8,4.5) arc (137:90:3cm)  node[circle, draw=black!50, fill=black!50, inner sep=0.5ex]  {}  node [above] {$\mathbf{w}_1=\frac{\mathbf{h}_{21}}{\|\mathbf{h}_{21}\|}$};

\draw[ draw=black!20] (8,4.5) arc (137:110:3cm) node[circle, draw=black!50, fill=black!50, inner sep=0.5ex]  {} coordinate (t);
\draw[draw=black!50, ultra thick] (t) arc (110:90:3cm) coordinate (t1);
\draw[draw=black!20] (t1) arc (90:45:3cm) coordinate (t2);

\draw (13,6) node [right] {$\mathbf{w}_2(\lambda_2^A)$};
\draw[->] (13,6) -- (t1);
\draw[->, dashed] (13,6) -- (t2);

\draw[->,thick] (8,0) -- (8,5);
\node at (7.5,5.2) {$|\mathbf{h}_{12}^H \mathbf{w}_2|^2 P_2$};

\draw[->,thick,draw=black] (8,0) -- (14,0);
\node at (14,-0.5) {$|\mathbf{h}_{22}^H \mathbf{w}_2|^2 P_2$};

\node at (7,-1.5) {\color{black!40} $\Omega^{dd}=\left\{0 \leq P_1,P_2 \leq P_{max}, \mathbf{w}_1 \in \mathcal{V}_1,\mathbf{w}_2 \in \mathcal{V}_2 \right\}$};
\node at (7,-2) {\color{black!70}$\tilde{\Omega}^{dd}=\left\{ \mathcal{V}_1,\mathcal{V}_2 \right\}$};

\end{tikzpicture}
\caption{The illustration of the candidate set of the maximum sum rate point of the DD region, in red, and the candidate set of the Pareto boundary $\mathcal{B}(\mathcal{R}^{dd})$ in blue. If $\lambda_i^A \leq \lambda_i^{\mrt}$, then the candidate set consists of the set $\tilde{V}_i$ and $\mathbf{w}_i =\frac{\mathbf{h}_{ii}}{\|\mathbf{h}_{ii} \|}$ as illstrated by the dark grey area in the figure. When $\lambda_i^A > \lambda_i^{\mrt}$, the candidate set becomes 3 beamforming vectors: $\frac{\mathbf{h}_{ii}}{\|\mathbf{h}_{ii} \|}, \mathbf{w}(\lambda_i^A)$ and $\mathbf{w}_i= \frac{\mathbf{h}_{ji}}{\|\mathbf{h}_{ji} \|}$.   \label{fig:sumrate_pow_gain_rdd}}
\end{center}
\end{figure}

To summarize, we obtain the candidate set of the maximum sum rate point in the ND and DD region, in Thm. \ref{thm:sumrate_rnd} and Thm. \ref{thm:rddsumrate} respectively. We can exchange the role of Tx 1 and 2 in Thm. \ref{thm:sumrate_rnd} to obtain the candidate set of maximum sum rate point in the DN region, $\tilde{\Omega}^{dn}$. For the NN region, the candidate set of the maximum sum rate point is identical to the candidate set of the Pareto boudary, $\Omega^{nn}$. Thus, we can have candidate set of the maximum sum rate point of MISO-IC-IDC as $\tilde{\Omega}$:
\begin{equation}
\tilde{\Omega}= \tilde{\Omega}^{nd} \bigcup \tilde{\Omega}^{dn} \bigcup \tilde{\Omega}^{dd} \bigcup \Omega^{nn}.
\end{equation}

In the next section, we apply the results obtained above: from the candidate set of the maximum sum rate point in different decoding structures, we identify the conditions in which the MRT strategies are sum rate optimal. Such strategies are attractive because of their simplicity and the MRT optimality conditions answer the following two interesting questions: \emph{When is selfishness sum rate optimal? When is interference amplification sum rate optimal?}

\section{MRT optimality conditions}\label{section:mrtopt}
In this section, we investigate the conditions in which the MRT strategies at both Tx 1 and 2 are sum rate optimal. For clarification, MRT strategies can mean two strategies, one to beamform to the direct channel $\mathbf{h}_{ii}$ and the other to beamform to the interference channel $\mathbf{h}_{ji}$. We characterize the MRT optimality conditions in terms of the \emph{separation} between the desired channel and the interference channel, $\theta_i$:
\begin{equation}
\theta_i = \cos^{-1} \left( \frac{|\mathbf{h}_{ji}^H \mathbf{h}_{ii} |}{\| \mathbf{h}_{ji}\| \|\mathbf{h}_{ii} \|}\right).
\end{equation}

\begin{Thm}
The MRT optimality conditions for decoding structure $\mathbf{R}^{nd}$ are:
\begin{itemize}
\item  $ \left( \frac{\mathbf{h}_{11}}{\|\mathbf{h}_{11} \|}, \frac{\mathbf{h}_{22}}{\|\mathbf{h}_{22} \|}\right)$  is optimal if and only if
  \begin{equation}
   \frac{c_1 \| \Pi_{21}^\perp \mathbf{h}_{11}\|^2}{c_2 \| \mathbf{h}_{21}\|^2 - 2 \sqrt{c_1 c_2} |\mathbf{h}_{21}^H \mathbf{h}_{11} | + c_1 \| \mathbf{h}_{11}\|^2}  < \cos^2(\theta_1) \leq \frac{(1+ \| \mathbf{h}_{22}\|^2 P_{max}) \| \mathbf{h}_{11}\|^2}{(1+ \| \mathbf{h}_{12} \|^2 \cos^2(\theta_2)P_{max} ) \| \mathbf{h}_{21}\|^2 } 
  \end{equation}
\item $ \left( \frac{\mathbf{h}_{21}}{\|\mathbf{h}_{21} \|}, \frac{\mathbf{h}_{22}}{\|\mathbf{h}_{22} \|}\right) $  is optimal if and only if
\begin{equation}
  \| \mathbf{h}_{21}\|^2 \leq (1+ \| \mathbf{h}_{22}\|^2 P_{max}) \frac{\| \Pi_{21} \mathbf{h}_{11}\|^2 }{1+ \| \mathbf{h}_{12}\|^2 \cos^2(\theta_2) P_{max}}
  \end{equation} where $c_1=\frac{P_{max}}{\| \mathbf{h}_{12}\|^2 \cos^2(\theta_2) P_{max} + 1}$ and $c_2=\frac{P_{max}}{\|\mathbf{h}_{22} \|^2 P_{max}+1}$.
  \end{itemize}
\end{Thm}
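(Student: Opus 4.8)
The plan is to turn the bivariate problem $\max\bar{R}^{nd}(\mathbf{w}_1,\mathbf{w}_2)=\min\{D_1,T_1\}+C_2$ into two one-variable optimality tests evaluated at the candidate pair, using the finite candidate set of Theorem~\ref{thm:sumrate_rnd}. First I would fix $\mathbf{w}_2=\mathbf{h}_{22}/\|\mathbf{h}_{22}\|$, the $\lambda_2=\lambda_2^{\mrt}$ endpoint of $\tilde{\mathcal{W}}_2$. This freezes $g_{22}=\|\mathbf{h}_{22}\|^2P_{max}$ and $g_{12}=\|\mathbf{h}_{12}\|^2\cos^2(\theta_2)P_{max}$, hence the constants $c_1=P_{max}/(1+g_{12})$ and $c_2=P_{max}/(1+g_{22})$, and reduces the two competing rates to $D_1=\log_2(1+c_1|\mathbf{h}_{11}^H\mathbf{w}_1|^2)$ and $T_1=\log_2(1+c_2|\mathbf{h}_{21}^H\mathbf{w}_1|^2)$. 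Maximizing $R_1=\min\{D_1,T_1\}$ over $\mathbf{w}_1$ is then equivalent to maximizing $\min\{c_1|\mathbf{h}_{11}^H\mathbf{w}_1|^2,\,c_2|\mathbf{h}_{21}^H\mathbf{w}_1|^2\}$, a scalar trade-off.

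Next I would parameterize $\mathbf{w}_1\in\mathcal{W}_1$ by $\lambda_1$ and use $|\mathbf{h}_{21}^H\mathbf{w}_1|^2=\lambda_1\|\mathbf{h}_{21}\|^2$ together with $|\mathbf{h}_{11}^H\mathbf{w}_1|^2=(\sqrt{\lambda_1}\|\Pi_{21}\mathbf{h}_{11}\|+\sqrt{1-\lambda_1}\|\Pi_{21}^\perp\mathbf{h}_{11}\|)^2$. On the arc running from the MRT-desired vector ($\lambda_1=\cos^2(\theta_1)$, the maximizer of $|\mathbf{h}_{11}^H\mathbf{w}_1|^2$) to the MRT-interference vector ($\lambda_1=1$), $D_1$ decreases and $T_1$ increases, so $D_1-T_1$ is strictly monotone and vanishes exactly at $\lambda_1^{(b)}$ of Theorem~\ref{thm:sumrate_rnd}. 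The piecewise shape of $\min\{D_1,T_1\}$ then makes $\mathbf{h}_{11}/\|\mathbf{h}_{11}\|$ the maximizing candidate precisely when the crossing sits at or left of $\cos^2(\theta_1)$, i.e. when $\cos^2(\theta_1)>\lambda_1^{(b)}$ (the left inequality), and makes $\mathbf{h}_{21}/\|\mathbf{h}_{21}\|$ the maximizer when the crossing lies to the right of $\lambda_1=1$; using $\|\Pi_{21}\mathbf{h}_{11}\|^2=\cos^2(\theta_1)\|\mathbf{h}_{11}\|^2$ and $c_1/c_2=(1+g_{22})/(1+g_{12})$, the latter reduces to the inequality of the second bullet.

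For the right inequality I would then certify that $\mathbf{w}_2=\mathbf{h}_{22}/\|\mathbf{h}_{22}\|$ is itself optimal once $\mathbf{w}_1$ is pinned at $\mathbf{h}_{11}/\|\mathbf{h}_{11}\|$. The clean observation is the telescoping identity $T_1+C_2=\log_2(1+g_{21}+g_{22})$, valid whenever $R_1=T_1$; this is monotone increasing in $g_{22}$ and is therefore maximized by pushing $g_{22}$ up to $\|\mathbf{h}_{22}\|^2P_{max}$, i.e. by the selfish $\mathbf{w}_2$. Requiring the binding relation $T_1\le D_1$ at the candidate pair and substituting the constants yields exactly $\cos^2(\theta_1)\le(1+\|\mathbf{h}_{22}\|^2P_{max})\|\mathbf{h}_{11}\|^2/[(1+\|\mathbf{h}_{12}\|^2\cos^2(\theta_2)P_{max})\|\mathbf{h}_{21}\|^2]$, the right inequality. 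The ``only if'' direction follows by exhibiting, when an inequality fails, an explicit improving move: a shift of $\mathbf{w}_1$ towards $\lambda_1^{(b)}$ or towards $\mathbf{h}_{21}/\|\mathbf{h}_{21}\|$, or a decrease of $\lambda_2$ inside $\tilde{\mathcal{W}}_2$.

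The step I expect to be the main obstacle is reconciling the two coordinate-wise tests at the common pair: the Tx~1 argument is cleanest in the regime $D_1\le T_1$ whereas the Tx~2 telescoping argument is cleanest in the regime $T_1\le D_1$, so careful accounting of which constraint binds at $(\mathbf{h}_{11}/\|\mathbf{h}_{11}\|,\mathbf{h}_{22}/\|\mathbf{h}_{22}\|)$ is essential, as is verifying that the resulting two inequalities genuinely delimit the intended range rather than merely the crossover $D_1=T_1$. Throughout, the nondifferentiability of $\min\{D_1,T_1\}$ at $\lambda_1^{(b)}$ forces me to rely on the monotonicity of $D_1$ and $T_1$ and on the telescoping identity rather than on smooth first-order conditions, and I would still have to check that no interior point of the continuum $\tilde{\mathcal{W}}_2$ can beat the endpoint.
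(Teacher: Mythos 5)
Your plan retraces the paper's own proof in Appendix~\ref{app:mrtopt}: parameterize both beamformers by $\lambda_1,\lambda_2$, reduce joint optimality to two one-dimensional best-response tests (Tx~1 against the pinned MRT $\mathbf{w}_2$ via the single crossing $\lambda_1^{(b)}$ of the increasing $T_1$-branch with the concave $D_1$-branch; Tx~2 against the pinned $\mathbf{w}_1$), and intersect the resulting inequalities. Your telescoping bound is in fact a cleaner substitute for the paper's Lemma~\ref{lem:rnd_mrt_opt}, provided you state it as an \emph{unconditional} upper bound: $\min\{T_1,D_1\}+C_2\le T_1+C_2=\log_2(1+g_{21}+g_{22})$ holds for every $\mathbf{w}_2$, not only ``whenever $R_1=T_1$''; since $g_{22}$ is maximized at $\lambda_2^{\mrt}$, the MRT $\mathbf{w}_2$ is optimal exactly when the bound is tight there, i.e.\ exactly when $T_1\le D_1$ at the pair. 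Stated this way it covers all of $\mathcal{W}_2$ at once and disposes of your worry about interior points of $\tilde{\mathcal{W}}_2$.

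The genuine gap is precisely the step you flagged as the ``main obstacle,'' and it is not bookkeeping: for the first bullet the two coordinate-wise tests are mutually exclusive, so intersecting them cannot deliver the stated range. Concretely, $c_1 g_1(\lambda_1)-c_2\lambda_1\|\mathbf{h}_{21}\|^2$ is concave in $\lambda_1$ and strictly positive at $\lambda_1=0$, so whenever $\lambda_1^{(b)}$ is a genuine crossing in $(0,1)$ the sign pattern is $D_1>T_1$ before it and $D_1<T_1$ after it. Your Tx-1 condition $\lambda_1^{(b)}<\cos^2(\theta_1)$ therefore forces $D_1<T_1$ at the candidate pair, while your Tx-2 condition (the right inequality) is exactly $T_1\le D_1$ at the same pair; the conjunction is empty. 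The two printed inequalities can hold simultaneously only when no crossing exists in $[0,1]$ (equivalently $\sqrt{c_2}\,\|\mathbf{h}_{21}\|\le\sqrt{c_1}\,\|\mathbf{h}_{11}\|\cos\theta_1$), but there the closed-form $\lambda_1^{(b)}$ is a spurious root introduced by squaring, $\min\{T_1,D_1\}=T_1$ on all of $[0,1]$, and Tx~1's best response is $\mathbf{h}_{21}/\|\mathbf{h}_{21}\|$, not $\mathbf{h}_{11}/\|\mathbf{h}_{11}\|$ --- so the ``if'' direction fails there as well. Indeed, combining the two necessary best-response conditions shows joint optimality of $\left(\mathbf{h}_{11}/\|\mathbf{h}_{11}\|,\mathbf{h}_{22}/\|\mathbf{h}_{22}\|\right)$ forces the degenerate balance $T_1=D_1$, i.e.\ $\lambda_1^{(b)}=\cos^2(\theta_1)$. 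Your second bullet does not suffer from this: there the no-crossing condition and the Tx-2 test at $\lambda_1=1$ coincide in the single stated inequality, and your telescoping bound proves sufficiency in one line since $F_1$ attains its global maximum over $[0,1]^2$ at that pair. Be aware that the paper's own proof conceals exactly the same hole behind the phrase ``after some manipulations,'' so you located the crux correctly; but neither your proposal nor the paper resolves it, and as written the first bullet cannot be proved because it is not true as stated.
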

\begin{proof}
See Appendix \ref{app:mrtopt}.
\end{proof}

Now, we provide the MRT optimality conditions for $\mathbf{R}^{dd}$.
\begin{Thm}\label{thm:mrtopt_rdd}
The MRT optimality conditions for $\mathbf{R}^{dd}$ are:
\begin{eqnarray}
\mathbf{w}_i=\frac{\mathbf{h}_{ji}}{\|\mathbf{h}_{ji}\|} \text{ is optimal if } && \frac{g_{ij}}{g_{jj}}-1 \leq \|\mathbf{h}_{ii}\|^2 \cos^2(\theta_i)  \leq  \frac{\| \mathbf{h}_{ji}\|^2}{1+g_{jj}}\\
\mathbf{w}_i=\frac{\mathbf{h}_{ii}}{\|\mathbf{h}_{ii}\|} \text{ is optimal if } && (1+ g_{jj}) \|\mathbf{h}_{ii} \|^2 \leq \| \mathbf{h}_{ji}\|^2 \cos^2(\theta_i)
\end{eqnarray} for $i,j=1,2$ and $g_{ij}=|\mathbf{h}_{ij}^H \mathbf{w}_j|^2$. To be more specific:
\begin{eqnarray}
& &\left(\frac{\mathbf{h}_{11}}{\|\mathbf{h}_{11}  \|}, \frac{\mathbf{h}_{22} }{\|\mathbf{h}_{22}  \|} \right) \text{ is optimal if and only if} \left\{ \begin{aligned}
\cos^2(\theta_1) & \geq \frac{(1+ \| \mathbf{h}_{22}\|^2) \|\mathbf{h}_{11} \|^2}{ \| \mathbf{h}_{21}\|^2}\\
\cos^2(\theta_2) & \geq \frac{(1+ \| \mathbf{h}_{11}\|^2)\| \mathbf{h}_{22}\|^2}{ \|\mathbf{h}_{12} \|^2}.
\end{aligned} \right. \\
\nonumber && \left( \frac{\mathbf{h}_{21}}{\|\mathbf{h}_{21} \|}, \frac{\mathbf{h}_{12}}{\|\mathbf{h}_{12} \|}\right) \text{ is optimal if and only if} \\
&& \left\{ \begin{aligned}
\| \mathbf{h}_{12}\|^2 - \| \mathbf{h}_{22}\|^2 \cos^2 (\theta_2) &\leq \| \mathbf{h}_{11}\|^2 \| \mathbf{h}_{22}\|^2 \cos^2(\theta_1) \cos^2(\theta_2) &\leq \frac{\|\mathbf{h}_{21} \|^2 \| \mathbf{h}_{22}\|^2 \cos^2(\theta_2)}{ 1+ \|\mathbf{h}_{22} \|^2 \cos^2(\theta_2)}\\
\| \mathbf{h}_{21}\|^2 - \| \mathbf{h}_{11}\|^2 \cos^2 (\theta_1) &\leq \| \mathbf{h}_{11}\|^2 \| \mathbf{h}_{22}\|^2 \cos^2(\theta_1) \cos^2(\theta_2) &\leq \frac{\|\mathbf{h}_{12} \|^2 \| \mathbf{h}_{11}\|^2 \cos^2(\theta_1)}{ 1+ \|\mathbf{h}_{11} \|^2 \cos^2(\theta_1)}.
\end{aligned}\right.
\end{eqnarray}
\end{Thm}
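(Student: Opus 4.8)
The plan is to collapse the two-dimensional beamformer search into a one-parameter best-response problem per transmitter and then lean on the candidate-set characterization already proved in Theorem \ref{thm:rddsumrate}. By the full-power argument of Section \ref{section:sumrate_fullpow} both transmitters use $P_{max}$, and by Theorem \ref{Thm:ParetoRdd} each $\mathbf{w}_i$ may be restricted to $\mathcal{V}_i$, i.e.\ parameterized by a single scalar $\lambda_i\in[0,1]$. First I would fix the partner's beamformer $\mathbf{w}_j$ so that $g_{jj}$ and $g_{ij}$ are constants, and write the two received powers controlled by $\mathbf{w}_i$ in closed form: $g_{ii}(\lambda_i)=\lambda_i\|\mathbf{h}_{ii}\|^2P_{max}$, strictly increasing and maximal at $\lambda_i=1$, which is exactly desired-MRT $\mathbf{w}_i=\mathbf{h}_{ii}/\|\mathbf{h}_{ii}\|$; and $g_{ji}(\lambda_i)=(\sqrt{\lambda_i}\,\|\Pi_{ii}\mathbf{h}_{ji}\|+\sqrt{1-\lambda_i}\,\|\Pi_{ii}^\perp\mathbf{h}_{ji}\|)^2P_{max}$, which is unimodal with its unique maximum at $\lambda_i=\cos^2(\theta_i)=\lambda_i^{\mrt}$, i.e.\ at interference-MRT $\mathbf{w}_i=\mathbf{h}_{ji}/\|\mathbf{h}_{ji}\|$. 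Thus the two MRT beamformers are precisely the two extreme points of the relevant arc of the received-power region, and the balancing candidate $\mathbf{w}_i(\lambda_i^A)$ of Theorem \ref{thm:rddsumrate} is the interior point obtained by solving $C_i=T_i$ (squaring the radicals in that equation returns the stated $\lambda_i^A$).

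Next I would analyze $\bar{R}^{dd}(\lambda_i)=\min\{C_i,T_i\}+\min\{C_j,T_j\}$ as a function of $\lambda_i$. Being a sum of two minima, the interval $[0,1]$ splits into regimes according to which of $C_i\lessgtr T_i$ and $C_j\lessgtr T_j$ is active, the regime boundaries being exactly the crossings $C_i=T_i$ (at $\lambda_i^A$) and $C_j=T_j$. Within each regime the objective is either monotone in $g_{ii}$ or equals $T_i$ plus a constant. Concretely, when user $i$ is decodability-limited ($R_i=T_i$) and user $j$ is single-user limited ($R_j=C_j$), the sum reduces to $T_i(\lambda_i)+C_j$ with $C_j$ constant, so it is maximized exactly where $g_{ji}$ peaks, namely at interference-MRT; whereas when $R_i=C_i$ one uses the identity $C_i+T_j=\log_2(1+g_{ii}+g_{ij})$ to see that $C_i+\min\{C_j,T_j\}$ is nondecreasing in $g_{ii}$, which pushes the optimum toward the largest feasible $g_{ii}$, i.e.\ toward desired-MRT (when feasible up to $\lambda_i=1$) or toward $\mathbf{w}_i(\lambda_i^A)$. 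Translating ``the nominated MRT is the active-regime maximizer and dominates the two competing candidates of Theorem \ref{thm:rddsumrate}'' into inequalities in the channel gains yields the per-transmitter conditions, where the two inequalities in each are precisely the statements that the two relevant comparisons $C_i$ versus $T_i$ and $C_j$ versus $T_j$, evaluated at the candidate $\lambda_i$, fall on the required side.

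Finally, for the explicit two-user ``if and only if'' statements I would substitute the partner's MRT beamformer into $g_{jj}$ and $g_{ij}$ in closed form, namely $g_{jj}=\|\mathbf{h}_{jj}\|^2P_{max}$ when Tx $j$ uses desired-MRT, and $g_{jj}=\cos^2(\theta_j)\|\mathbf{h}_{jj}\|^2P_{max}$ together with $g_{ij}=\|\mathbf{h}_{ij}\|^2P_{max}$ when Tx $j$ uses interference-MRT, and then intersect the two per-transmitter best-response conditions; this is what produces the products $\cos^2(\theta_1)\cos^2(\theta_2)$ appearing in the specific conditions. The step I expect to be the main obstacle is the case analysis over the four $\min$-regimes while keeping every $C$-versus-$T$ inequality, and hence every sign, correct: one must show not only that the nominated MRT is a local maximizer in its active regime but also that it beats the balancing candidate $\mathbf{w}_i(\lambda_i^A)$ across the whole interval, and, for the iff direction, that the two coupled best-response conditions are simultaneously consistent at the joint MRT point and together sufficient for global sum-rate optimality.
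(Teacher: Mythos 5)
Your plan is essentially the paper's own proof: the paper likewise fixes full power, restricts each $\mathbf{w}_i$ to the one-parameter family $\mathcal{V}_i$, splits the sum of minima into the four regimes determined by the comparisons $C_i$ vs.\ $T_i$ and $C_j$ vs.\ $T_j$ (its indicator sets $\Omega^{ab}$ with boundaries $\Omega^A,\Omega^B$, which are exactly your crossing points $\lambda_i^A$ and the $C_j=T_j$ crossing), exploits per-regime monotonicity of $\tilde{Z}_1,\dots,\tilde{Z}_4$ (your identity $C_i+T_j=\log_2(1+g_{ii}+g_{ij})$ is precisely how it evaluates $\tilde{Z}_2,\tilde{Z}_3$), reads off the conditions as regime-membership inequalities at the MRT points, and finally substitutes the partner's MRT gains and intersects the two per-transmitter conditions. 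One caution for the execution you deferred: done carefully, the regime analysis gives interference-MRT optimality when user $i$ is decodability-limited and user $j$ is not, i.e.\ $\|\mathbf{h}_{ii}\|^2\cos^2(\theta_i)\geq \|\mathbf{h}_{ji}\|^2/(1+g_{jj})$ together with $\|\mathbf{h}_{ii}\|^2\cos^2(\theta_i)\leq g_{ij}/g_{jj}-1$, which matches the paper's appendix and the theorem's own specific \emph{iff} conditions but is the reverse of the inequality directions printed in the theorem's first display, so your proof, executed to the end, would establish (and expose the typo in) the corrected statement rather than the literal one.
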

\vspace{0.5cm}
\begin{proof}
see Appendix \ref{app:mrtopt_rdd}.
\end{proof}
%

\section{Simulation Results}\label{section:simulation}
In this section, we provide simulation results for the proposed parameterization. By varying the beamforming vectors and power allocation, according to the proposed parameterization, we plot the achievable rate region for each decoding structure for a particular channel realization in Section \ref{section:sim_rate_region}. The maximum sum rate point and the MRT points in each decoding structure are plotted on the corresponding achievable rate region. In Section \ref{section:sim_empirical_freq}, we compute the empirical frequency of MRT strategies in $\mathbf{R}^{nd}$ and $\mathbf{R}^{dd}$ averaged over 500 channel realizations. In Section \ref{section:sim_theta_ch}, we allow the channels to be correlated and we see that the sum rate optimal decoding structure changes with the strength of the interference channel, agreeing with the observations for SISO-IC.

\subsection{Achievable rate region and maximum sum rate point}\label{section:sim_rate_region}
In Fig. \ref{fig:ach_region}, we plot the achievable rate region of the decoding structure $\mathbf{R}^{nd}$, $\mathbf{R}^{dn}$, $\mathbf{R}^{dd}$ and $\mathbf{R}^{nn}$ in Fig. \ref{fig:rnd}, \ref{fig:rdn}, \ref{fig:rdd} and \ref{fig:rnn}, respectively. We assume $N=3$ transmit antennas and SNR=0dB. We exhaust $\lambda_1$ and $\lambda_2$ to take  20 values between zero and one, inclusively. For each pair of $(\lambda_1, \lambda_2)$, beamforming vectors $\mathbf{w}_1, \mathbf{w}_2$ are generated and the corresponding rates with transmit powers $P_1,P_2$ are plotted. Depending on the candidate set in each decoding structure, the transmit powers can be less than maximum power or equal to the maximum power $P_{max}$. For example, in $\mathbf{R}^{nn}$, maximum power is used: $P_1=P_2=P_{max}$, whereas in $\mathbf{R}^{nd}$, $P_1=P_{max}$ and $0 \leq P_2 \leq P_{max}$ and in $\mathbf{R}^{dd}$, $0 \leq P_1,P_2 \leq P_{max}$.
For simulation purposes, we allow the transmit powers to take 10 values between 0 and $P_{max}$, inclusively. The rate points plotted are achieved by the proposed Pareto boundary parameterization and the red asterisk is the maximum sum rate point by employing the maximum sum rate point parameterization where as the red square is the MRT strategies: $\left(\mathbf{w}_1=\frac{\mathbf{h}_{21}}{\|\mathbf{h}_{21} \|}, \mathbf{w}_2=\frac{\mathbf{h}_{22}}{\|\mathbf{h}_{22} \|} \right)$ in $\mathbf{R}^{nd}$; $\left(\mathbf{w}_1=\frac{\mathbf{h}_{11}}{\|\mathbf{h}_{11} \|}, \mathbf{w}_2=\frac{\mathbf{h}_{12}}{\|\mathbf{h}_{12} \|} \right)$ in $\mathbf{R}^{dn}$  and $\left(\mathbf{w}_1=\frac{\mathbf{h}_{11}}{\|\mathbf{h}_{11} \|}, \mathbf{w}_2=\frac{\mathbf{h}_{22}}{\|\mathbf{h}_{22} \|} \right)$ in $\mathbf{R}^{dd}$ and $\mathbf{R}^{nn}$. 

\begin{figure}
  \centering
  \subfloat[Achievable rate region of $\mathbf{R}^{nd}$: proposed parameterization achieves the Pareto Boundary and maximum sum rate point.]{\label{fig:rnd}\includegraphics[width=7cm, height=6cm,keepaspectratio]{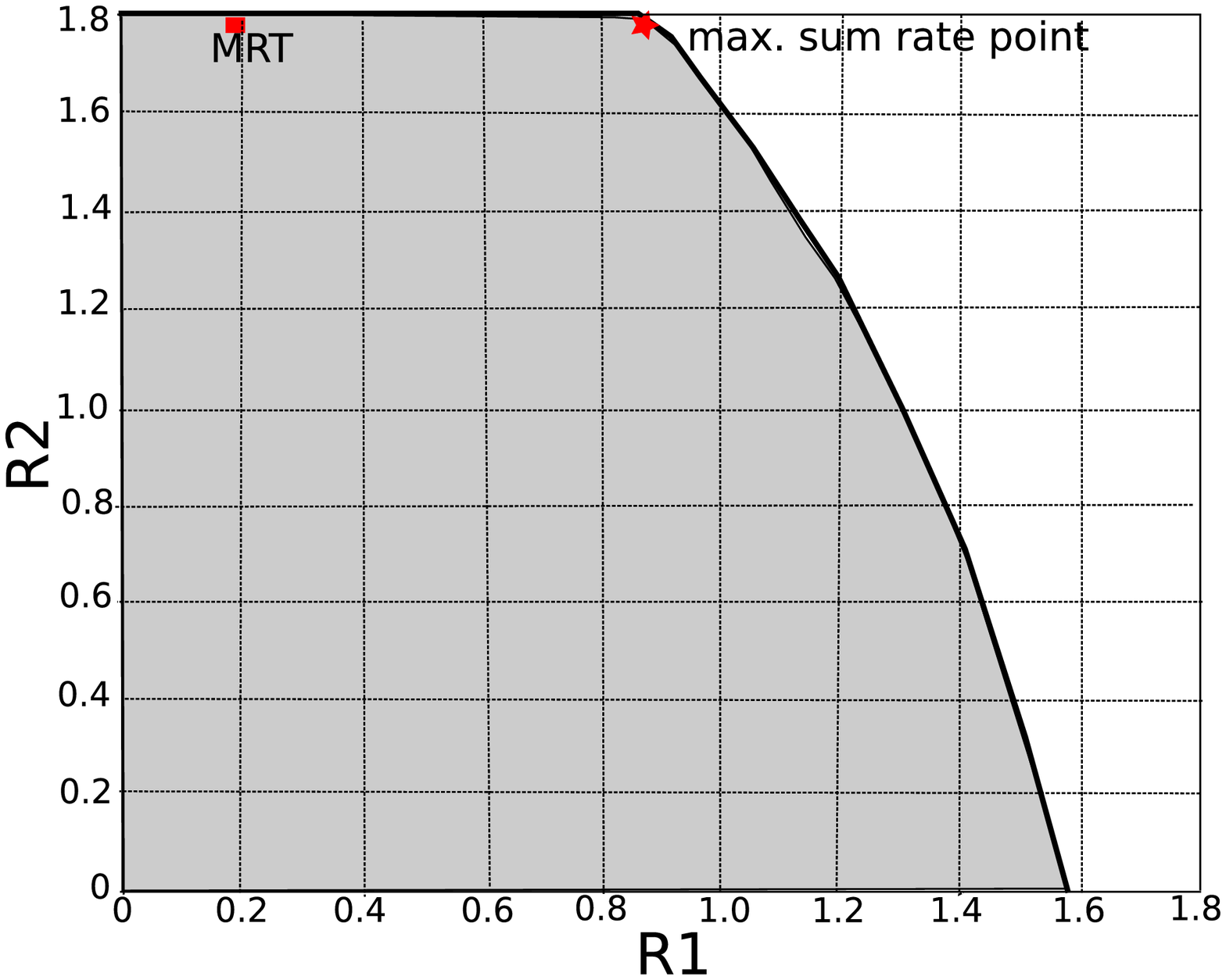}} 
  \subfloat[Achievable rate region of $\mathbf{R}^{dn}$: proposed parameterization achieves the Pareto Boundary and maximum sum rate point. \label{fig:rdn}]{\includegraphics[width=7cm, height=6cm,keepaspectratio]{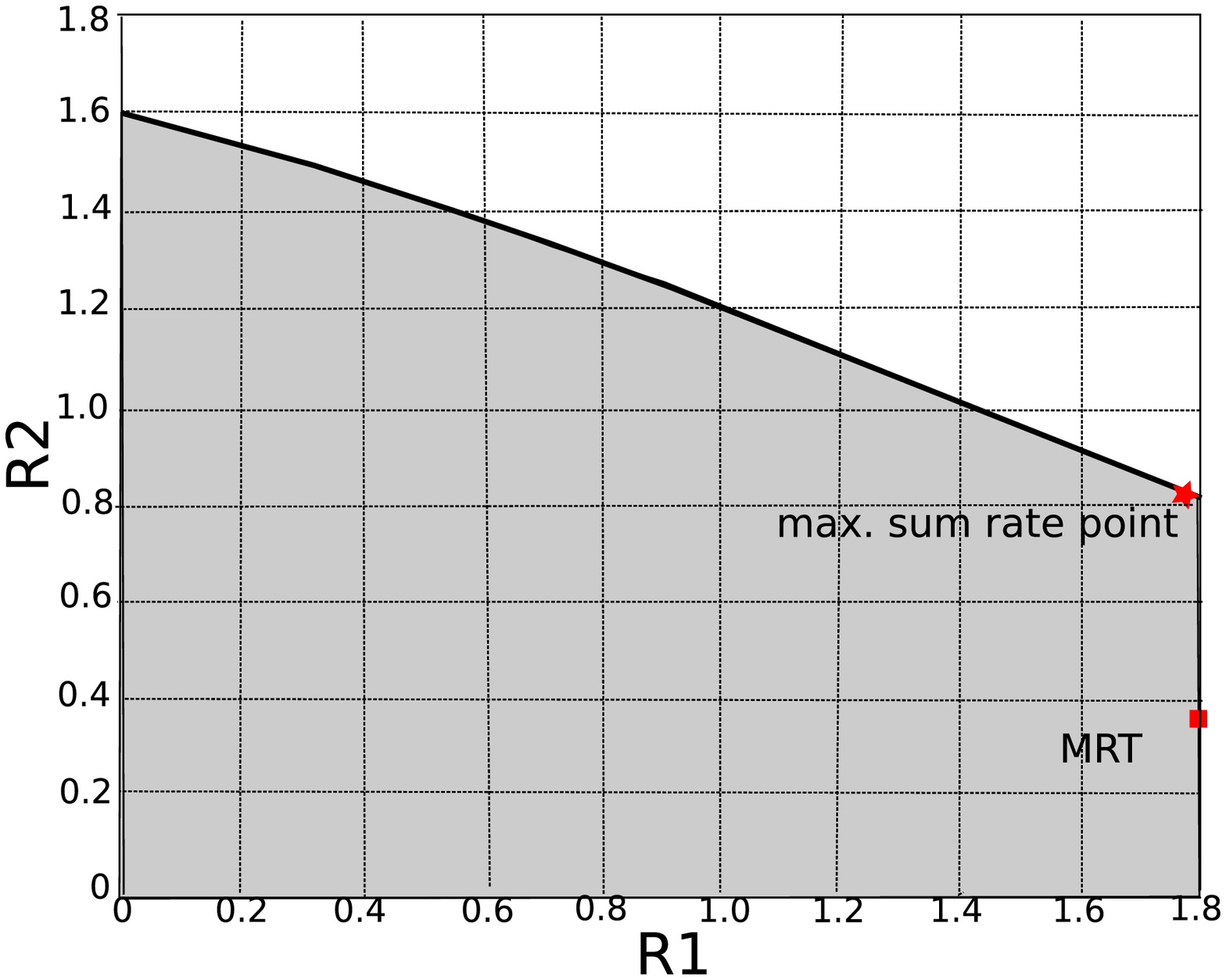}}\\
  \subfloat[Achievable rate region of $\mathbf{R}^{dd}$: proposed parameterization achieves the Pareto Boundary and maximum sum rate point. \label{fig:rdd}]{\includegraphics[width=7cm, height=6cm,keepaspectratio]{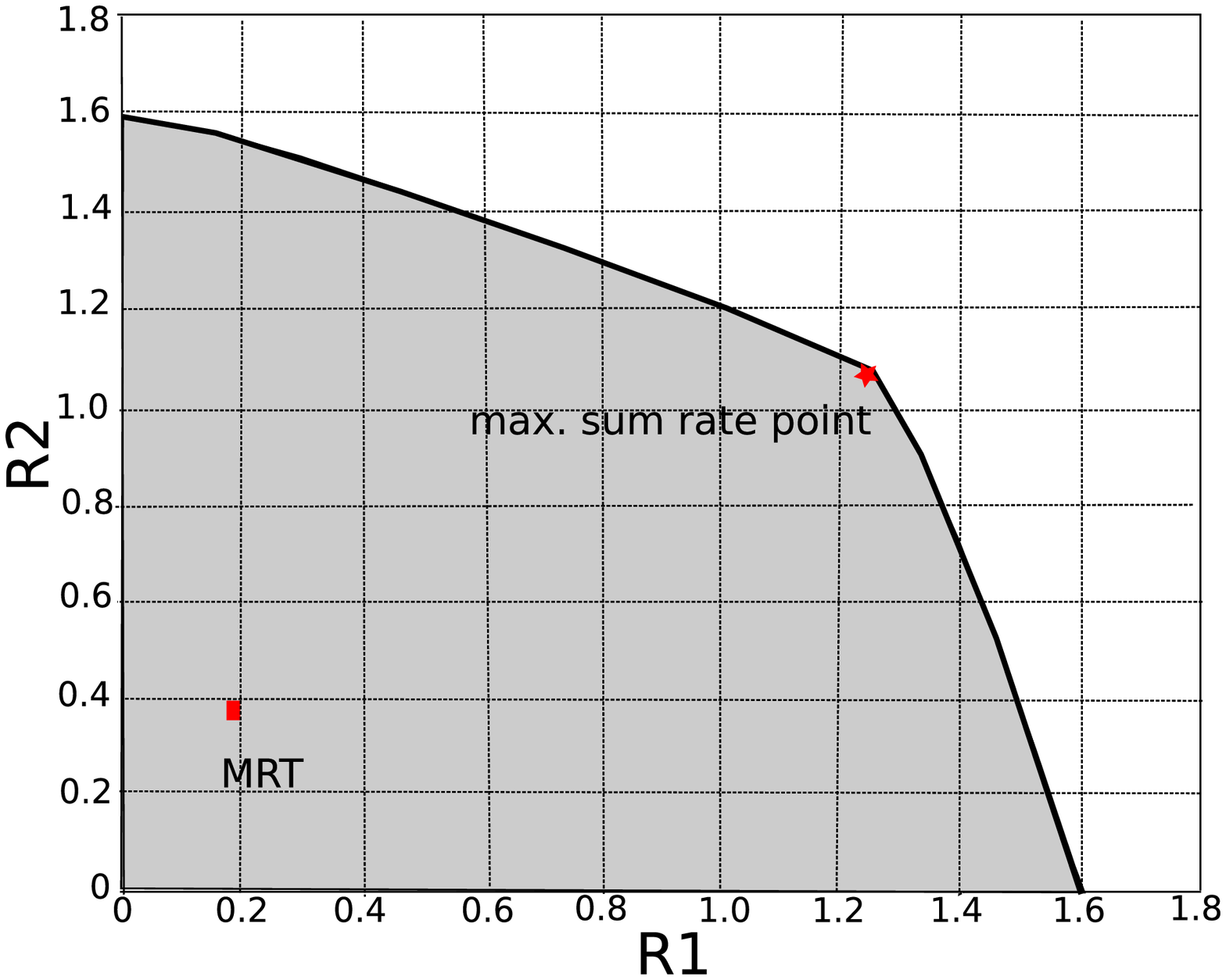}}    \subfloat[Achievable rate region of $\mathbf{R}^{nn}$: proposed parameterization achieves the Pareto Boundary and maximum sum rate point. \label{fig:rnn}]{\includegraphics[width=7.5cm, height=7cm,keepaspectratio]{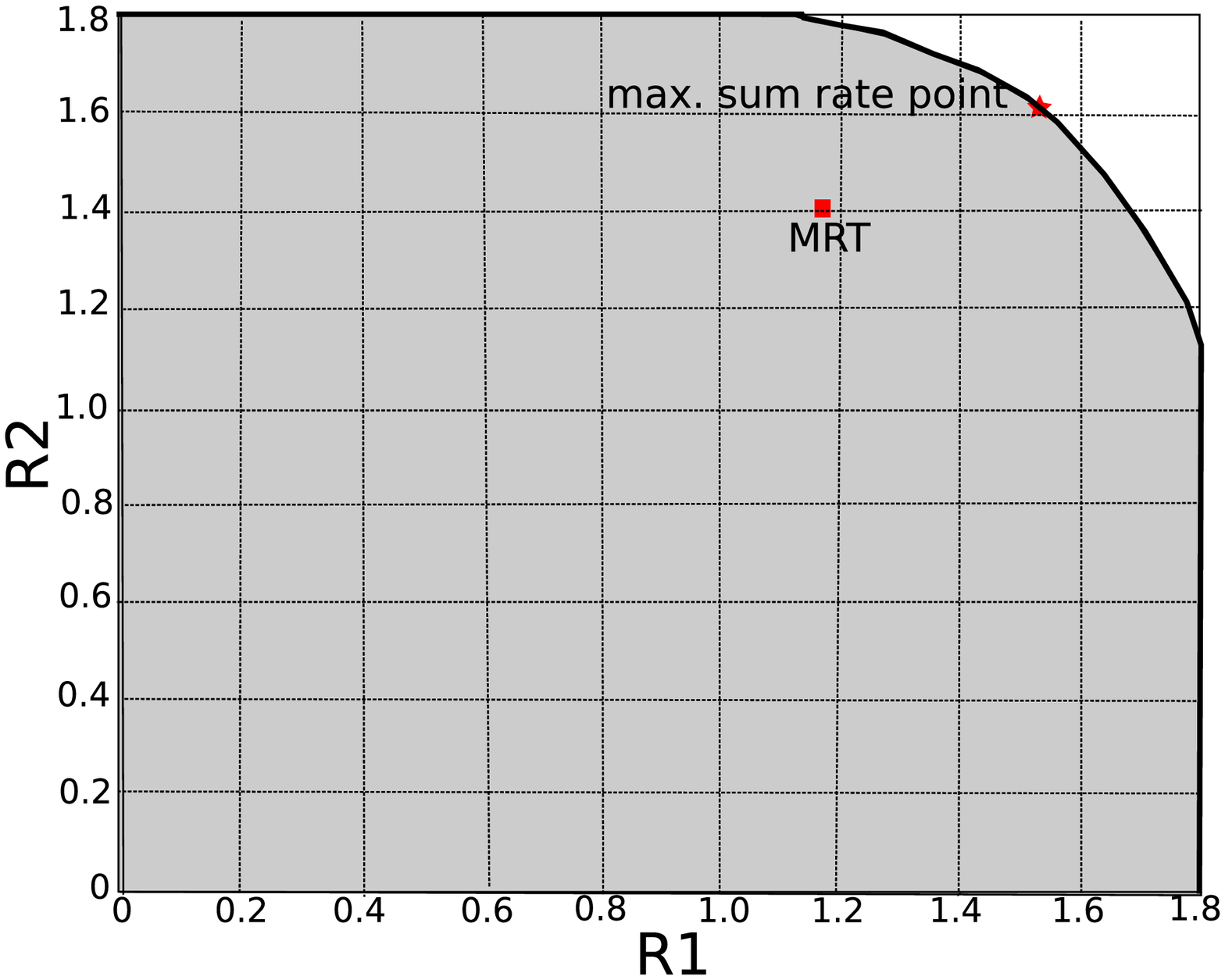}}
  \caption{Achievable rate region of different decoding structures.}
  \label{fig:ach_region}
\end{figure}

\subsection{Empirical Frequency of MRT strategies}\label{section:sim_empirical_freq}
In Fig. \ref{fig:rnd_mrtopt}, we demonstrate the variation of the empirical frequency of  MRT strategies in $\mathbf{R}^{nd}$ when SNR increases. In $\mathbf{R}^{nd}$, the empirical frequency of beamforming vectors pair $\left(\frac{\mathbf{h}_{21}}{\| \mathbf{h}_{21}\|}, \frac{\mathbf{h}_{22}}{\| \mathbf{h}_{22}\|}\right)$ is  50\% when SNR goes to infinity where as the empirical frequency of $\left(\frac{\mathbf{h}_{11}}{\| \mathbf{h}_{11}\|}, \frac{\mathbf{h}_{22}}{\| \mathbf{h}_{22}\|}\right)$ is  10\%. It shows that with $\mathbf{R}^{nd}$, in high SNR, Tx 1 should amplify interference signal by beamforming at the interference channel and Tx 2 should amplify desired signal by beamforming at the desired channel and by doing so, it achieves maximum sum rate on average 50\% of the channel realizations.

\begin{figure}
\begin{center}
\includegraphics[width=12cm, height=8cm,keepaspectratio]{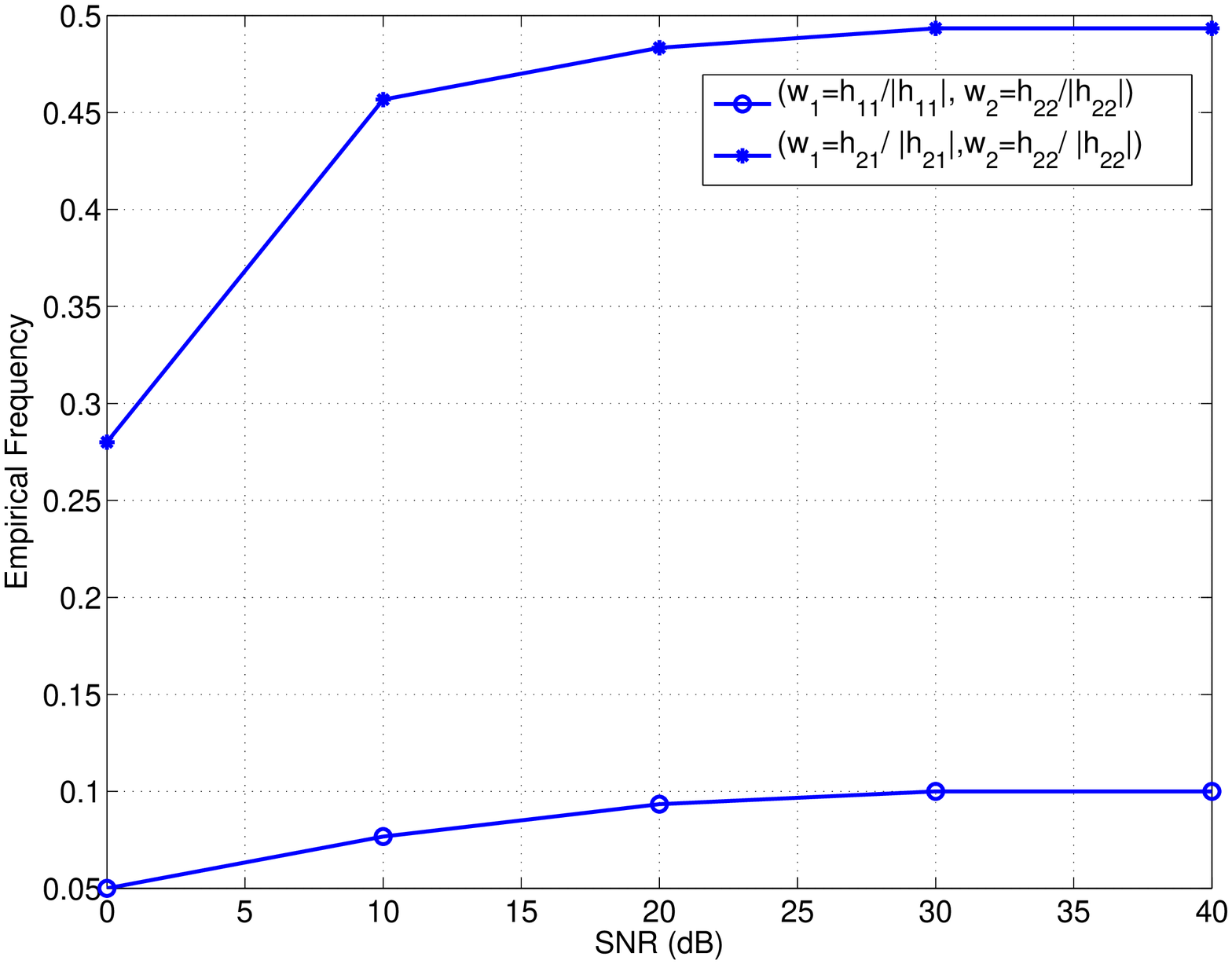}
\end{center}
\caption{MRT optimality in  $\mathbf{R}^{nd}$ when SNR increases: interference should be maximized half of the time when SNR goes to infinity. \label{fig:rnd_mrtopt}}
\end{figure}

In Fig. \ref{fig:rddmrtopt}, we compare the maximum sum rate and the rates achieved by MRT strategies in $\mathbf{R}^{dd}$: $(\mathbf{w}_1=\frac{\mathbf{h}_{11}}{\| \mathbf{h}_{11}\|},  \mathbf{w}_2=\frac{\mathbf{h}_{22}}{\| \mathbf{h}_{22}\|})$ and $(\mathbf{w}_1=\frac{\mathbf{h}_{21}}{\| \mathbf{h}_{21}\|}, \mathbf{w}_2=\frac{\mathbf{h}_{12}}{\| \mathbf{h}_{12}\|})$. In the x-axis, we plot the percentage of the average maximum sum rate whereas the y-axis is the percentage of channel realizations such that the MRT rates are less than a certain percentage of the maximum sum rate. Simulations show that the sum rate achieved by MRT strategies are less than 20\% to 80\% of the maximum sum rate.  Since maximizing direct channel gain or interference gain do not reach maximum sum rate,  it seems to imply that in $\mathbf{R}^{dd}$, interference should not be maximized or minimized and should be balanced instead.
\begin{figure}
\begin{center}
\includegraphics[width=12cm, height=8cm,keepaspectratio]{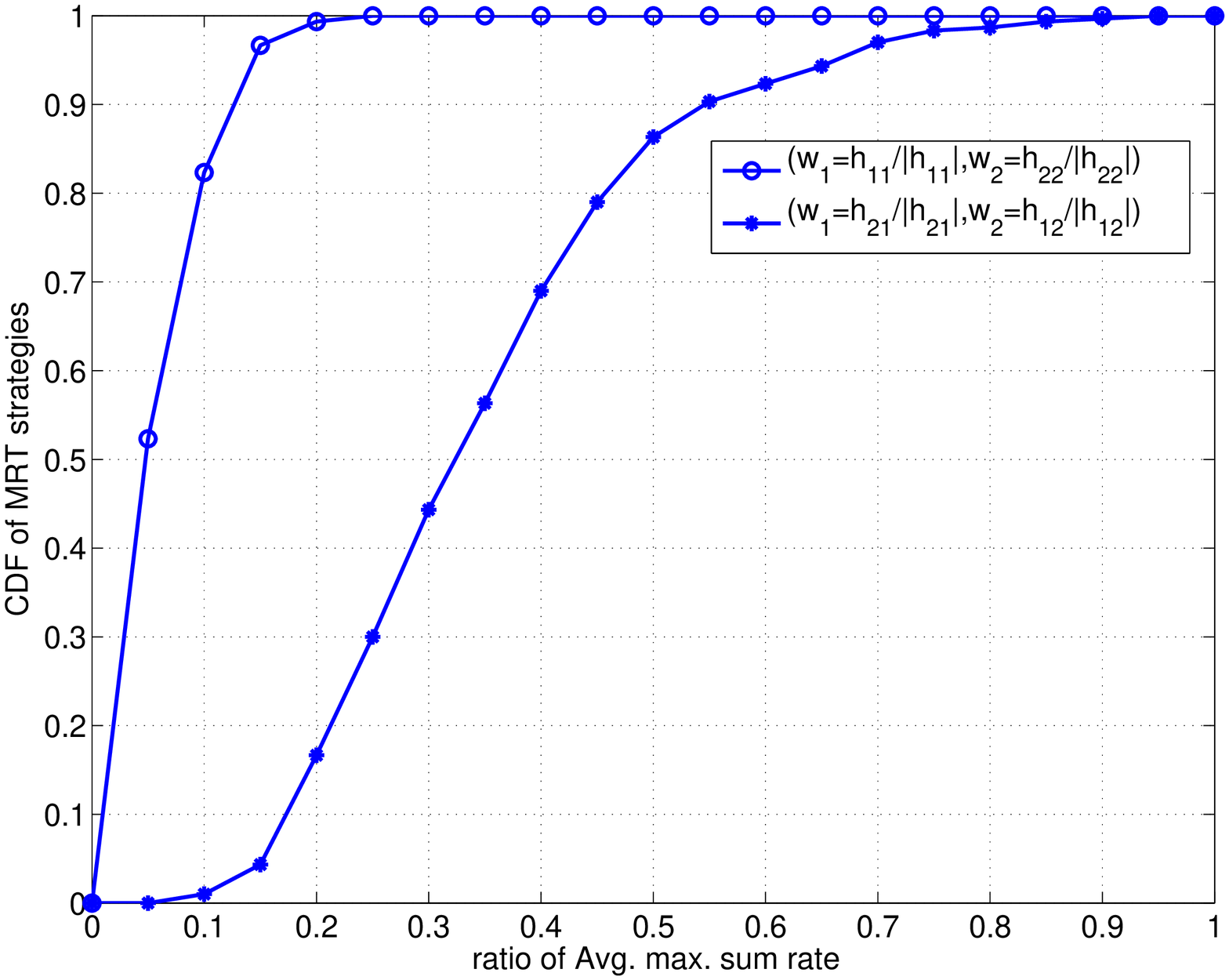}
\caption{Maximum sum rate plots in different channel realizations. MRT strategies are not sum rate optimal in $\mathbf{R}^{dd}$. \label{fig:rddmrtopt}}
\end{center}
\end{figure}

In Fig. \ref{fig:rate_diff}, we plotted the averaged difference between the maximum sum rate and MRT strategies in $\mathbf{R}^{nd}$ and $\mathbf{R}^{dd}$ respectively. From Fig. \ref{fig:rndmrtloss}, the rate difference between MRT strategies and the maximum sum rate decreases with SNR and reaches to about 2\% and 4\% at 40dB SNR. Thus, even if the empirical frequency is about 50\% and 10\%, MRT strategies only lose about 2\% and 4\% of the maximum sum rate of $\mathbf{R}^{nd}$. On the other hand, from Fig. \ref{fig:rddmrtloss}, it shows that the rate difference between MRT strategies and the maximum sum rate in $\mathbf{R}^{dd}$ increases with SNR and we conclude that MRT strategies are not sum rate optimal in $\mathbf{R}^{dd}$.

\begin{figure}
  \centering
  \subfloat[The averaged sum rate difference between maximum sum rate point and MRT strategies in $\mathbf{R}^{nd}$. \label{fig:rndmrtloss}]{\includegraphics[width=10cm, height=8cm,keepaspectratio]{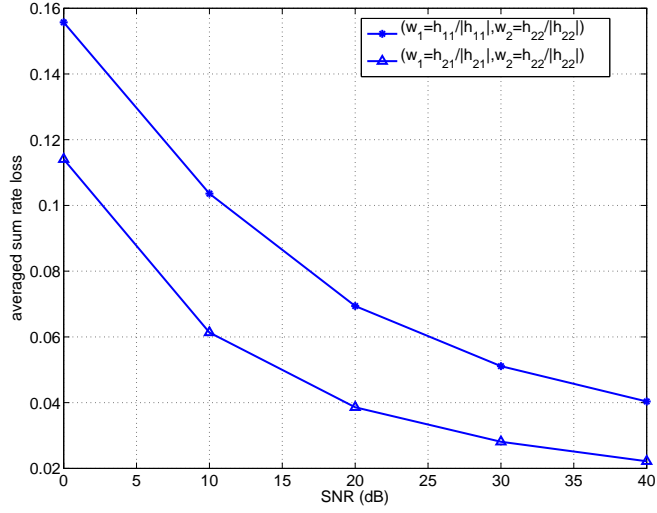}} \hspace{0.5cm} 
  \subfloat[The averaged sum rate difference between maximum sum rate point and MRT strategies in $\mathbf{R}^{dd}$. \label{fig:rddmrtloss}]{\includegraphics[width=10cm, height=8cm,keepaspectratio]{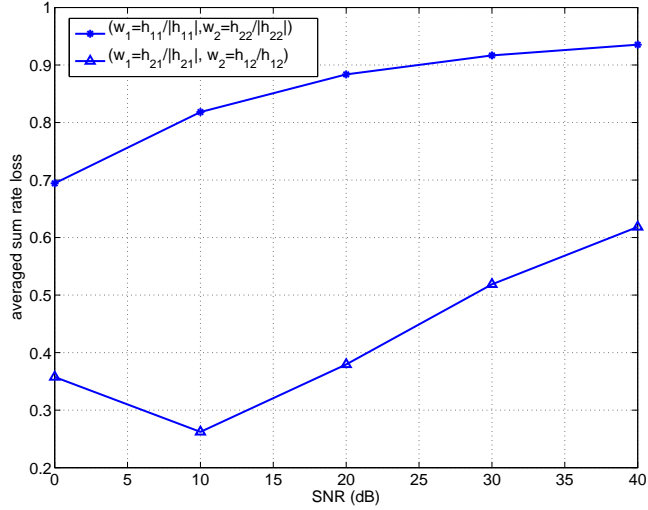}}
  \caption{The averaged sum rate difference between maximum sum rate point and MRT strategies in $\mathbf{R}^{nd}$ and $\mathbf{R}^{dd}$.}
  \label{fig:rate_diff}
\end{figure}

\subsection{Correlated Channels and Sum Rate optimal decoding structures}\label{section:sim_theta_ch}

In this section, we assume a symmetric channel \cite{Annapureddy} in which the direct channels, $\mathbf{h}_{ii}$, are i.i.d complex gaussian vector channels. The interference channel $\mathbf{h}_{ji}$ has a projection angle $\theta_i$ with the direct channel $\mathbf{h}_{ii}$:
\begin{equation}
|\mathbf{h}_{ji}^H \mathbf{h}_{ii}|= \|\mathbf{h}_{ii} \|  \|\mathbf{h}_{ji}\| \cos(\theta_i).
\end{equation}
Moreover, we define the signal to interference ratio SIR as
\begin{equation}
\text{SIR}=  \frac{\| \mathbf{h}_{ii}\|^2}{\| \mathbf{h}_{ji}\|^2}.
\end{equation} 

In Fig. \ref{fig:sumratealpha}, we compare the sum rate achieved by $\mathbf{R}^{nn}$, $\mathbf{R}^{dd}$ and TDMA. When the strengthen of interference channel increases and consequently SIR decreases, there is a transition from $\mathbf{R}^{nn}$ to TDMA to $\mathbf{R}^{dd}$: treating interference as noise is sum rate optimal in low interference regime and then time sharing should be performed and then decoding interference is sum rate optimal in high interference regime. When the angle between the interference channel increases to $\theta=0.15 \pi$, about 27 degrees, there is a direct transition between treating interference as noise and decoding interference. Thus, when the direct channel and the interference channel are more \emph{apart}, time sharing is not sum rate optimal and outperformed by the other decoding structures.

\begin{figure}
\begin{center}
\includegraphics[width=12cm, height=10cm,keepaspectratio]{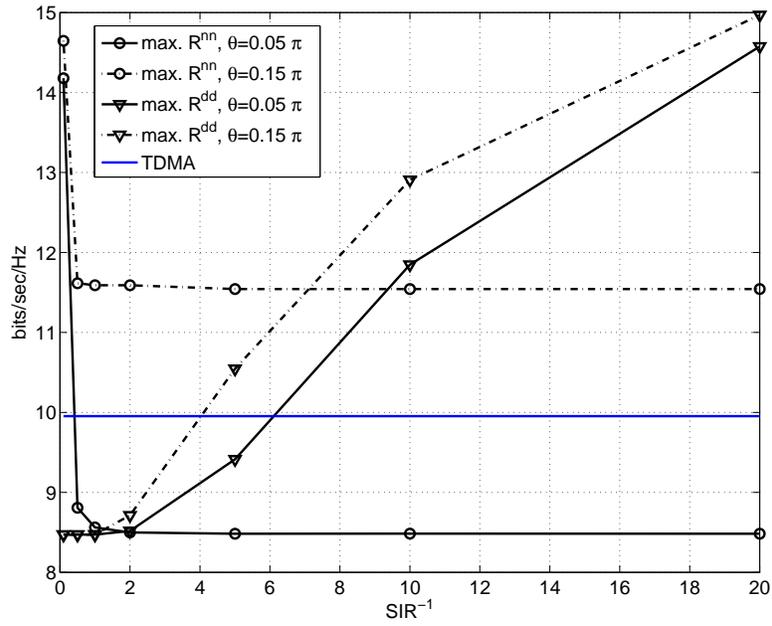}
\caption{The averaged sum rate of different decoding structure comparing to TDMA in symmetric channel when the strength of the interference channel increases. \label{fig:sumratealpha}}
\end{center}
\end{figure}

In Fig. \ref{fig:sumrate_snr}, we compare the maximum sum rate achieved in different decoding structures with TDMA when the system SNR increases. When the interference channel is as strong as the direct channel SIR$=1$, treating interference as noise is sum rate optimal in all SNR range. When the interference channel power increases SIR$^{-1}=5, 10, 20$, both Rxs. decoding interference is sum rate optimal in low SNR whereas one Rx treating interference as noise and one Rx decoding interference is sum rate optimal in high SNR.

\begin{figure}
\begin{center}
\includegraphics[width=12cm, height=10cm,keepaspectratio]{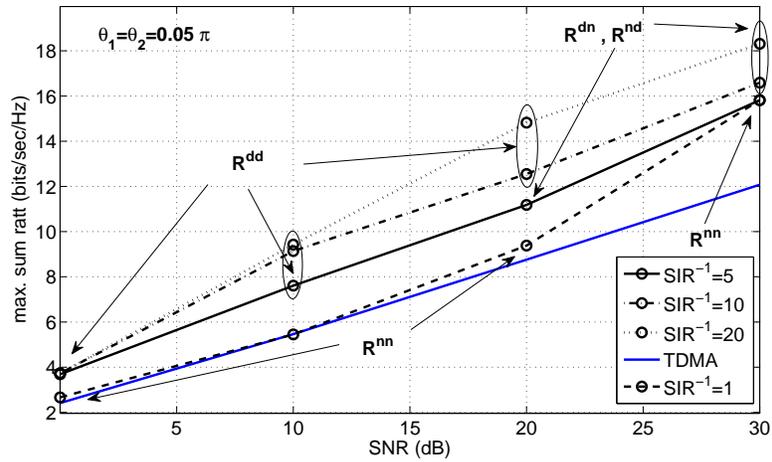}
\caption{The averaged sum rate of different decoding structure comparing to TDMA in symmetric channel when the system SNR increases. \label{fig:sumrate_snr}}
\end{center}
\end{figure}

\subsection{Performance of suboptimal algorithm}\label{sec:sim_algo}

For illustration purposes, we propose a very simple transmission strategy with only finite low number of beamforming vector choices. This transmission strategy is inspired by the parameterizaiton of each decoding structure. We propose to select only two beamforming vectors in each candidate set. Based on channel states information, we compare the sum rate performance of these eight beamforming vectors and choose the beamforming vector and the corresponding decoding structure which achieves the highest sum rate.
\begin{itemize}
\item NN region: $\left(\frac{\Pi_{21}^\perp \mathbf{h}_{11}}{\|\Pi_{21}^\perp \mathbf{h}_{11} \|}, \frac{\Pi_{12}^\perp \mathbf{h}_{22}}{\|\Pi_{12}^\perp \mathbf{h}_{22} \|} \right)$ and $\left( \frac{\mathbf{h}_{11}}{\|\mathbf{h}_{11} \|},   \frac{\mathbf{h}_{22}}{\|\mathbf{h}_{22} \|}\right)$.
\item ND region: $\left(\frac{\mathbf{h}_{21}}{\|\mathbf{h}_{21} \|}, \frac{\mathbf{h}_{22}}{\| \mathbf{h}_{22}\|} \right)$ and $\left(\frac{\mathbf{h}_{11}}{\|\mathbf{h}_{11} \|}, \frac{\mathbf{h}_{22}}{\| \mathbf{h}_{22}\|}\right)$.
\item DN region: $\left(\frac{\mathbf{h}_{11}}{\|\mathbf{h}_{11} \|}, \frac{\mathbf{h}_{12}}{\| \mathbf{h}_{12}\|}\right)$ and $\left(\frac{\mathbf{h}_{11}}{\|\mathbf{h}_{11} \|}, \frac{\mathbf{h}_{22}}{\| \mathbf{h}_{22}\|}\right)$.
\item DD region: $\left(\frac{\mathbf{h}_{21}}{\|\mathbf{h}_{21} \|}, \frac{\mathbf{h}_{12}}{\|\mathbf{h}_{12} \|}\right)$ , $\left(\frac{\mathbf{h}_{11}}{\|\mathbf{h}_{11} \|}, \frac{\mathbf{h}_{22}}{\|\mathbf{h}_{22} \|}\right)$, $\left(\frac{\mathbf{h}_{21}}{\|\mathbf{h}_{21} \|}, \frac{\mathbf{h}_{22}}{\| \mathbf{h}_{22}\|} \right)$ and  $\left(\frac{\mathbf{h}_{11}}{\|\mathbf{h}_{11} \|}, \frac{\mathbf{h}_{12}}{\| \mathbf{h}_{12}\|}\right)$ .
\item TDMA: a time sharing scheme between single user points and  $\mathbf{w}_i=\frac{\mathbf{h}_{ii}}{\| \mathbf{h}_{ii}\|}$.
\end{itemize}

In the NN region, we propose to choose either the interference nulling solution or the desired channel gain maximizing solution. It has been shown in previous literature that in MISO-IC-SUD and low SNR regime, maximizing desired channel gain is sum rate optimal whereas in high SNR regime, interference nulling is sum rate optimal. 

It was shown in SISO-IC \cite{Costa1985, Sato1981}that the DD scheme is sum rate optimal among all four decoding structures when the strength of both interference channels are strong and the DN or ND scheme is sum rate optimal when one interference channel is strong and the other interference channel is weak compared to the desired channel. The interference maximizing beamforming solution and the desired channel channel power beamforming solution are chosen in DN, ND and DD regions in the proposed algorithm to verify the analogy from SISO-IC to MISO-IC.

In Fig. \ref{fig:suboptalgo2}, we plotted the maximum sum rate achieved by different decoding structure and compare it with the proposed simple algorithm when SIR decreases. We see that when the interference is weak, it is sum rate optimal to treat interference as noise and when the interference strength increases, sum rate can be increased by allowing one of the Rx to decode interference and in the strong interference regime, both Rxs. decoding interference achieves the highest sum rate. Depending on the channel coefficients, TDMA may outperform $\mathbf{R}^{nn}$ and $\mathbf{R}^{dd}$ in the medium interference regime. Note that the computation of the maximum sum rate point is NP-hard. However, we see the the proposed simple algorithm achieves nice sum rate performance with only five choices of beamforming vectors.


\begin{figure}
\begin{center}
\includegraphics[width=12cm, height=10cm,keepaspectratio]{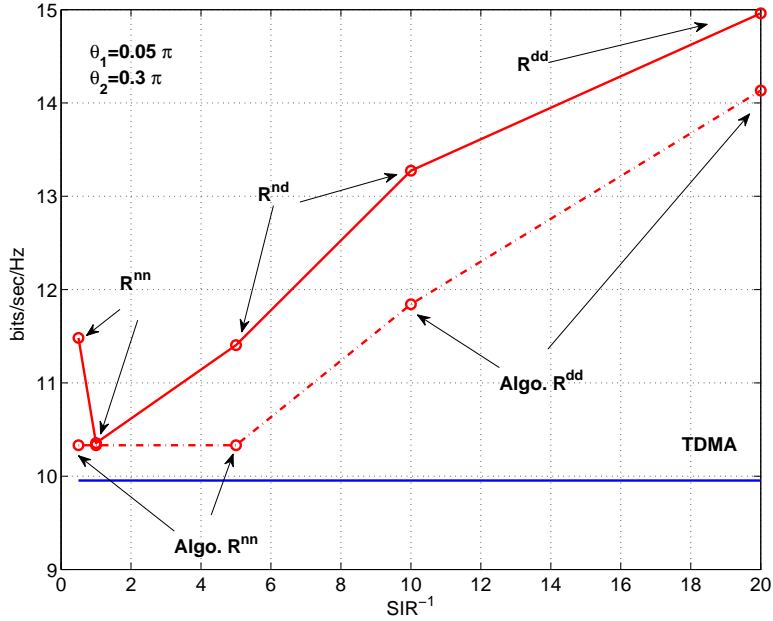}
\caption{Sum rate optimal decoding structures when the strength of interference channel increases. \label{fig:suboptalgo2}}
\end{center}
\end{figure}

\section{Conclusion and future work}
The interference decoding capability brings additional freedom to the Rxs. which either  decode interference or treat interference as noise. However, it is not trivial when Txs. should avoid interference and when should amplify interference power. To answer this question,  we formulate the achievable rate region for a two-user MISO-IC-SUD. We provide an in-depth analysis of the achievable rate region, as a union of different decoding structures. The Pareto boundary is then characterized in terms of both power allocation and beamforming vectors. As a direct application of the Pareto boundary characterization, we characterize the maximum sum rate points. The candidate set to the maximum sum rate point is a strict subset of the the candidate set of the Pareto boundary. With the maximum sum rate characterization, we derive the MRT optimality conditions which describe the conditions in which simple MRT strategies are sum rate optimal. We conclude the paper by providing simulation results which shed some insights into the question `` \emph{When is selfishness sum rate optimal?}''. Results show that MRT strategies have only 2\% to 4\% rate loss comparing to the maximum sum rate point in high SNR if one Rx decodes interference and the other treats interference as noise. On the other hand, MRT is not sum rate optimal if both Rxs. decode interference. In symmetric channels, there is a transition in decoding structure from treating interference as noise to TDMA to decoding interference at both Rxs. when the strength of interference increases.

The achievable rate problem for the $K$-user MISO-IC-IDC is not simple as the number of possible interference decoding order increases exponentially with the number of users. It is not easy to see which decoding order is better than the others and whether this decoding order in this decoding structure is better than other decoding structures. This extension to the $K$-user case is currently under preparation.


%

%
\section{Appendix}

\subsection{Proof of Thm. \ref{thm:transmit_cov}}\label{app:transmit_cov}
In this sequel, we are going to prove that the Pareto boundary in NN region, DN region and DD region are attained by rank 1 matrices.

\begin{Lem}\label{lem:rnn}
In the NN region, where Rx 1 and Rx 2 treat interference as noise, the Pareto boundary attaining transmit covariance matrices are rank one.
\end{Lem}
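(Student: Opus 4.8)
The plan is to exploit the fact that, in the NN region, each rate depends on each transmit covariance matrix only through two scalar quadratic forms. Concretely, $D_1(\mathbf{S}_1,\mathbf{S}_2)$ is strictly increasing in the desired power $\mathbf{h}_{11}^H \mathbf{S}_1 \mathbf{h}_{11}$ and strictly decreasing in the generated interference $\mathbf{h}_{12}^H \mathbf{S}_2 \mathbf{h}_{12}$, while $D_2$ is strictly increasing in $\mathbf{h}_{22}^H \mathbf{S}_2 \mathbf{h}_{22}$ and strictly decreasing in $\mathbf{h}_{21}^H \mathbf{S}_1 \mathbf{h}_{21}$. Hence, if $(\mathbf{S}_1,\mathbf{S}_2)$ attains a point of $\mathcal{B}(\mathcal{R}^{nn})$, then $\mathbf{S}_1$ must be an efficient solution of the two-objective problem of simultaneously maximizing $\mathbf{h}_{11}^H \mathbf{S}_1 \mathbf{h}_{11}$ and minimizing $\mathbf{h}_{21}^H \mathbf{S}_1 \mathbf{h}_{21}$ over $\mathbf{S}_1 \in \mathcal{S}$, with $\mathbf{S}_2$ held fixed; otherwise one could strictly increase $D_1$ without decreasing $D_2$ (or vice versa), contradicting Pareto optimality. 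The symmetric statement holds for $\mathbf{S}_2$. So the lemma reduces to showing that this two-objective power trade-off is attained by rank-one matrices.

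First I would show that the image set $\left\{ \left(\mathbf{h}_{11}^H \mathbf{S}_1 \mathbf{h}_{11},\; \mathbf{h}_{21}^H \mathbf{S}_1 \mathbf{h}_{21}\right) : \mathbf{S}_1 \in \mathcal{S} \right\}$ is convex and compact, being the image of the convex compact set $\mathcal{S}$ under a linear map. A supporting-hyperplane argument then yields, for each efficient trade-off point, a multiplier $\mu \geq 0$ such that the corresponding $\mathbf{S}_1$ maximizes the scalarized linear functional $tr\big(\mathbf{S}_1 \mathbf{M}\big)$ over $\mathcal{S}$, where $\mathbf{M} = \mathbf{h}_{11}\mathbf{h}_{11}^H - \mu\, \mathbf{h}_{21}\mathbf{h}_{21}^H$.

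Next I would solve this scalarized semidefinite program explicitly. Since $tr(\mathbf{S}_1 \mathbf{M}) \leq P_{max}\,\lambda_{\max}(\mathbf{M})$, with equality attained by the rank-one matrix $\mathbf{S}_1 = P_{max}\,\boldsymbol{\nu}(\mathbf{M})\boldsymbol{\nu}(\mathbf{M})^H$ when $\lambda_{\max}(\mathbf{M}) > 0$, and by $\mathbf{S}_1 = 0$ when $\lambda_{\max}(\mathbf{M}) \leq 0$, a rank-one maximizer always exists. Replacing the original $\mathbf{S}_1$ by this rank-one matrix does not decrease the desired power and does not increase the generated interference, so $D_1$ is not decreased and $D_2$ is not decreased; the boundary rate pair is therefore still attained. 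Applying the same construction to $\mathbf{S}_2$ completes the argument.

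The main obstacle I anticipate is making the scalarization step airtight rather than the SDP computation, which is routine. I must verify that every efficient point of the trade-off — including the extreme cases where the interference is driven to zero (so that the optimal $\mathbf{w}_1 \propto \Pi_{21}^\perp \mathbf{h}_{11}$) or where full power is not used — is captured by some $\mu \in [0,\infty]$, and I must handle the non-uniqueness of the linear maximizer: when $\mathbf{M}$ has a repeated dominant eigenvalue the optimal face of $\mathcal{S}$ is not a single point, but it still contains a rank-one vertex, which is all that is required. Establishing that the entire Pareto boundary (and not merely its scalarizable portion) is covered will rest on the convexity of the image set proved in the first step.
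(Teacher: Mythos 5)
Your overall strategy is sound, and it is genuinely different from the paper's treatment: the paper does not prove this lemma at all, it simply defers to the references \cite{Shang2009, Mochaourab2010}. Your argument is a self-contained reconstruction of the received-power-region reasoning those references rely on (and which the paper's own Lemma~\ref{lem:rank_one} later reuses): strict monotonicity of $D_1,D_2$ in the four quadratic forms, reduction to the bi-objective trade-off $\left(\max \mathbf{h}_{11}^H\mathbf{S}_1\mathbf{h}_{11},\ \min \mathbf{h}_{21}^H\mathbf{S}_1\mathbf{h}_{21}\right)$, convexity and compactness of the image set, and scalarization by a supporting hyperplane. All of those steps are correct.

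There is, however, one genuine gap, and your own anticipated patch for it does not close it. The step ``replacing the original $\mathbf{S}_1$ by this rank-one matrix does not decrease the desired power and does not increase the generated interference'' is false in general: two maximizers of $tr(\mathbf{S}\mathbf{M})$ achieve the same value of $g_{11}-\mu\, g_{21}$ but need not be comparable component-wise. Along the optimal face a larger $g_{11}$ is exactly compensated by a larger $g_{21}$ (for $\mu>0$), so swapping in a rank-one vertex of that face can strictly increase the interference power and lower $R_2$; you would then have exhibited a \emph{different} Pareto point, not shown that the original one is rank-one attainable. For the same reason, ``the optimal face still contains a rank-one vertex'' is not all that is required --- you need the particular efficient point $(g_{11}^*,g_{21}^*)$ you started from to be the image of a rank-one matrix. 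The repair is an eigenvalue-signature observation you did not make: for $\mu\in(0,\infty)$ and linearly independent $\mathbf{h}_{11},\mathbf{h}_{21}$, the matrix $\mathbf{M}=\mathbf{h}_{11}\mathbf{h}_{11}^H-\mu\,\mathbf{h}_{21}\mathbf{h}_{21}^H$ vanishes on $\mathrm{span}\{\mathbf{h}_{11},\mathbf{h}_{21}\}^\perp$ and its restriction to $\mathrm{span}\{\mathbf{h}_{11},\mathbf{h}_{21}\}$ has determinant $-\mu\,\|\mathbf{h}_{11}\|^2\,\|\Pi_{11}^\perp\mathbf{h}_{21}\|^2<0$, so $\mathbf{M}$ has exactly one positive eigenvalue and it is \emph{simple}. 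Hence the maximizer of $tr(\mathbf{S}\mathbf{M})$ over $\mathcal{S}$ is unique and equal to $P_{max}\,\boldsymbol{\nu}(\mathbf{M})\boldsymbol{\nu}(\mathbf{M})^H$: the optimal face is a singleton, so the efficient point itself is rank-one attained and no replacement is ever needed. Together with the two degenerate supporting directions ($\mu=0$, whose unique maximizer is the MRT solution, and the pure interference-minimization direction, where the only efficient zero-interference point is $\left(P_{max}\|\Pi_{21}^\perp\mathbf{h}_{11}\|^2,\,0\right)$, attained by the rank-one zero-forcing beamformer), this covers every efficient point and completes your proof, recovering exactly the conclusion the paper imports from \cite{Mochaourab2010}.
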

\begin{proof}
See reference \cite{Shang2009, Mochaourab2010}.
\end{proof}

To prove that the Pareto optimal transmit covariance matrices in DN and DD regions are rank one, we follow and modify slightly the proof from \cite{Mochaourab2010}. To facilitate the discussion, we define the following received channel power region for Tx $i$, assuming transmit power being one, 
\begin{equation}
\Phi_i^s= \left\{(\mathbf{h}_{ii}^H \mathbf{S}_i \mathbf{h}_{ii}, \mathbf{h}_{ji}^H \mathbf{S}_i \mathbf{h}_{ji}): \mathbf{S}_i \in \mathcal{S} \right\}.
\end{equation} 
For each received channel power $g_{ji}= \mathbf{h}_{ji}^H \mathbf{S}_i \mathbf{h}_{ji}$, $i,j=1,2$, there is a set $\mathcal{K}_{ji}^{\uparrow}$ such that  each utility $u_k$, $k \in \mathcal{K}_{ji}^{\uparrow}$,  is monotonically increasing with received channel power $g_{ji}$,
\begin{equation}
u_k(g_{ii}, g_{ji}, g_{ij}, g_{jj}) \leq u_k(g_{ii}, g_{ji}', g_{ij}, g_{jj})
\end{equation}  if  $g_{ji} \leq g_{ji}'$. Similarly, there is a set  $\mathcal{K}_{ji}^{\downarrow}$ such that each utility $u_k$, $k \in \mathcal{K}_{ji}^{\downarrow}$, is monotonically decreasing with received channel power $g_{ji}$. 

\begin{Lem}\label{lem:rank_one}
For an arbitrary fixed Tx $i$, if there exist a received channel power $g_{ji}$, $j=1,2$, such that the number of utilities that are monotonically increasing and decreasing with $g_{ji}$ are both larger than zero and the number of such received channel power $g_{ji}$ is not larger than one, e.g.
\begin{equation}\label{eqt:assume}
\mathcal{N}_i =\left\{k:  \left| \mathcal{K}_{ki}^{\uparrow} \right| >0 , \left| \mathcal{K}_{ki}^{\downarrow} \right|> 0 \right\}, \| \mathcal{N}_i\| \leq 1, \hspace{1cm} i=1,2,
\end{equation} 
then the Pareto optimal transmit covariance matrices with respect to these utilities are rank one and attain the boundary of the corresponding received channel power regions $\Phi_i^s$.
\end{Lem}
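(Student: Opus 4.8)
The plan is to reduce the matrix optimization to a two-dimensional geometric problem in the received channel power region $\Phi_i^s$ and then invoke linearity. First I would fix the other transmitter's covariance $\mathbf{S}_j$, so that $g_{ij}$ and $g_{jj}$ become constants; then every utility $u_k$ depends on $\mathbf{S}_i$ only through the pair $(g_{ii},g_{ji})\in\Phi_i^s$. The crucial structural observation is that the map $\mathbf{S}_i \mapsto (g_{ii},g_{ji}) = \big(tr(\mathbf{S}_i\mathbf{h}_{ii}\mathbf{h}_{ii}^H),\, tr(\mathbf{S}_i\mathbf{h}_{ji}\mathbf{h}_{ji}^H)\big)$ is linear in $\mathbf{S}_i$, while the feasible set $\mathcal{S}$ is convex and compact; hence $\Phi_i^s$ is convex and compact. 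Any boundary point of $\Phi_i^s$ is a maximizer of some supporting functional $\alpha g_{ii}+\beta g_{ji} = tr\big(\mathbf{S}_i(\alpha\mathbf{h}_{ii}\mathbf{h}_{ii}^H+\beta\mathbf{h}_{ji}\mathbf{h}_{ji}^H)\big)$, and maximizing a linear functional $tr(\mathbf{S}_i\mathbf{M})$ over the trace-constrained positive semidefinite cone is achieved by concentrating all power on $\boldsymbol{\nu}(\mathbf{M})$, which yields a rank one $\mathbf{S}_i$. This establishes that the boundary of $\Phi_i^s$ is attained by rank one matrices.

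The second step is to show that any Pareto optimal pair $(g_{ii},g_{ji})$ must lie on that boundary. Here the hypothesis $\|\mathcal{N}_i\|\leq 1$ is exactly what is needed: it guarantees that at least one of the two received channel powers, say $g_{ki}$, is \emph{non-conflicting}, in the sense that every utility depending on $g_{ki}$ is monotone in the same direction. Suppose, for contradiction, that a Pareto optimal pair lies in the interior of $\Phi_i^s$. Since an interior point has a full two-dimensional neighbourhood inside $\Phi_i^s$, I can perturb $g_{ki}$ in the favourable direction while holding the other coordinate fixed; this strictly improves every utility that depends on $g_{ki}$ and leaves the remaining utilities unchanged, contradicting Pareto optimality. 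Therefore the optimum lies on the boundary of $\Phi_i^s$.

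Combining the two steps, a Pareto optimal covariance $\mathbf{S}_i$ realizes a point on the boundary of $\Phi_i^s$, and that same point is realized by a rank one matrix delivering identical channel powers and hence identical utility values; replacing $\mathbf{S}_i$ by this rank one matrix preserves Pareto optimality. Running the argument symmetrically for both transmitters proves the claim. I expect the main obstacle to be the careful bookkeeping in the interior-perturbation step: one must verify that the non-conflicting coordinate can always be moved in a feasible direction, handle the degenerate case where a utility is independent of $g_{ki}$ (so the improvement is only weak rather than strict), and confirm that the supporting functional selecting the relevant portion of the boundary corresponds to an $\mathbf{M}$ with a positive dominant eigenvalue, so that $\boldsymbol{\nu}(\mathbf{M})$ indeed gives a genuine rank one optimizer rather than the trivial $\mathbf{S}_i=0$.
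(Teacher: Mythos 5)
Your overall strategy is the same as the paper's: fix $\mathbf{S}_j$, reduce everything to the two-dimensional region $\Phi_i^s$, use the hypothesis $\|\mathcal{N}_i\|\leq 1$ to force Pareto optimal power pairs onto the boundary of $\Phi_i^s$, and then replace the covariance by a rank one matrix realizing the same boundary point. Your interior-perturbation argument for the boundary step is essentially the paper's monotone-improvement case analysis in a cleaner, unified form, and the degenerate case you flag (no utility depends on the non-conflicting coordinate $g_{ki}$) is harmless: one simply slides $g_{ki}$ to the boundary with every utility unchanged, which is all the lemma needs, since it asserts that the Pareto boundary is \emph{attained} by rank one matrices.

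The genuine gap is in your rank-one step. The supporting-hyperplane argument shows that each supporting functional $tr(\mathbf{S}_i\mathbf{M})$, $\mathbf{M}=\alpha\mathbf{h}_{ii}\mathbf{h}_{ii}^H+\beta\mathbf{h}_{ji}\mathbf{h}_{ji}^H$, admits \emph{some} rank one (or zero) maximizer; it does not show that the particular boundary point realized by the given Pareto optimal $\mathbf{S}_i$ is itself realized by a rank one matrix. The two claims differ exactly when the supporting face is a flat segment, and such faces exist here and are Pareto-relevant. Concretely, for non-parallel channels the entire zero-interference segment $\left\{ (g,0)\,:\, 0 \leq g \leq P_{max}\|\Pi_{ji}^{\perp}\mathbf{h}_{ii}\|^{2} \right\}$ lies on the boundary of $\Phi_i^s$; the only supporting functional along its relative interior is $-g_{ji}$, i.e.\ $\mathbf{M}=-\mathbf{h}_{ji}\mathbf{h}_{ji}^H$, whose largest eigenvalue is $0$ with an $(N-1)$-dimensional eigenspace, so ``concentrating all power on $\boldsymbol{\nu}(\mathbf{M})$'' returns one particular point of that face (or the trivial $\mathbf{S}_i=0$), not the prescribed point $(g,0)$. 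These points are precisely the zero-forcing / reduced-power solutions (the $\lambda_i=0$ members of $\mathcal{W}_i$, point A in the paper's received-power figures), so they cannot be discarded as irrelevant. Since your final step requires a rank one matrix delivering \emph{identical} channel powers to those of $\mathbf{S}_i$, the proof does not close as written — this is exactly the obstacle you flagged at the end, and it is not merely a bookkeeping issue. What fills the hole is the stronger statement that \emph{every} point of $\Phi_i^s$ (not just one point per face) is attained by a matrix of rank at most one: this follows from convexity of the joint numerical range of the two Hermitian forms over unit vectors (a Toeplitz--Hausdorff type result), combined with the spectral decomposition $\mathbf{S}_i=\sum_k \lambda_k\mathbf{u}_k\mathbf{u}_k^H$, and it is precisely the content of the lemma of \cite{Mochaourab2010} that the paper cites at this step rather than reproving.
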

\begin{proof}
We proceed by separating the cases where the some utilities are increasing or decreasing with the received channel power $g_{ji}$, for $i,j=1,2$.
\begin{itemize} 
\item If $\left|\mathcal{K}_{ji}^{\uparrow} \right| \neq 0$ and $\left|\mathcal{K}_{ji}^{\downarrow} \right| = 0$, then the Pareto optimal transmit covariance matrix $\mathbf{S}^*$ attains the boundary of the received channel power region $\Phi_i^s$. It is because for each utility $u_k(g_{ii}, g_{ji}, g_{ij}, g_{jj})$ such that $k \in \mathcal{K}_{ji}^{\uparrow}$, if $g_{ji}$ is not on the boundary of $\Phi_i^s$, we can choose a transmit covariance matrix $\mathbf{S}^*$ such that $g_{ji}^*=\mathbf{h}_{ji}^H \mathbf{S}^* \mathbf{h}_{ji} \geq g_{ji}$ which increases the value of the utility $u_k(g_{ii}, g_{ji}, g_{ij}, g_{jj})\leq u_k(g_{ii}, g_{ji}^*, g_{ij}, g_{jj})$. Since $\left| \mathcal{K}_{ji}^{\downarrow}\right|=0$, no utilities are decreased by modifying the transmit covariance matrix from $\mathbf{S}$ to $\mathbf{S}^*$.
\item If $\left|\mathcal{K}_{ji}^{\downarrow} \right| \neq 0$ and $\left|\mathcal{K}_{ji}^{\uparrow} \right| = 0$, then the Pareto optimal transmit covariance matrix $\mathbf{S}^*$ attains the boundary of the received channel power region $\Phi_i^s$. It is because for each utility $u_k(g_{ii}, g_{ji}, g_{ij}, g_{jj})$ such that $k \in \mathcal{K}_{ji}^{\uparrow}$, if $g_{ji}$ is not on the boundary of $\Phi_i^s$, we can choose a transmit covariance matrix $\mathbf{S}^*$ such that $g_{ji}^*=\mathbf{h}_{ji}^H \mathbf{S}^* \mathbf{h}_{ji} \leq g_{ji}$ which increases the value of the utility $u_k(g_{ii}, g_{ji}, g_{ij}, g_{jj})\leq u_k(g_{ii}, g_{ji}^*, g_{ij}, g_{jj})$. Since $\left| \mathcal{K}_{ji}^{\uparrow}\right|=0$, no utilities are decreased by modifying the transmit covariance matrix from $\mathbf{S}$ to $\mathbf{S}^*$.
\item  According to the assumption \eqref{eqt:assume}, for each Tx $i$, there exist at most one $j$ such that $\left|\mathcal{K}_{ji}^{\downarrow} \right| \neq 0$ and $\left|\mathcal{K}_{ji}^{\uparrow} \right| \neq 0$. In this case, the received channel power with transmit power, $g_{ji} P_i$ has value between 0 and $\| \mathbf{h}_{ji}\|^2 P_{max}$ which can be achieved by setting $g_{ji}$ to be on the boundary of $\Phi^s_i$ and letting $P_i$ vary from 0 to $P_{max}$. Hence, we can put the Pareto optimal transmit covariance matrix $\mathbf{S}^*$ on the boundary of the received channel power region $\Phi_i^s$ by allowing power control $0 \leq P_i \leq P_{max}$.
\end{itemize}
From \cite[ Lemma 3]{Mochaourab2010}, the received channel powers on the boundary of the received channel power region $\Psi_i^s$ are attained by rank one transmit covariance matrices, which completes the proof.
\end{proof}

In the following, we are going to show that the Pareto optimality problem in DN region and DD region respectively satisfy the assumption in Lemma \ref{lem:rank_one} and therefore the Pareto optimal transmit covariance matrices are rank one.

\subsubsection{In the DN region}\label{app:rdn_rankone}
Rx 1 decodes interference and Rx 2 treats interference as noise and the Pareto boundary attaining transmit covariance matrices are rank one. To see this, we  recall the achievable rates in $\mathbf{R}^{dn}$: 
\begin{equation}
\begin{aligned}
u_1(g_{11},g_{21},g_{12},g_{22})&=\log_2(1+g_{11}P_1)\\
u_2(g_{11},g_{21},g_{12},g_{22})&=  \min \left( \log_2 \left( 1+ \frac{g_{12}P_2}{g_{11} P_1+1} \right), \log_2 \left( 1+ \frac{g_{22}P_2}{1+ g_{21}P_1}\right)\right).
\end{aligned}
\end{equation} and therefore we have
\begin{equation}
\begin{aligned}
\mathcal{K}_{11}^{\uparrow}&= \{1\}; \hspace{1cm} \mathcal{K}_{11}^{\downarrow}= \{2\}; \hspace{1cm}
\mathcal{K}_{21}^{\uparrow}= \{\emptyset\}; \hspace{1cm} \mathcal{K}_{21}^{\downarrow}=\{ 2\};\\
\mathcal{K}_{12}^{\uparrow}&= \{2\}; \hspace{1cm} \mathcal{K}_{12}^{\downarrow}=\{\emptyset\}; \hspace{1cm}
\mathcal{K}_{22}^{\uparrow}= \{2\}; \hspace{1cm} \mathcal{K}_{22}^{\downarrow}= \{\emptyset\}.
\end{aligned}
\end{equation} Since the assumption \eqref{eqt:assume} is satisfied, by Lemma \ref{lem:rank_one}, we have the Pareto optimal transmit covariance matrices in DN region as rank one.

\subsubsection{In DD region}\label{app:rdd_rankone}
 Rx 1 and 2 decode interference and the Pareto boundary attaining transmit covariance matrices are rank one. To see this, we recall the achivable rates in $\mathbf{R}^{dd}$: 
\begin{equation}
\begin{aligned}
u_1(g_{11},g_{21},g_{12},g_{22}) &= \min\left( \log_2\left( 1+ g_{11}\right), \log_2\left( 1+ \frac{g_{21}}{1+g_{22}}\right)\right)\\
u_2(g_{11},g_{21},g_{12},g_{22}) &=  \min\left( \log_2\left( 1+ g_{22}\right), \log_2\left( 1+ \frac{g_{12}}{1+ g_{11}}\right)\right).
\end{aligned}
\end{equation}  Hence we have
\begin{equation}
\begin{aligned}
\mathcal{K}_{11}^{\uparrow}&= \{1\}; \hspace{1cm} \mathcal{K}_{11}^{\downarrow}= \{2\}; \hspace{1cm}
\mathcal{K}_{21}^{\uparrow}&= \{ 1\}; \hspace{1cm} \mathcal{K}_{21}^{\downarrow}=\{ \emptyset\};\\
\mathcal{K}_{12}^{\uparrow}&= \{2\}; \hspace{1cm} \mathcal{K}_{12}^{\downarrow}=\{\emptyset\}; \hspace{1cm}
\mathcal{K}_{22}^{\uparrow}&= \{2\}; \hspace{1cm} \mathcal{K}_{22}^{\downarrow}= \{ 1\}.
\end{aligned}
\end{equation} Since the assumption \eqref{eqt:assume} is satisfied, by Lemma \ref{lem:rank_one}, we have the Pareto optimal transmit covariance matrices in DD region are rank one.
\subsection{Proof of Thm. \ref{thm:pareto_rnd} }\label{app:rnd_pareto}

 The Pareto optimal beamforming vectors can be presented as the solutions of the following optimization problems, for some feasible value $r_i$, $i,j=1,2, j \neq i$,
\begin{equation}\label{eqt:rnd_pareto_f1}
\begin{aligned}
\max_{\mathbf{w}_1,\mathbf{w}_2, P_1, P_2} & \hspace{1cm} R_j(\mathbf{w}_1, \mathbf{w}_2, P_1, P_2) \\
\text{subject to} & \hspace{1cm} R_i(\mathbf{w}_1, \mathbf{w}_2, P_1, P_2) \geq r_i,\\
 & \hspace{1cm} \|\mathbf{w}_1\|=1, \|\mathbf{w}_2\|=1,\\
& \hspace{1cm} 0 \leq P_1 \leq P_{max}, 0 \leq P_2 \leq P_{max}.
\end{aligned}
\end{equation} With different decoding structure, the achievable rates $R_1, R_2$ are substituted with different rate expressions. However, both the ND and DD case resemble closely to the maximization problem of channel power  and the optimal solution can be characterized in a linear combination of some specific vectors.

\begin{Lem}\label{lem:pow_gain_gen}
Define a maximization problem of channel power in the following:
\begin{equation}\label{eqt:gen_form}
\begin{aligned}
\max_{\mathbf{w}} & \hspace{1cm} t\\
\text{subject to} & \hspace{1cm} |\mathbf{u}^H \mathbf{w}|^2 \geq ut, |\mathbf{v}^H \mathbf{w}|^2 \geq vt\\
& \hspace{1cm} \| \mathbf{w}\|^2 \leq 1
\end{aligned}
\end{equation} for some arbitrary fixed scalars $u \geq 0 $ and $v \geq 0$. The solutions $\mathbf{w}^*$ must be in the following set $\mathcal{W}$,
\begin{equation}\label{eqt:opt_gen_form}
\mathcal{W}=\bigg\{ \mathbf{w} \in \mathbb{C}^{N \times 1}: \mathbf{w}= \sqrt{\mu}\frac{\Pi_{u} \mathbf{v}}{\| \Pi_{u} \mathbf{v}\|} +  \sqrt{1-\mu}\frac{\Pi_{u}^\perp \mathbf{v}}{\| \Pi_{u}^\perp \mathbf{v}\|}, 0 \leq \mu \leq 1\bigg\}.
\end{equation}
\end{Lem}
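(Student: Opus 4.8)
The plan is to reduce the problem to the two-dimensional subspace spanned by $\mathbf{u}$ and $\mathbf{v}$ and then pin down the phases of the two coordinates. First I would observe that the objective $t$ and both constraints depend on $\mathbf{w}$ only through the three scalars $|\mathbf{u}^H\mathbf{w}|^2$, $|\mathbf{v}^H\mathbf{w}|^2$ and $\|\mathbf{w}\|^2$. Decomposing $\mathbf{w}=\mathbf{w}_\parallel+\mathbf{w}_\perp$ with $\mathbf{w}_\parallel \in \mathrm{span}\{\mathbf{u},\mathbf{v}\}$ and $\mathbf{w}_\perp$ orthogonal to that plane, one has $\mathbf{u}^H\mathbf{w}=\mathbf{u}^H\mathbf{w}_\parallel$, $\mathbf{v}^H\mathbf{w}=\mathbf{v}^H\mathbf{w}_\parallel$, while $\|\mathbf{w}\|^2=\|\mathbf{w}_\parallel\|^2+\|\mathbf{w}_\perp\|^2 \geq \|\mathbf{w}_\parallel\|^2$. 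Hence replacing $\mathbf{w}$ by $\mathbf{w}_\parallel$ keeps both inner products intact while relaxing the norm constraint, so any maximizer may be assumed to lie in $\mathrm{span}\{\mathbf{u},\mathbf{v}\}=\mathrm{span}\{\Pi_{u}\mathbf{v},\Pi_{u}^\perp\mathbf{v}\}$. Note $\{\Pi_{u}\mathbf{v}/\|\Pi_{u}\mathbf{v}\|,\ \Pi_{u}^\perp\mathbf{v}/\|\Pi_{u}^\perp\mathbf{v}\|\}$ is an orthonormal basis of this plane.

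Next I would write $\mathbf{w}=a\,\Pi_{u}\mathbf{v}/\|\Pi_{u}\mathbf{v}\| + b\,\Pi_{u}^\perp\mathbf{v}/\|\Pi_{u}^\perp\mathbf{v}\|$ with $a,b\in\mathbb{C}$, so that $\|\mathbf{w}\|^2=|a|^2+|b|^2$. Because $\Pi_{u}\mathbf{v}$ is parallel to $\mathbf{u}$ and $\Pi_{u}^\perp\mathbf{v}$ is orthogonal to $\mathbf{u}$, a short computation gives $|\mathbf{u}^H\mathbf{w}|^2=|a|^2\|\mathbf{u}\|^2$, which depends only on $|a|$. Since $\Pi_{u}\mathbf{v}$ and $\Pi_{u}^\perp\mathbf{v}$ are mutually orthogonal, the cross terms in $\mathbf{v}^H\mathbf{w}$ collapse and one obtains $\mathbf{v}^H\mathbf{w}=a\|\Pi_{u}\mathbf{v}\| + b\|\Pi_{u}^\perp\mathbf{v}\|$, so that $|\mathbf{v}^H\mathbf{w}|^2=\bigl|\,a\|\Pi_{u}\mathbf{v}\| + b\|\Pi_{u}^\perp\mathbf{v}\|\,\bigr|^2$. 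At this point the whole problem is expressed through the two complex numbers $a,b$.

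The final step fixes the phases and power. A global phase rotation of $\mathbf{w}$ changes neither $|\mathbf{u}^H\mathbf{w}|^2$ nor $|\mathbf{v}^H\mathbf{w}|^2$, so I may take $a\geq 0$ real. Writing $t(\mathbf{w})=\min\{|\mathbf{u}^H\mathbf{w}|^2/u,\ |\mathbf{v}^H\mathbf{w}|^2/v\}$, the objective is nondecreasing in $|\mathbf{v}^H\mathbf{w}|^2$; for fixed magnitudes $|a|,|b|$, the quantity $|\mathbf{v}^H\mathbf{w}|^2$ is maximized over the phase of $b$ precisely when $b$ is real and nonnegative (aligned with $a$, using that $\|\Pi_{u}\mathbf{v}\|,\|\Pi_{u}^\perp\mathbf{v}\|>0$), and this alignment can never decrease $t$. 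Since $t$ is then nondecreasing in both $|a|$ and $|b|$, the norm constraint is active at the optimum, $|a|^2+|b|^2=1$; setting $|a|^2=\mu$ and $|b|^2=1-\mu$ delivers exactly the parameterized form defining $\mathcal{W}$. The degenerate cases $u=0$, $v=0$, or $\mathbf{u}\parallel\mathbf{v}$ (where $\Pi_{u}^\perp\mathbf{v}=0$) correspond to the endpoints $\mu=0$ or $\mu=1$ and are covered. The step I expect to require the most care is this phase-alignment argument: one must verify rigorously that, because the objective is the minimum of two terms and only one of them depends on the phase of $b$, increasing $|\mathbf{v}^H\mathbf{w}|^2$ can never hurt $t$, which is precisely what licenses restricting to real nonnegative coordinates and hence to $\mathcal{W}$.
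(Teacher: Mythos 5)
Your proof is correct, but it reaches the crucial reduction by a genuinely different route from the paper. The paper argues via the Lagrangian: it sets $\partial L/\partial \mathbf{w}^H = 0$ and reads off from the stationarity equation $\lambda_1\mathbf{u}\mathbf{u}^H\mathbf{w}+\lambda_2\mathbf{v}\mathbf{v}^H\mathbf{w}=\lambda_3\mathbf{w}$ that an optimizer lies in $\mathrm{span}\{\mathbf{u},\mathbf{v}\}$, and only then performs the basis change and phase alignment. You obtain the same reduction with no optimality conditions at all, by observing that projecting $\mathbf{w}$ onto $\mathrm{span}\{\mathbf{u},\mathbf{v}\}$ preserves both quadratic forms while releasing norm. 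This is more elementary and tighter: the problem is non-convex, so stationarity is only a necessary condition at regular points, and the paper's conclusion tacitly needs $\lambda_3\neq 0$; your monotonicity argument needs no such caveats and also shows directly that the norm constraint is active at an optimum. Your phase-alignment step (fix a global phase so $a\geq 0$, then take $b\geq 0$ because only $|\mathbf{v}^H\mathbf{w}|^2$ depends on its phase and a min objective can only improve) is essentially the paper's argument that $z_1$ may be taken real, so the second half of the two proofs coincides. A further point in your favor: you work throughout in the basis $\left\{\Pi_{u}\mathbf{v}/\|\Pi_{u}\mathbf{v}\|,\ \Pi_{u}^\perp\mathbf{v}/\|\Pi_{u}^\perp\mathbf{v}\|\right\}$, which is exactly how $\mathcal{W}$ in \eqref{eqt:opt_gen_form} is stated, whereas the paper's proof switches midstream to $\left\{\Pi_{v}\mathbf{u}/\|\Pi_{v}\mathbf{u}\|,\ \Pi_{v}^\perp\mathbf{u}/\|\Pi_{v}^\perp\mathbf{u}\|\right\}$ (roles of $\mathbf{u}$ and $\mathbf{v}$ swapped), and these two quarter-arcs are not the same set, so your version is the one that literally establishes the lemma as stated. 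What the Lagrangian route buys in exchange is a template that scales to more constraints: the paper reuses it in Appendix \ref{app:ParetoRdd}, where three quadratic constraints appear and the relevant subspace is read off the stationarity equation rather than identified by inspection.
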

\begin{Cor}\label{cor:formulation}
Reversing the signs of the inequality in \eqref{eqt:gen_form} to $ |\mathbf{u}^H \mathbf{w}|^2 \leq ut, |\mathbf{v}^H \mathbf{w}|^2 \leq vt$ does not change the characterization of the solutions in \eqref{eqt:opt_gen_form}.
\end{Cor}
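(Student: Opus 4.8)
The plan is to re-run the argument behind Lemma \ref{lem:pow_gain_gen}, observing that the direction of the two power inequalities never enters the part of that proof which pins the optimizer down to the set $\mathcal{W}$. First I would note that both channel powers $|\mathbf{u}^H\mathbf{w}|^2$ and $|\mathbf{v}^H\mathbf{w}|^2$ depend on $\mathbf{w}$ only through its projection onto the two-dimensional subspace $\mathcal{U}=\mathrm{span}\{\mathbf{u},\mathbf{v}\}$: writing $\mathbf{w}=\mathbf{w}_0+\mathbf{w}_\perp$ with $\mathbf{w}_0\in\mathcal{U}$ and $\mathbf{w}_\perp\perp\mathcal{U}$, we have $\mathbf{u}^H\mathbf{w}=\mathbf{u}^H\mathbf{w}_0$ and $\mathbf{v}^H\mathbf{w}=\mathbf{v}^H\mathbf{w}_0$, while $\|\mathbf{w}\|^2=\|\mathbf{w}_0\|^2+\|\mathbf{w}_\perp\|^2$. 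Hence any norm placed on $\mathbf{w}_\perp$ leaves both power values, and therefore the feasibility of either a $\leq$ or a $\geq$ constraint, untouched while consuming the budget $\|\mathbf{w}\|^2\le 1$ that could otherwise be spent inside $\mathcal{U}$ to push the optimized power to its extreme. Since this reallocation argument is insensitive to the sign of the inequalities, a monotone objective forces $\mathbf{w}_\perp=0$, i.e. $\mathbf{w}^\ast\in\mathcal{U}$, exactly as in the lemma.

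The second step is the phase-alignment and parametrization inside $\mathcal{U}$, which is again verbatim the same. Any $\mathbf{w}_0\in\mathcal{U}$ can be written as $\mathbf{w}_0=\alpha\frac{\Pi_u\mathbf{v}}{\|\Pi_u\mathbf{v}\|}+\beta\frac{\Pi_u^\perp\mathbf{v}}{\|\Pi_u^\perp\mathbf{v}\|}$ with $\alpha,\beta\in\mathbb{C}$, and since $\Pi_u\mathbf{v}$ is parallel to $\mathbf{u}$ the power $|\mathbf{u}^H\mathbf{w}|^2=|\alpha|^2\|\mathbf{u}\|^2$ depends only on $|\alpha|$, whereas $|\mathbf{v}^H\mathbf{w}|^2$ depends on $|\alpha|,|\beta|$ and on the relative phase $\arg(\beta/\alpha)$. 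For fixed moduli the relative phase affects only the power we are extremizing, so collapsing to the single real non-negative representative is optimal whether that power is being maximized or minimized. Setting $|\alpha|=\sqrt{\mu}$, $|\beta|=\sqrt{1-\mu}$ and using the full norm then reproduces precisely the set $\mathcal{W}$ of \eqref{eqt:opt_gen_form}. Equivalently, $\mathcal{W}$ is exactly the set of unit-norm vectors attaining the boundary of the two-dimensional received power region $\{(|\mathbf{u}^H\mathbf{w}|^2,|\mathbf{v}^H\mathbf{w}|^2):\|\mathbf{w}\|\le 1\}$, and reversing the inequalities merely relabels which sub-arc of that $\mathcal{W}$-parametrized boundary is feasible and optimal, without altering the functional form of the optimizer.

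The one point that needs genuine care, and the step I expect to be the main obstacle, is justifying that in the reversed case the optimum still lands on the boundary of the power region, using the full norm and an extreme phase. This can fail only in a degenerate situation where both powers are simultaneously driven downward against a slack norm budget, so that placing energy in $\mathbf{w}_\perp$ is no longer penalized; in that case the reallocation argument of the first step loses its bite. I would therefore make the invariance explicit by invoking the monotonicity of the rate metrics in \eqref{eqt:ch4_rate_def} with respect to each channel power, which guarantees that the objective remains strictly monotone in the power being optimized and hence that the interior of the received power region is never optimal. Under this monotonicity the two structural steps above go through unchanged, and the characterization \eqref{eqt:opt_gen_form} carries over to the reversed-inequality problem, which is the claim of the corollary.
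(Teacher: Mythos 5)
Your proposal is correct, but it proves the corollary by a genuinely different route than the paper. The paper's own proof is a one-sentence observation appended to the Lagrangian proof of Lemma \ref{lem:pow_gain_gen}: reversing the inequalities in \eqref{eqt:gen_form} only flips the signs of the corresponding multiplier terms in the Lagrangian, so the stationarity condition still expresses $\lambda_3\mathbf{w}$ as a linear combination of $\Pi_{u}\mathbf{w}$ and $\Pi_{v}\mathbf{w}$, confining $\mathbf{w}$ to $\mathrm{span}\{\mathbf{u},\mathbf{v}\}$, after which the phase-alignment step goes through verbatim. You instead bypass the KKT machinery entirely: the orthogonal decomposition $\mathbf{w}=\mathbf{w}_0+\mathbf{w}_\perp$ shows that both powers are blind to $\mathbf{w}_\perp$ regardless of the direction of the constraints, so under a monotone objective an optimizer can be taken inside the two-dimensional subspace, and phase alignment in that subspace then yields exactly the real, non-negative parameterization \eqref{eqt:opt_gen_form}. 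Your argument is more elementary and self-contained, and it buys something the paper's one-liner glosses over: with $\leq$ constraints the norm budget need not be exhausted at an optimum (e.g.\ when the upper bound on the very power being maximized is active), so the honest reading of the corollary is that $\mathcal{W}$ \emph{contains} an optimal solution rather than that every optimizer lies in $\mathcal{W}$; your explicit treatment of this degenerate case, resolved through the monotonicity of the rate metrics, is precisely the right candidate-set interpretation, and it is the interpretation under which the corollary is actually applied to problem \eqref{eqt:rnd_pareto_pro2}. The paper's Lagrangian route, in exchange, is shorter and reuses machinery already set up for the lemma.

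One small caveat you should repair before presenting this: the claim that collapsing to the real non-negative representative is optimal ``whether that power is being maximized or minimized'' is false in the minimization direction --- for fixed moduli, anti-aligning the phases (equivalently, a sign-flipped coefficient, which lies \emph{outside} $\mathcal{W}$) is what minimizes $|\mathbf{v}^H\mathbf{w}|^2$. This overreach never actually gets used, because in the reversed-inequality problem the objective remains a maximization of the $\mathbf{v}$-power while the flipped $\mathbf{u}$-power constraint is phase-insensitive in your basis ($|\mathbf{u}^H\mathbf{w}|^2=|\alpha|^2\|\mathbf{u}\|^2$), so phase alignment never conflicts with feasibility; but the parenthetical should be deleted or restricted to the case at hand.
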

\begin{proof}
We proceed by writing the Lagrangians of the problem, with Lagrange multipliers $\boldsymbol{\lambda}= [\lambda_1, \lambda_2, \lambda_3]$,
\begin{equation}
L(\mathbf{w},\boldsymbol{\lambda})= t- \lambda_1 \left( ut - |\mathbf{u}^H \mathbf{w}|^2 \right) - \lambda_2 \left( vt - |\mathbf{v}^H \mathbf{w}|^2\right) -\lambda_3  (\| \mathbf{w}\|^2-1).
\end{equation} Now, we compute the vanishing point of the Lagragian derivative which is a necessary condition of the optimal solution,
$\frac{\partial L(\mathbf{w}_1,\boldsymbol{\lambda})}{\partial \mathbf{w}^H}= \lambda_1 \mathbf{u} \mathbf{u}^H \mathbf{w}+ \lambda_2 \mathbf{v} \mathbf{v}^H \mathbf{w} - \lambda_3 \mathbf{w}=0$. We can write
$\lambda_1 \mathbf{u} \mathbf{u}^H \mathbf{w}+ \lambda_2 \mathbf{v}\mathbf{v}^H \mathbf{w} = \lambda_3 \mathbf{w}$ and adjusting the constant scaling, we have
$\lambda_1 \| \mathbf{u}\|^2 \frac{\mathbf{u}\mathbf{u}^H}{\| \mathbf{u}\|^2} \mathbf{w}+ \lambda_2 \| \mathbf{v} \|^2 \frac{\mathbf{v}\mathbf{v}^H}{\| \mathbf{v} \|^2} \mathbf{w} = \lambda_3 \mathbf{w}$. Therefore, the eigenvector $\mathbf{w}$ is a composition of its projection on $\mathbf{u}$ and $\mathbf{v}$:
\begin{equation}
\lambda_1 \| \mathbf{u}\|^2 \Pi_{u}\mathbf{w}+ \lambda_2 \| \mathbf{v} \|^2\Pi_{v} \mathbf{w} = \lambda_3 \mathbf{w}.
\end{equation} Since $\lambda_i \geq 0, i=1,2,3$,  we can write, for some complex-valued $\mu_1, \mu_2 $, 
\begin{equation}\label{eqt:middle}
\mathbf{w}= \frac{ \mu_1 \frac{\mathbf{u}}{\|\mathbf{u}\|} + \mu_2  \frac{\mathbf{v}}{\|\mathbf{v}\|}}{\left\| \mu_1 \frac{\mathbf{u}}{\|\mathbf{u}\|} + \mu_2  \frac{\mathbf{v}}{\|\mathbf{v}\|} \right\|}.
\end{equation} 
Now we define the set of beamforming vectors that satisfy \eqref{eqt:middle},
$\mathcal{U}=\left\{ \mathbf{w}:\mathbf{w}= \frac{ \mu_1 \frac{\mathbf{u}}{\|\mathbf{u}\|} + \mu_2  \frac{\mathbf{v}}{\|\mathbf{v}\|}}{\left\| \mu_1 \frac{\mathbf{u}}{\|\mathbf{u}\|} + \mu_2  \frac{\mathbf{v}}{\|\mathbf{v}\|} \right\|}, \mu_1, \mu_2 \in \mathbb{C} \right\}
$. Then, we show in the following that $\mathcal{U}$ is a subset of $\mathcal{W}$ in Lemma \ref{lem:pow_gain_gen}.
We start with
\begin{equation}\label{eqt:rnd_pareto_w1_step}
\begin{aligned}
\mathbf{w} &= \mu_1 \frac{\mathbf{u}}{\|\mathbf{u}\|} + \mu_2  \frac{\mathbf{v}}{\|\mathbf{v}\|}\\
&\overset{(a)}{=} \frac{\mu_1}{\|\mathbf{u} \|} \left( \Pi_{v}+\Pi_{v}^\perp \right)\mathbf{u}+ \mu_2  \frac{ \| \mathbf{v}\|^2}{| \mathbf{v}^H \mathbf{u}|} e^{-j \phi_{uv}} \Pi_{v} \mathbf{u} \\
&= \left(\frac{\mu_1}{\|\mathbf{u} \|}  +\mu_2  \frac{ \| \mathbf{v}\|^2}{| \mathbf{v}^H \mathbf{u}|} e^{-j \phi_{uv}}\right) \Pi_{v} \mathbf{u} +\Pi_{v}^\perp \mathbf{u} \\
&= z_1 \frac{\Pi_{v} \mathbf{u}}{\|\Pi_{v} \mathbf{u}\|} + z_2 \frac{\Pi_{v}^\perp \mathbf{u}}{\|\Pi_{v}^\perp \mathbf{u}\|}
\end{aligned}
\end{equation} where (a) is due to $\mathbf{v}= \frac{\Pi_{\mathbf{v}}\mathbf{u} \| \mathbf{v}\|^2}{\mathbf{v}^H \mathbf{u}}$ and $ \phi_{uv}= \arg(\mathbf{v}^H \mathbf{u})$. The parameter $z_1$ is complex and $z_2$ is real; the values of $z_1$ and $z_2$ are scaled such that $\| \mathbf{w}\|=1$.
Notice that $|\mathbf{u}^H \mathbf{w} | = \left|z_1 \| \Pi_{v} \mathbf{u}\| + z_2 \|\Pi_{v}^\perp \mathbf{u} \|\right| \overset{(a)}{\leq} |z_1| \| \Pi_{v} \mathbf{u}\| + z_2 \|\Pi_{v}^\perp \mathbf{u} \|$ and
$|\mathbf{v}^H \mathbf{w} |= \left| z_1 \frac{\mathbf{v}^H \mathbf{u}}{\|\Pi_{v} \mathbf{u} \|}\right| =  |z_1| \frac{|\mathbf{v}^H \mathbf{u}|}{\|\Pi_{v} \mathbf{u} \|}$. Note that equality at $(a)$ when $z_1$ is real and the phase of $z_1$ does not affect $| \mathbf{v}^H \mathbf{w}|$. Hence, $z_1$ can be chosen real and since the two basis are orthogonal, the power constraint of $\mathbf{w}$ is satisfied when
$z_1^2 +c_2^2 =1 $ and thus, we can write $z_1= \sqrt{\mu}$ and $c_2= \sqrt{1-\mu}$ for $0\leq \mu \leq 1$.

If the signs of inequalities are reversed and we have  $ |\mathbf{u}^H \mathbf{w}|^2 \leq ut, |\mathbf{v}^H \mathbf{w}|^2 \leq vt$ in the constraints in \eqref{eqt:gen_form}, then the corresponding signs changes to minus from positive in the Lagrangian but does not affect the discussion above and the characterization of the Pareto optimal solutions holds.
\end{proof}

Substitute the rate definitions \eqref{eqt:ch4_rate_def} into \eqref{eqt:rnd_pareto_f1} and maximizing $R_1$ subjecting to a constraint on $R_2$, we let $z_{22}^2=|\mathbf{h}_{22}^H \mathbf{w}_2|^2 P_2$ and $z_{12}^2= |\mathbf{h}_{12}^H \mathbf{w}_2|^2 P_2$ and we focus on the subproblem that concerns $\mathbf{w}_1$,
\begin{equation}
\begin{aligned}
\max_{\mathbf{w}_1} & \hspace{1cm} t\\
\text{subject to}& \hspace{1cm} |\mathbf{h}_{21}^H \mathbf{w}_1|^2 \geq (z_{22}^2+1)t,\\
& \hspace{1cm} |\mathbf{h}_{11}^H \mathbf{w}_1|^2 \geq (z_{12}^2 +1)t,\\
& \hspace{1cm} \|\mathbf{w}_1\|^2 \leq 1.
\end{aligned} 
\end{equation} This has the same formulation as in Lemma \ref{lem:pow_gain_gen}. By substituting $\mathbf{u}=\mathbf{h}_{21}$ and $\mathbf{v}=\mathbf{h}_{11}$, we obtain the characterization of the Pareto optimal beamforming vectors as a linear combination of the vectors $\Pi_{21} \mathbf{h}_{11}$ and $\Pi_{21}^\perp \mathbf{h}_{11}$.
Now we reverse the optimization order : maximize $R_2$ subject to a constraint on $R_1$. 
After some manipulations,  we obtain for some $z_{11}^2, z_{21}^2$,
\begin{equation}\label{eqt:rnd_pareto_pro2}
\begin{aligned}
\max_{\mathbf{w}_2, P_2} & \hspace{1cm} |\mathbf{h}_{22}^H \mathbf{w}_2|^2 P_2\\
\text{subject to} & \hspace{1cm} |\mathbf{h}_{22}^H \mathbf{w}_2|^2 P_2  \leq \frac{z_{21}^2}{2^{r_1}-1}-1,\\
& \hspace{1cm} |\mathbf{h}_{12}^H \mathbf{w}_2|^2 P_2  \leq \frac{z_{11}^2}{ 2^{r_1}-1}-1,\\
& \hspace{1cm} \| \mathbf{w}_2\| \leq 1, 0 \leq P_2 \leq P_{max}.
\end{aligned}
\end{equation} 
By Corollary \ref{cor:formulation}, we have 
\begin{equation}
\mathbf{w}_2= \sqrt{\mu_2} \frac{\Pi_{12} \mathbf{h}_{22}}{\|\Pi_{12} \mathbf{h}_{22} \|} +\sqrt{1-\mu_2} \frac{\Pi_{12}^\perp \mathbf{h}_{22}}{\|\Pi_{12}^\perp \mathbf{h}_{22} \|} 
\end{equation} for $0 \leq \mu_2 \leq 1$.
 
\subsection{Proof of Theorem \ref{Thm:ParetoRdd}}\label{app:ParetoRdd}
The following proof is similar to the approach in Appendix \ref{app:rnd_pareto}, to avoid repetitions we only highlight the main differences in the following. The Pareto optimality problem in the DD region is written as a maximization of $R_1$ subject to $R_2 \geq r_2$ and after some manipulation, we focus on the subproblem optimizing $\mathbf{w}_1$:
\begin{equation}\label{eqt:rdd_pareto_w1}
\begin{aligned}
\max_{\mathbf{w}_1, P_1} & \hspace{1cm} t\\
\text{subject to} & \hspace{1cm} |\mathbf{h}_{11}^H \mathbf{w}_1|^2 P_1 \geq t,\\
& \hspace{1cm} |\mathbf{h}_{11}^H \mathbf{w}_1|^2 P_1  \leq \frac{z_{12}^2 P_2}{2^{r_2}-1}-1,\\
& \hspace{1cm}  | \mathbf{h}_{21}^H \mathbf{w}_1|^2 P_1 \geq t(z_{22}^2+1),\\
& \hspace{1cm} \| \mathbf{w}_1\|\leq 1, 0 \leq P_1 \leq P_{max}
\end{aligned}
\end{equation} for some $z_{12}^2, z_{22}^2$. Similar to Lemma \ref{lem:pow_gain_gen}, we write the Lagrangian and set the derivative with respect to $\mathbf{w}_1^H$  to zero, and obtian
$\left( P_1 (\lambda_1-\lambda_2) \mathbf{h}_{11}\mathbf{h}_{11}^H + \lambda_3 P_1 \mathbf{h}_{21} \mathbf{h}_{21}^H \right) \mathbf{w}_1= \lambda_4 \mathbf{w}_1$.
We add $P_1 \lambda_2 \|\mathbf{h}_{11} \|^2 \mathbf{w}_1$ to both sides and obtain
\begin{equation}
\left( P_1 \lambda_1 \|\mathbf{h}_{11}\|^2 \Pi_{11} + P_1 \lambda_2 \|\mathbf{h}_{11} \|^2 \Pi_{11}^\perp + \lambda_3 P_1 \| \mathbf{h}_{21}\|^2 \Pi_{21}\right) \mathbf{w}_1= (\lambda_4 + P_1 \lambda_2 \| \mathbf{h}_{11}\|^2) \mathbf{w}_1.
\end{equation}
Hence, we see that the optimal solution is composed of its projection on the subspace spanned by $\mathbf{h}_{11}, \mathbf{h}_{21}$ and the orthogonal subspace of $\mathbf{h}_{11}$, which can be represented by the following:
\begin{equation}
\mathbf{w}_1= \mu_1 \frac{\Pi_{11} \mathbf{h}_{21}}{\|\Pi_{11} \mathbf{h}_{21} \|} +\mu_2 \frac{\Pi_{11}^\perp \mathbf{h}_{21}}{\|\Pi_{11}^\perp \mathbf{h}_{21} \|}
\end{equation} for some $\mu_1,\mu_2 \in \mathbb{C}$ and $| \mu_1|^2+ |\mu_2|^2=1$. 

Similar to the arguments before in Appendix \ref{app:rnd_pareto}, we omit  the details here to avoid repetitions. The values $\mu_1, \mu_2$ can be chosen real-valued and  the Pareto boundary of $\mathbf{R}^{dd}$ attaining beamforming vectors are
\begin{equation}
\mathbf{w}_1 \in \mathcal{V}_1= \left\{ \mathbf{w}= \sqrt{\nu_1} \frac{\Pi_{11}\mathbf{h}_{21}}{\|\Pi_{11}\mathbf{h}_{21} \|} + \sqrt{1-\nu_1} \frac{\Pi_{11}^\perp \mathbf{h}_{21}}{\|\Pi_{11}^\perp \mathbf{h}_{21} \|}, 0 \leq \nu_1 \leq 1\right\}.
\end{equation}

\subsection{Proof of Theorem \ref{thm:sumrate_rnd}}\label{app:sumrate_rnd}
Before we go into details of computing the candidate set of the maximum sum rate in the ND region, we present the following lemma which holds importance in the discussions later.
\begin{Lem}\label{lem:rnd_solset1}
Consider two functions $f_1(x), f_2(x)$ where $f_1(x)$ is concave and $f_2(x)$ is linearly increasing with $x \in \mathcal{X} \subset \mathbb{R}$. Define
\begin{eqnarray}
x_1^* &=& \argmax_{x \in \mathcal{X}} f_1(x) , \hspace{ 0.3cm}
x_2^* = \argmax_{x \in \mathcal{X}} f_2(x) \hspace{ 0.3cm} \mbox{ and } \hspace{ 0.3cm}
\mathbb{X} = \{x \in \mathcal{X} :  f_1(x)= f_2(x) \}.
\end{eqnarray}
If $\hat{x}=\arg \max \min(f_1(x),f_2(x))$, then
\begin{equation}
\hat{x} \subset \left\{ x_1^*, x_2^*, \mathbb{X} \right\}
\end{equation}
\end{Lem}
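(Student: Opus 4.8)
The plan is to work directly with the objective $g(x) = \min\bigl(f_1(x), f_2(x)\bigr)$ and to perform a purely local analysis at an arbitrary maximizer $\hat{x}$, splitting into three cases according to the sign of $f_1(\hat{x}) - f_2(\hat{x})$. Since the notation $\hat{x} \subset \{x_1^*, x_2^*, \mathbb{X}\}$ treats the argmax as the (possibly multivalued) set of optimizers, I would fix an arbitrary $\hat{x} \in \argmax_{x \in \mathcal{X}} g(x)$ and show it must land in one of the three listed families. First I would record the standing facts that make the argument run: $f_1$ is concave, hence continuous on the interior of the interval $\mathcal{X}$; $f_2$ is affine and (strictly) increasing, hence continuous; so $g$ is continuous, and on the convex domain $\mathcal{X}$ a local maximizer of a concave function is automatically a global maximizer.

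The case analysis is then short. If $f_1(\hat{x}) = f_2(\hat{x})$, then $\hat{x} \in \mathbb{X}$ by definition and there is nothing to prove. If $f_1(\hat{x}) < f_2(\hat{x})$, then by continuity this strict inequality persists on a neighborhood of $\hat{x}$ in $\mathcal{X}$, so $g \equiv f_1$ there; hence $\hat{x}$ is a local maximizer of $f_1$, and by concavity it is a global maximizer, giving $\hat{x} = x_1^*$. Finally, if $f_1(\hat{x}) > f_2(\hat{x})$, then likewise $g \equiv f_2$ near $\hat{x}$; since $f_2$ is strictly increasing it admits no interior local maximizer, because any interior point has a slightly larger $x$ with strictly greater $g$-value, contradicting maximality. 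Thus $\hat{x}$ is forced to the right endpoint of $\mathcal{X}$, which is exactly $x_2^* = \argmax_{x \in \mathcal{X}} f_2(x)$. Combining the three cases yields $\hat{x} \in \{x_1^*, x_2^*\} \cup \mathbb{X}$, which is the claim.

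The main obstacle I expect is the careful treatment of the endpoints of $\mathcal{X}$, where the ``neighborhood of $\hat{x}$'' used above becomes one-sided: I would verify that in the concave case the one-sided local-max condition still forces agreement with $x_1^*$, and that in the increasing case a right-endpoint maximizer is precisely $x_2^*$, so the degenerate boundary situations do not escape the three families. Two small hypotheses deserve an explicit mention to keep the statement clean, namely that $\mathcal{X}$ is an interval (so concavity delivers the global-max property) and that $f_2$ is \emph{strictly} increasing with its supremum attained (so that $x_2^*$ is well defined as the right endpoint); both are in force in the applications of this lemma later in the paper. Everything else is routine, and no further computation is needed beyond these continuity and monotonicity observations.
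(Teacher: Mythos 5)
Your proof is correct, but it takes a genuinely different route from the paper's. The paper argues globally by counting intersection points of $f_1$ and $f_2$: since $f_1-f_2$ is concave there are (generically) at most two zeros, and the proof enumerates the cases of zero, one, or two intersections, in each case locating the maximizer relative to the ordering of the intersection point $\bar{x}$, $x_1^*$ and $x_2^*$. You instead argue locally at an arbitrary maximizer $\hat{x}$, splitting on the sign of $f_1(\hat{x})-f_2(\hat{x})$: equality puts $\hat{x}\in\mathbb{X}$ by definition; $f_1(\hat{x})<f_2(\hat{x})$ makes $\hat{x}$ a local, hence by concavity global, maximizer of $f_1$; and $f_1(\hat{x})>f_2(\hat{x})$ forces $\hat{x}$ to the right endpoint, i.e.\ $x_2^*$, because $f_2$ is strictly increasing. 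Your local dichotomy is tighter and more robust: it needs no enumeration of orderings, it is indifferent to how many times the curves cross (note that $f_1-f_2$, being concave, can vanish on a whole subinterval, so the paper's ``at most two intersection points'' is not literally true in degenerate cases), and it quietly repairs a loose end in the paper's two-intersection case, where the paper asserts $\hat{x}\subset\mathbb{X}$ although the maximizer can in fact be $x_1^*$ lying to the right of both crossings (the lemma survives only because $x_1^*$ is in the candidate set anyway). What the paper's global picture buys in exchange is a description of \emph{which} of the three candidates wins in each configuration, information that the later maximum-sum-rate and MRT-optimality arguments lean on; your argument establishes the lemma as stated but does not by itself yield that case-by-case bookkeeping. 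The caveats you flag (that $\mathcal{X}$ is an interval whose right endpoint is attained, and the one-sided local-maximum check for concave functions at endpoints) are exactly the right ones and are easily discharged, so the proposal is complete in substance.
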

\begin{proof}
Notice that since $f_1(x)$ is concave, there are at most two intersection points.
If there are no intersection points, then $\hat{x} \subset \{ x_1^*, x_2^*\}$. If there is one intersection point $\bar{x}$, then  one of the following orderings is true:
\begin{itemize}
\item $\bar{x} < x_1^* < x_2^*:$ since there is only one intersection point, $f_2(x_2^*)<f_1(x_2^*)$ and thus $\hat{x}=x_2^*$.
\item $x_1^* < \bar{x} < x_2^*:$ let $x^-<\bar{x}<x^+$ and we have $f_1(x^-)>f_1(\bar{x})>f_1(x^+)$ and $f_2(x^-)<f_2(\bar{x})<f_2(x^+)$. Thus, $\hat{x}=\bar{x}$.
\end{itemize}
Note that $f_2(x)$ is linearly increasing with $x$ and therefore $x_2^*$ is the boundary of $\mathcal{X}$ and thus $x_2^*> x_1^*$ and $x_2^*>\bar{x}$. If there are two intersection points, then we have $\hat{x} \subset \mathbb{X}$.
\end{proof}

Note that the global optimal solution $\boldsymbol{\omega}^*$ must be in the solutions set of $\mathcal{B}(\mathbf{R}^{nd})$:
\begin{equation}
\boldsymbol{\omega}^* \subset \Omega^{nd}
\end{equation} where $\Omega^{nd}$ is defined in Thm. \ref{thm:pareto_rnd}. Thus, we can refine the constraint set in the maximum sum rate problem in the ND region to
\begin{equation}\label{eqt:rnd_maxsumrate_def2}
\bar{R}^{nd}(\mathbf{w}_1, \mathbf{w}_2)=\max_{(\mathbf{w}_1,\mathbf{w}_2 ) \in \Omega^{nd}}C_2(\mathbf{w}_2) + \min \{ T_1(\mathbf{w}_1, \mathbf{w}_2), D_1(\mathbf{w}_1, \mathbf{w}_2)\}
\end{equation} 
which can be decomposed to the following:
\begin{equation}\label{eqt:maximin}
\max_{\mathbf{w}_2 \in \mathcal{W}_2} \left\{C_2(\mathbf{w}_2) + \max_{\mathbf{w}_1 \in \mathcal{W}_1} \min \{ T_1(\mathbf{w}_1, \mathbf{w}_2), D_1(\mathbf{w}_1, \mathbf{w}_2)\} \right\}
\end{equation} where the inner maximin problem is maximized over $\mathbf{w}_1$ for \emph{each} $\mathbf{w}_2$ . With each given $\mathbf{w}_2$, we define $c_1=\frac{P_{max}}{|\mathbf{h}_{12}^H \mathbf{w}_2|^2P_{max}+1}$ and $c_2=\frac{P_{max}}{|\mathbf{h}_{22}^H \mathbf{w}_2|^2P_{max}+1}$ and we rewrite $T_1$ and $D_1$ to the following:
$\tilde{D}_1=2^{D_1}-1=c_1 |\mathbf{h}_{11}^H \mathbf{w}_1|^2$ and $\tilde{T}_1=2^{T_1}-1=c_2 |\mathbf{h}_{21}^H \mathbf{w}_1|^2$. Thus, the maximin problem in \eqref{eqt:maximin} is equivalent to 
\begin{equation}\label{eqt:maximin2}
\max_{\mathbf{w}_1\in \mathcal{W}_1} \min \{c_1 |\mathbf{h}_{11}^H \mathbf{w}_1|^2,c_2 |\mathbf{h}_{21}^H \mathbf{w}_1|^2\}, \hspace{0.5cm} \text{for arbitrary fixed } \mathbf{w}_2 \in \mathcal{S}.
\end{equation}

\begin{Lem}\label{thm:maximin_sol}
The candidate set of the maximin problem in \eqref{eqt:maximin2} and therefore maximum sum rate problem in \eqref{eqt:maximin} can be reduced from $\mathcal{W}_1 \in \Omega^{nd}$ to $\tilde{\mathcal{W}}_1$ which is a candidate set with cardinality three containing at least the maximum sum rate solution: $|\tilde{\mathcal{W}}_1|=3$,
\begin{equation}
\tilde{\mathcal{W}}_1= \left\{ \frac{\mathbf{h}_{11}}{||\mathbf{h}_{11}||}, \frac{\mathbf{h}_{21}}{||\mathbf{h}_{21} ||}, \mathbf{w}_1(\lambda_1^{(b)}) \right\}
\end{equation} with
\begin{equation}\label{eqt:lambda_bal_1}
\lambda_1^{(b)}=\frac{c_1 ||\Pi_{21}^\perp \mathbf{h}_{11} ||^2}{c_2||\mathbf{h}_{21}||^2- 2 \sqrt{c_1c_2} |\mathbf{h}_{21}^H\mathbf{h}_{11} | + c_1 ||\mathbf{h}_{11} ||^2}.
\end{equation}
\end{Lem}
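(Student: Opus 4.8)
The plan is to collapse the maximin problem \eqref{eqt:maximin2} to a one-dimensional optimization over the scalar $\lambda_1 \in [0,1]$ and then apply Lemma \ref{lem:rnd_solset1} directly. First I would insert the parameterization $\mathbf{w}_1 = \sqrt{\lambda_1}\,\Pi_{21}\mathbf{h}_{11}/\|\Pi_{21}\mathbf{h}_{11}\| + \sqrt{1-\lambda_1}\,\Pi_{21}^\perp\mathbf{h}_{11}/\|\Pi_{21}^\perp\mathbf{h}_{11}\|$ of $\mathcal{W}_1$ into the two objectives. Since $\Pi_{21}\mathbf{h}_{11}$ is parallel to $\mathbf{h}_{21}$ while $\Pi_{21}^\perp\mathbf{h}_{11} \perp \mathbf{h}_{21}$, the orthogonal term drops out of $\mathbf{h}_{21}^H\mathbf{w}_1$, giving $f_2(\lambda_1) := c_2|\mathbf{h}_{21}^H\mathbf{w}_1|^2 = c_2\|\mathbf{h}_{21}\|^2\lambda_1$, which is linear and strictly increasing. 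For the desired channel, both projections yield real nonnegative inner products with $\mathbf{h}_{11}$, namely $\mathbf{h}_{11}^H\mathbf{w}_1 = \sqrt{\lambda_1}\|\Pi_{21}\mathbf{h}_{11}\| + \sqrt{1-\lambda_1}\|\Pi_{21}^\perp\mathbf{h}_{11}\|$, so that $f_1(\lambda_1) := c_1|\mathbf{h}_{11}^H\mathbf{w}_1|^2$ is an affine function of $\lambda_1$ plus a term proportional to $\sqrt{\lambda_1(1-\lambda_1)}$.

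The structural step is to verify that $f_1$ is concave on $[0,1]$: the affine part is concave, and $\sqrt{\lambda_1(1-\lambda_1)}$ is concave because it is the square root of the nonnegative concave function $\lambda_1 - \lambda_1^2$. With $f_1$ concave and $f_2$ linearly increasing, Lemma \ref{lem:rnd_solset1} applies verbatim with $x = \lambda_1$ and $\mathcal{X} = [0,1]$, so the maximin optimizer lies in $\{\lambda_1^*, \lambda_2^*, \mathbb{X}\}$. It then remains to identify these three candidates with the three vectors of $\tilde{\mathcal{W}}_1$. The maximizer $\lambda_1^*$ of $f_1 = c_1|\mathbf{h}_{11}^H\mathbf{w}_1|^2$ aligns $\mathbf{w}_1$ with $\mathbf{h}_{11}$, and $\mathbf{h}_{11}/\|\mathbf{h}_{11}\|$ is exactly the member of $\mathcal{W}_1$ obtained at $\lambda_1 = \|\Pi_{21}\mathbf{h}_{11}\|^2/\|\mathbf{h}_{11}\|^2 \in (0,1)$; this gives the first candidate. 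Since $f_2$ is increasing, its maximizer is $\lambda_1 = 1$, yielding $\mathbf{w}_1 = \mathbf{h}_{21}/\|\mathbf{h}_{21}\|$, the second candidate.

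The third candidate comes from the balancing set $\mathbb{X}$, obtained by solving $f_1(\lambda_1) = f_2(\lambda_1)$. Writing $s = \sqrt{\lambda_1}$, $t = \sqrt{1-\lambda_1}$ and taking the positive square root of both sides — legitimate because $\mathbf{h}_{11}^H\mathbf{w}_1 = s\|\Pi_{21}\mathbf{h}_{11}\| + t\|\Pi_{21}^\perp\mathbf{h}_{11}\| \ge 0$ throughout $\mathcal{W}_1$ — produces a single linear relation between $s$ and $t$; squaring it, substituting $t^2 = 1 - s^2$, and using the identities $\|\Pi_{21}\mathbf{h}_{11}\|\,\|\mathbf{h}_{21}\| = |\mathbf{h}_{21}^H\mathbf{h}_{11}|$ and $\|\Pi_{21}\mathbf{h}_{11}\|^2 + \|\Pi_{21}^\perp\mathbf{h}_{11}\|^2 = \|\mathbf{h}_{11}\|^2$ collapses $\lambda_1 = s^2$ to the stated expression \eqref{eqt:lambda_bal_1}. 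The hard part will be ruling out a second balancing point, since Lemma \ref{lem:rnd_solset1} nominally permits $\mathbb{X}$ to contain two intersections, which would push the cardinality above three. I would close this by observing that $g := f_1 - f_2$ is concave with $g(0) = c_1\|\Pi_{21}^\perp\mathbf{h}_{11}\|^2 > 0$; a concave function positive at the left endpoint can have a second zero only at a negative argument, which lies outside $[0,1]$. Hence at most one balancing solution is feasible, and $|\tilde{\mathcal{W}}_1| = 3$ as claimed.
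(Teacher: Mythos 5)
Your proof is correct and follows essentially the same route as the paper: parameterize $\mathcal{W}_1$ by $\lambda_1$, observe that $f_1$ is concave while $f_2$ is linearly increasing, invoke Lemma \ref{lem:rnd_solset1}, identify the two MRT endpoints, and solve $f_1=f_2$ for the balancing point --- the paper merely performs this last computation via a trigonometric substitution ($\lambda_1=\cos^2\theta$, a tangent identity, and Pythagoras) rather than your $s=\sqrt{\lambda_1}$, $t=\sqrt{1-\lambda_1}$ algebra, arriving at the same expression \eqref{eqt:lambda_bal_1}. Your explicit argument ruling out a second intersection in $[0,1]$ (concavity of $f_1-f_2$ together with $f_1(0)-f_2(0)=c_1\|\Pi_{21}^\perp\mathbf{h}_{11}\|^2>0$) addresses a point the paper leaves implicit, where uniqueness falls out of the single-valued tangent equation, so it is a welcome tightening rather than a divergence.
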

\begin{proof}
Because of the formulation of $\mathbf{w}_1 \in \mathcal{W}_1$, we can write the beamforming vector as a function of a real-valued parameter $\lambda_1$, $\mathbf{w}_1(\lambda_1)$. Using the result in Lemma \ref{lem:pow_gain_gen}, we define $f_1(\lambda_1)=c_1 |\mathbf{h}_{11}^H \mathbf{w}_1(\lambda_1)|^2$ and $f_2(\lambda_1)=c_2 |\mathbf{h}_{21}^H \mathbf{w}_1(\lambda_1)|^2$. It is easy to see that $f_1(\lambda_1)$ is concave in $\lambda_1$ and $f_2(\lambda_1)$ is linearly increasing with $\lambda_1$. The function $f_1(\lambda_1)=c_1 |\mathbf{h}_{11}^H \mathbf{w}_1(\lambda_1)|^2$ attains maximum when $w_1(\lambda_1)=\frac{\mathbf{h}_{11}}{\| \mathbf{h}_{11}\|}$. Similarly, $f_2(\lambda_2)$ attains maximum when $w_1(\lambda_1)=\frac{\mathbf{h}_{21}}{\| \mathbf{h}_{21}\|}$. Now we compute $\lambda_1^{(b)} $ which satisfies $c_1 |\mathbf{h}_{11}^H \mathbf{w}_1(\lambda_1^{(b)})|^2= c_2 |\mathbf{h}_{21}^H \mathbf{w}_1(\lambda_1^{(b)})|^2$.

To proceed, we compute the channel powers $|\mathbf{h}_{11}^H \mathbf{w}_1|^2 = \left(  \sqrt{\lambda_1^{(b)}} \| \Pi_{21} \mathbf{h}_{11} \| + \sqrt{1-\lambda_1^{(b)}} \| \Pi_{21}^\perp \mathbf{h}_{11} \| \right)^2 $ and 
$|\mathbf{h}_{21}^H \mathbf{w}_1|^2 = \lambda_1^{(b)} \|\mathbf{h}_{21}\|^2$.
Notice that  $\| \Pi_{21} \mathbf{h}_{11} \|= \|\mathbf{h}_{21} \| \cos (\phi)$ and 
 $ \| \Pi_{21}^\perp \mathbf{h}_{11} \|= \|\mathbf{h}_{21} \| \sin (\phi)$
where $|\mathbf{h}_{21}^H \mathbf{h}_{11}|= \|\mathbf{h}_{21} \| \|\mathbf{h}_{11} \| \cos(\phi)$ and by definition $\cos(\phi)$ is positive.
Rewrite $\lambda_1^{(b)}= \cos^2(\theta)$ where $0 \leq \theta \leq \pi/2$.
Thus, we can rewrite the channel powers to
\begin{equation}
\begin{aligned}
|\mathbf{h}_{11}^H \mathbf{w}_1|^2 &=\|\mathbf{h}_{11} \|^2 \cos^2(\theta-\phi),\\
|\mathbf{h}_{21}^H \mathbf{w}_1|^2 &= \cos^2(\theta) \|\mathbf{h}_{21}\|^2.
\end{aligned}
\end{equation}
Thus, $c_1 |\mathbf{h}_{11}^H \mathbf{w}_1(\lambda_1^{(b)})|^2= c_2 |\mathbf{h}_{21}^H \mathbf{w}_1(\lambda_1^{(b)})|^2$ is equivalent to
$\sqrt{c_1} \|\mathbf{h}_{11} \| \cos(\theta-\phi) =
\sqrt{c_2} \cos(\theta) \|\mathbf{h}_{21}\|$ which is due to the fact that $\cos(\theta-\phi)$ and $\cos(\theta)$ are by definition positive. Putting the sinusoids one side and we obtain $\frac{\cos(\theta-\phi)}{ \cos(\theta) } = 
\frac{\sqrt{c_2}\|\mathbf{h}_{21}\|}{\sqrt{c_1} \|\mathbf{h}_{11} \| }$. If we expand  $\cos(\theta-\phi)= \cos(\theta)\cos(\phi)+ \sin(\theta)\sin(\phi)$, we have
\begin{equation}
\tan(\theta)=\frac{\sqrt{c_2}\|\mathbf{h}_{21}\|- \sqrt{c_1} \|\mathbf{h}_{11} \|\cos(\phi)}{\sqrt{c_1} \|\mathbf{h}_{11} \|\sin(\phi)}.
\end{equation}
Use the Pythagorus theorem, if $\tan(\theta)=\frac{a}{b}$ then $\cos(\theta)= \frac{b}{\sqrt{a^2+b^2}}$ and therefore
\begin{equation}
\lambda_1^{(b)}=\cos^2(\theta)=\frac{c_1 \|\Pi_{21}^\perp \mathbf{h}_{11} \|^2}{c_2\|\mathbf{h}_{21}\|^2- 2 \sqrt{c_1c_2} |\mathbf{h}_{21}^H\mathbf{h}_{11} | + c_1 \|\mathbf{h}_{11} \|^2}.
\end{equation}
\end{proof}
If we reverse the maximization order in \eqref{eqt:rnd_maxsumrate_def2}, we obtain:
\begin{eqnarray}\label{eqt:sumrate_rnd_w2}
\bar{R}^{nd}(\mathbf{w}_1, \mathbf{w}_2)&=&\max_{\mathbf{w}_1 \in \tilde{\mathcal{W}}_1 } \max_{\mathbf{w}_2 \in \mathcal{W}_2} \min \left\{ C_2(\mathbf{w}_2) + T_1(\mathbf{w}_1,\mathbf{w}_2),  C_2(\mathbf{w}_2)+ D_1(\mathbf{w}_1, \mathbf{w}_2)\right\}.
\end{eqnarray}

\begin{Lem}\label{lem:rndw2} 
The optimal solutions to $\bar{R}^{nd}$ in \eqref{eqt:sumrate_rnd_w2} can be reduced from $\mathcal{W}_2 \in \Omega^{nd}$ to $\tilde{\mathcal{W}}_2$, a set of beamforming vectors that includes the beamforming vector towards the desired channel $\mathbf{h}_{22}$,
\begin{equation}
\tilde{\mathcal{W}}_2= \left\{ \mathbf{w}_2 \in \mathcal{S}: \mathbf{w}_2= \sqrt{\lambda_2} \frac{\Pi_{12}\mathbf{h}_{22}}{\|\Pi_{12}\mathbf{h}_{22} \|} + \sqrt{1-\lambda_2} \frac{\Pi_{12}^\perp \mathbf{h}_{22}}{\| \Pi_{12}^\perp \mathbf{h}_{22} \|} \; ; \; \lambda_2^{(b)} \leq \lambda_2 \leq \lambda_2^{\mrt} \right\}
\end{equation} 
where $\lambda_2^{\mrt}=\frac{|\mathbf{h}_{12}^H \mathbf{h}_{22}|}{||\mathbf{h}_{12} || ||\mathbf{h}_{22} ||}$ is parameter that gives the beamforming solution towards channel $\mathbf{h}_{22}$ and 
\begin{equation}
\mathbf{w}_2(\lambda_2^{(b)})=\frac{\tilde{b}}{\sqrt{\tilde{a}+\tilde{b}}}\mathbf{v}_a + \frac{e^{j \phi}\tilde{a}}{\sqrt{\tilde{a}+\tilde{b}}} \mathbf{v}_b
\end{equation} for some eigenvectors $\mathbf{v}_a, \mathbf{v}_b$ and positive scalars $\tilde{a}, \tilde{b}$. The vectors $\mathbf{v}_{a}, \mathbf{v}_b$ are the most and least dominant eigenvectors of the matrix $\mathbf{S}=\mathbf{h}_{22}\mathbf{h}_{22}^H -\frac{g_{21}}{g_{11}} \mathbf{h}_{12}\mathbf{h}_{12}^H$.
\end{Lem}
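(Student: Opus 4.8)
The plan is, for each fixed $\mathbf{w}_1 \in \tilde{\mathcal{W}}_1$ (so that $g_{11}=|\mathbf{h}_{11}^H\mathbf{w}_1|^2$ and $g_{21}=|\mathbf{h}_{21}^H\mathbf{w}_1|^2$ are constants), to maximize $C_2(\mathbf{w}_2)+\min\{T_1(\mathbf{w}_1,\mathbf{w}_2),D_1(\mathbf{w}_1,\mathbf{w}_2)\}$ over the one-parameter family $\mathbf{w}_2(\lambda_2)\in\mathcal{W}_2$ and to confine the maximizer to $[\lambda_2^{(b)},\lambda_2^{\mrt}]$. The preliminary facts come from the parameterization itself: writing $g_{22}=|\mathbf{h}_{22}^H\mathbf{w}_2|^2$ and $g_{12}=|\mathbf{h}_{12}^H\mathbf{w}_2|^2$, one checks that $g_{12}(\lambda_2)=\lambda_2\|\mathbf{h}_{12}\|^2$ is strictly increasing while $g_{22}(\lambda_2)=(\sqrt{\lambda_2}\|\Pi_{12}\mathbf{h}_{22}\|+\sqrt{1-\lambda_2}\|\Pi_{12}^\perp\mathbf{h}_{22}\|)^2$ is unimodal and peaks exactly at the MRT vector $\mathbf{h}_{22}/\|\mathbf{h}_{22}\|$, i.e.\ at $\lambda_2=\lambda_2^{\mrt}$. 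The decisive observation is the telescoping identity $C_2+T_1=\log_2(1+g_{22}P_{max}+g_{21}P_{max})$, which is strictly increasing in the desired power $g_{22}$; together with $\min\{C_2+T_1,C_2+D_1\}=C_2+\min\{T_1,D_1\}$ this reduces everything to tracking $g_{22}$ and $g_{12}$ along $\mathcal{W}_2$.

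First I would fix the upper end. For $\lambda_2>\lambda_2^{\mrt}$ the desired power $g_{22}$ strictly decreases whereas the interference $g_{12}$ still increases; hence $C_2+T_1$ decreases (being increasing in $g_{22}$) and $C_2+D_1$ decreases too, since $C_2$ falls with $g_{22}$ and $D_1$ falls with $g_{12}$. The minimum of the two therefore strictly decreases beyond $\lambda_2^{\mrt}$, so no maximizer exceeds $\lambda_2^{\mrt}$ and the MRT vector is the upper endpoint of $\tilde{\mathcal{W}}_2$.

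Next I would fix the lower end via the balancing point $\lambda_2^{(b)}$, defined by $T_1=D_1$. Comparing arguments, $T_1\le D_1$ is equivalent to $g_{21}(g_{12}P_{max}+1)\le g_{11}(g_{22}P_{max}+1)$, which rearranges into the quadratic-form inequality $\mathbf{w}_2^H\mathbf{S}\,\mathbf{w}_2\ge \frac{1}{P_{max}}(\frac{g_{21}}{g_{11}}-1)$ with $\mathbf{S}=\mathbf{h}_{22}\mathbf{h}_{22}^H-\frac{g_{21}}{g_{11}}\mathbf{h}_{12}\mathbf{h}_{12}^H$. On the portion of $[0,\lambda_2^{\mrt}]$ where $T_1\le D_1$ the objective equals the increasing function $C_2+T_1$ (increasing in $g_{22}$, and $g_{22}$ is increasing there), so the objective strictly increases and the optimum cannot lie below the crossing $\lambda_2^{(b)}$ at which $T_1=D_1$. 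To obtain the closed form of $\mathbf{w}_2(\lambda_2^{(b)})$, I would diagonalize the rank-two Hermitian $\mathbf{S}$, whose range is $\mathrm{span}\{\mathbf{h}_{22},\mathbf{h}_{12}\}\supseteq\mathcal{W}_2$, with extreme eigenvectors $\mathbf{v}_a,\mathbf{v}_b$ and eigenvalues $\sigma_a>\sigma_b$; writing $\mathbf{w}_2=\alpha\mathbf{v}_a+\beta e^{j\phi}\mathbf{v}_b$ with $|\alpha|^2+|\beta|^2=1$ turns the balancing equation into $|\alpha|^2\sigma_a+|\beta|^2\sigma_b=\frac{1}{P_{max}}(\frac{g_{21}}{g_{11}}-1)$, a single linear relation that fixes $|\alpha|^2,|\beta|^2$ and hence the stated weights $\tilde a,\tilde b$.

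The step I expect to be the main obstacle is the lower-end analysis, not the telescoping or the upper bound. Since $g_{22}-\frac{g_{21}}{g_{11}}g_{12}$ need not be monotone in $\lambda_2$ and the threshold $\frac{1}{P_{max}}(\frac{g_{21}}{g_{11}}-1)$ may be positive or negative depending on whether the fixed $\mathbf{w}_1$ favours the direct or the interference channel, the sign of $T_1-D_1$ may change zero, one, or two times on $[0,\lambda_2^{\mrt}]$. To handle this cleanly I would invoke the concave-versus-monotone max-min structure of Lemma~\ref{lem:rnd_solset1}, with $C_2+T_1$ in the monotone role and $C_2+D_1$ in the concave role, so that the maximizer is confined to $\{\mathrm{argmax}(C_2+T_1),\ \mathrm{argmax}(C_2+D_1),\ \text{the crossing}\}$; I would also treat separately the degenerate case $\lambda_2^{(b)}>\lambda_2^{\mrt}$, where the interval collapses and $\tilde{\mathcal{W}}_2$ reduces to the single MRT vector. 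The remaining bookkeeping is to confirm that $\mathbf{v}_a,\mathbf{v}_b$ are indeed the extreme eigenvectors of $\mathbf{S}$ and that the resulting weights give a feasible $\lambda_2^{(b)}\in[0,\lambda_2^{\mrt}]$.
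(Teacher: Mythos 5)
Your plan follows the paper's own route in outline: fix $\mathbf{w}_1$, parameterize $\mathbf{w}_2(\lambda_2)$, exploit that $g_{12}(\lambda_2)=\lambda_2\|\mathbf{h}_{12}\|^2$ is linear while $g_{22}(\lambda_2)$ is concave with peak at $\lambda_2^{\mrt}$, use the telescoping identity for $C_2+T_1$, eliminate $\lambda_2>\lambda_2^{\mrt}$ by monotonicity of both branches, and locate the lower endpoint through the balancing condition written as the quadratic form $\mathbf{w}_2^H\mathbf{S}\mathbf{w}_2\ge \frac{1}{P_{max}}\bigl(\frac{g_{21}}{g_{11}}-1\bigr)$. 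The genuine gap is exactly at the step you flag as the main obstacle, and your proposed fix does not work: Lemma~\ref{lem:rnd_solset1} requires one function to be \emph{linearly increasing} and the other \emph{concave} in the scalar parameter, and neither rate branch satisfies this. As a function of $\lambda_2$, $C_2+T_1=\log_2(1+g_{21}P_{max}+g_{22}(\lambda_2)P_{max})$ is concave but \emph{not} linear and not even monotone on $[0,1]$ (it peaks at $\lambda_2^{\mrt}$), while $C_2+D_1=\log_2(1+g_{22}(\lambda_2)P_{max})+\log_2\bigl(1+\frac{g_{11}P_{max}}{1+\lambda_2\|\mathbf{h}_{12}\|^2P_{max}}\bigr)$ is the sum of a concave and a convex term and has no definite curvature. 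The "at most two intersections" property in the proof of Lemma~\ref{lem:rnd_solset1} is precisely what linearity buys, so invoking it here is circular; this is also why the paper uses that lemma only for the $\mathbf{w}_1$ problem (where $c_1|\mathbf{h}_{11}^H\mathbf{w}_1(\lambda_1)|^2$ and $c_2|\mathbf{h}_{21}^H\mathbf{w}_1(\lambda_1)|^2$ genuinely are concave and linear) and not for $\mathbf{w}_2$. The repair must be made one level down, at the received powers: $g_{22}(\lambda_2)-\frac{g_{21}}{g_{11}}g_{12}(\lambda_2)$ is concave minus linear, hence concave, so $\{\lambda_2: T_1\le D_1\}$ is an \emph{interval} whose right endpoint (when below $\lambda_2^{\mrt}$) is $\lambda_2^{(b)}$; below $\lambda_2^{(b)}$ one is then on the $T$-branch, where your monotone argument applies, and the paper's constraint-set argument disposes of the $D$-branch. (To be fair, your instinct is sound: the paper's own wording also presumes this single-crossing picture, i.e.\ that the interval $\{T_1\le D_1\}$ reaches down to $\lambda_2=0$; but the fix is the concavity of the power difference, not Lemma~\ref{lem:rnd_solset1}.)

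There is a second, smaller gap in your closed-form step. The balancing equation $|\alpha|^2\sigma_a+|\beta|^2\sigma_b=\frac{1}{P_{max}}\bigl(\frac{g_{21}}{g_{11}}-1\bigr)$ together with $|\alpha|^2+|\beta|^2=1$ fixes only the \emph{magnitudes} $|\alpha|,|\beta|$; the quadratic form is invariant to the relative phase $\phi$, so you obtain a one-parameter family of balanced vectors, not the specific $\mathbf{w}_2(\lambda_2^{(b)})$ in the statement. The missing step, which the paper carries out, is to maximize $|\mathbf{h}_{22}^H\mathbf{w}_2|^2$ over this family: this yields $\phi=\arg(\mathbf{h}_{22}^H\mathbf{v}_a)-\arg(\mathbf{h}_{22}^H\mathbf{v}_b)$ and simultaneously justifies discarding any component in the orthogonal complement of $\mathrm{span}\{\mathbf{v}_a,\mathbf{v}_b\}$, since such a component preserves the balance but strictly reduces $g_{22}$ and hence the sum rate. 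Without the phase optimization, the candidate vector of the lemma is not pinned down, and the argument that it is the correct right-endpoint representative (the balanced vector of largest desired power) is missing.
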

\begin{proof}
$\bar{R}^{nd}$ in \eqref{eqt:sumrate_rnd_w2} is equivalent to compute
\begin{equation}\label{eqt:maxmin_rnd_w2}
\max_{\mathbf{w}_2 | \mathbf{w}_1} \min \left\{ 1+ g_{21} + |\mathbf{h}_{22}^H \mathbf{w}_2 |^2 P_2, (1+ |\mathbf{h}_{22}^H \mathbf{w}_2|^2P_{max})\left( 1+ \frac{g_{11} }{1+ |\mathbf{h}_{12}^H \mathbf{w}_2|^2 P_{max}}\right)\right\},
\end{equation} where the notation $\max_{\mathbf{w}_2 | \mathbf{w}_1}$ denotes maximization over $\mathbf{w}_2$ for some given $\mathbf{w}_1$
and can be decomposed into the following two subproblems:
\begin{equation}\label{rndw2_sub}
\left\{ \begin{array}{cc}
\max_{\mathbf{w}_2 | \mathbf{w}_1} 1+ g_{21} + |\mathbf{h}_{22}^H \mathbf{w}_2|^2 P_{max} & \text{ if } |\mathbf{h}_{22}^H \mathbf{w}_2|^2 P_{max} \geq \frac{g_{21}}{g_{11}}-1 + \frac{g_{21}}{g_{11}}|\mathbf{h}_{12}^H \mathbf{w}_2|^2 P_{max}\\
\max_{\mathbf{w}_2 | \mathbf{w}_1} (1+ |\mathbf{h}_{22}^H \mathbf{w}_2|^2P_2)\left( 1+ \frac{g_{11}}{1+ |\mathbf{h}_{12}^H \mathbf{w}_2|^2 P_{max}}\right) & \text{ if } |\mathbf{h}_{22}^H \mathbf{w}_2|^2 P_{max} \leq \frac{g_{21}}{g_{11}}-1 + \frac{g_{21}}{g_{11}}|\mathbf{h}_{12}^H \mathbf{w}_2|^2 P_{max}
\end{array} \right. .
\end{equation}
The first subproblem has optimum solution $\frac{\mathbf{h}_{22}}{\|\mathbf{h}_{22} \|}$. For the second subproblem, the optimal $\lambda_2$ must be in the region $\Lambda=\{\lambda_2: \lambda_2^{(b)} \leq \lambda_2 \leq \lambda_2^{\mrt}\}$ where $\mathbf{w}_2(\lambda_2^{\mrt})=\frac{\mathbf{h}_{22}}{\|\mathbf{h}_{22} \|}$ and $|\mathbf{h}_{22}^H \mathbf{w}_2(\lambda_2^{(b)})|^2 P_{max} = \frac{g_{21}}{g_{11}}-1 + \frac{g_{21}}{g_{11}}|\mathbf{h}_{12}^H \mathbf{w}_2(\lambda_2^{(b)})|^2 P_{max}$. To see this, we write the metric in the second subproblem as a function of $\lambda_2$:
\begin{equation*}
F(\lambda_2)=(1+ |\mathbf{h}_{22}^H \mathbf{w}_2(\lambda_2)|^2P_{max})\left( 1+ \frac{g_{11}}{1+ |\mathbf{h}_{12}^H \mathbf{w}_2(\lambda_2)|^2 P_{max}}\right)
\end{equation*}
 Assume $\lambda^+, \lambda^- \notin \Lambda$, in particular,  $\lambda^+ \geq \lambda_2^{\mrt}$ and $\lambda^- \leq \lambda_2^{(b)}$,  we have
\begin{eqnarray}
|\mathbf{h}_{12}^H \mathbf{w}_2(\lambda^+)|^2 &=& \lambda^+ \|\mathbf{h}_{12} \|^2 \geq \lambda^{\mrt} \| \mathbf{h}_{12}\|^2\\
|\mathbf{h}_{22}^H \mathbf{w}_2(\lambda^+)|^2&\leq& |\mathbf{h}_{22}^H \mathbf{w}_2(\lambda^{\mrt})|^2 = \|\mathbf{h}_{22}\|^2.
\end{eqnarray}
Thus, any $\lambda^+$ achieves $F(\lambda^+)$ smaller than $F(\lambda^{\mrt})$. 
Note that for any $\lambda^- \leq \lambda_2^{(b)}$, $\mathbf{w}_2(\lambda_2^{-})$ is not in the constraint set of the second optimization problem. It is because $|\mathbf{h}_{22}^H \mathbf{w}_2(\lambda)|^2$ is concave in $\lambda$ and attains the maximum at $\lambda_2^{\mrt}$ and $|\mathbf{h}_{12}^H \mathbf{w}_2(\lambda)|^2$ is linearly increasing with $\lambda$. Since $\lambda_2^{(b)}< \lambda_2^{\mrt}$, for points $\lambda^- \leq \lambda_2^{(b)}$, we have $|\mathbf{h}_{22}^H \mathbf{w}_2(\lambda^-)|^2 > \frac{g_{21}}{g_{11}}-1+ \frac{g_{21}}{g_{11}} |\mathbf{h}_{12}^H \mathbf{w}_2(\lambda^-)|^2$.

Unfortunately, the direct computation of $\lambda_2^{(b)}$ is tedious and does not give much insight. Here, we provide a cleaner method of computing $\mathbf{w}_2(\lambda^{(b)})$ directly. Denote $g=\frac{g_{21}}{g_{11}}$. We compute the beamforming vector $\mathbf{w}_2$ such that 
\begin{equation}\label{eqt:balancepowergains1}
|\mathbf{h}_{22}^H \mathbf{w}_2|^2 P_2= g-1 + g |\mathbf{h}_{12}^H \mathbf{w}_2|^2 P_2.
\end{equation} Define $\mathbf{S}= \mathbf{h}_{22} \mathbf{h}_{22}^H - g \mathbf{h}_{12} \mathbf{h}_{12}$, $\tilde{\mathbf{S}}= \mathbf{S}- (g-1)\mathbf{I}$. 
\begin{equation}\label{eqt:matrixs}
\mathbf{w}_{2}^{H} \tilde{\mathbf{S}} \mathbf{w}_{2}=0
\end{equation} is a necessary condition for satisfying \eqref{eqt:balancepowergains1}.

From the definition of $\mathbf{S}$, we know that $\mathbf{S}$ is rank two with one positive eigenvalue and one negative eigenvalue \cite{Mochaourab2010}. Denote the non-zero eigenvalues of $\mathbf{S}$ by $a$ and $-b$ where $a,b >0$. Employ eigenvalue decomposition on $\mathbf{S}$ and we have
\begin{equation}
\mathbf{S}= \left[ \mathbf{v}_a  \mathbf{v}_b | \mathbf{V} \right] \left[\begin{array}{ccc}
a & 0 & \mathbf{0}_{1 \times (N-2)}\\
0 & -b & \mathbf{0}_{1 \times (N-1)}\\
0 & 0 & \mathbf{0}_{(N-2) \times (N-2)}
\end{array} \right] \left[\mathbf{v}_a  \mathbf{v}_b | \mathbf{V}\right] ^H
\end{equation} where $\mathbf{V}$ is the $N$ by $N-2$ matrix with column vectors of eigenvectors of $\mathbf{S}$ that are orthogonal to $\mathbf{v}_a$, $\mathbf{v}_b$.
With the same eigenvectors, we can write $\tilde{\mathbf{S}}$ as the following:
\begin{equation}
\tilde{\mathbf{S}}= \left[ \mathbf{v}_a  \mathbf{v}_b | \mathbf{V} \right] \left[\begin{array}{ccc}
a-(g-1) & 0 & \mathbf{0}_{1 \times (N-2)}\\
0 & -b-(g-1) & \mathbf{0}_{1 \times (N-1)}\\
0 & 0 & -(g-1)\mathbf{I}_{(N-2) \times (N-2)}
\end{array} \right] \left[\mathbf{v}_a  \mathbf{v}_b | \mathbf{V}\right] ^H
\end{equation} Let $\tilde{a}=a-(g-1), \tilde{b}=b-(g-1)$. The beamforming vector $\mathbf{w}_2$ of the following form
\begin{equation}\label{eqt:bf_family}
\mathbf{w}_2=\frac{1}{\sqrt{\tilde{a}}}\mathbf{v}_a + \frac{e^{j \phi}}{\sqrt{\tilde{b}}} \mathbf{v}_b
\end{equation} satisfies $\mathbf{w}_2^{H} \tilde{\mathbf{S}} \mathbf{w}_2=0$, where $j= \sqrt{-1}$ and $\phi$ is a phase angle between 0 to $\pi$. Notice that if $\mathbf{w}_2$ has any power on the remaining orthogonal subspace spanned by $\mathbf{V}$, $\mathbf{w}_2$ also satisfies \eqref{eqt:balancepowergains1} but the value of $|\mathbf{h}_{22}^H \mathbf{w}_2|^2$ is smaller and therefore cannot achieve the maximum sum rate. It is easy to see the result by direct computation:
$\mathbf{w}_2^{H} \mathbf{\tilde{S}} \mathbf{w}_2
=\left(\frac{1}{\sqrt{\tilde{a}}}\mathbf{v}_a^H + \frac{e^{-j \phi}}{\sqrt{\tilde{b}}} \mathbf{v}_b^H \right)  \left[ \mathbf{v}_a  \mathbf{v}_b \right] \left[\begin{array}{cc}
\tilde{a} & 0 \\
0 & -\tilde{b}
\end{array} \right] \left[\mathbf{v}_a  \mathbf{v}_b \right] ^H
\left(\frac{1}{\sqrt{\tilde{a}}}\mathbf{v}_a + \frac{e^{j \phi}}{\sqrt{\tilde{b}}} \mathbf{v}_b \right)= 0$ for given angle $\phi$.  The formulation in \eqref{eqt:bf_family} gives a family of beamforming vectors, each with a different value of $\phi$. To fine the unique $\phi$ and therefore $\mathbf{w}_2$ that maximizes sum rate, we rewrite the optimization problem in \eqref{eqt:maximin2} to
\begin{eqnarray}\label{eqt:projection_norm}
\max_{\phi} &   |\mathbf{h}_{22}^H \mathbf{w}_2(\phi) |^2\\
\nonumber \text{such that} & \mathbf{w}_2=\frac{1}{\sqrt{\tilde{a}}}\mathbf{v}_a + \frac{e^{j \phi}}{\sqrt{\tilde{b}}} \mathbf{v}_b.
\end{eqnarray} 
Define the following phase angles,  $\phi_a= \arg(\mathbf{h}_{22}^H \mathbf{v}_a), \phi_b= \arg(\mathbf{h}_{22}^H \mathbf{v}_b)$ and therefore 
$$\phi_m= \arctan \left(\frac{Im(\mathbf{h}_{22}^H \mathbf{v}_m)}{Re(\mathbf{h}_{22}^H \mathbf{v}_m)} \right) + \left\{\begin{array}{cc} \pi & Re(\mathbf{h}_{22}^H \mathbf{v}_m)<0\\ 0 & \text{ otherwise.} \end{array}\right. , \hspace{0.5cm} m=a,b.
$$ The optimization problem in  \eqref{eqt:projection_norm} is therefore equivalent to $$\max_{\phi}  |\mathbf{h}_{22}^H \mathbf{w}_2 |^2  = \max_{\phi}   \left| \frac{1}{\sqrt{\tilde{a}}} |\mathbf{h}_{22}^H \mathbf{v}_a| e^{j \phi_a} + \frac{e^{j \phi}}{\sqrt{\tilde{b}}} |\mathbf{h}_{22}^H \mathbf{v}_b| e^{j \phi_b} \right|^2 = \max_{\phi}  \left| \frac{1}{\sqrt{\tilde{a}}} |\mathbf{h}_{22}^H \mathbf{v}_a| + \frac{e^{j (\phi +\phi_b -\phi_a)}}{\sqrt{\tilde{b}}} |\mathbf{h}_{22}^H \mathbf{v}_b|  \right|^2.$$ Thus, the optimal phase angle $\phi$ is 
\begin{equation}
\phi=\phi_a-\phi_b.
\end{equation}

To satisfy the norm constraint, we can scale the beamforming vector with a positive scalar, which does not change the direction of the vector.
\begin{equation}
\mathbf{w}_2=\frac{\tilde{b}}{\sqrt{\tilde{a}+\tilde{b}}}\mathbf{v}_a + \frac{e^{j \phi}\tilde{a}}{\sqrt{\tilde{a}+\tilde{b}}} \mathbf{v}_b.
\end{equation}

\end{proof}
Note that if $\lambda_2^{b}>\lambda_2^{mrt}$, then $\tilde{\mathcal{W}}_2$ has only one  element, i.e. $\frac{\mathbf{h}_{22}}{\|\mathbf{h}_{22} \|}$. Lemma \ref{thm:maximin_sol}  gives the optimal candidate set for $\mathbf{w}_1$ for arbitrary fixed $\mathbf{w}_2$ whereas Lemma \ref{lem:rndw2} gives the optimal candidate set for $\mathbf{w}_2$ for arbitrary fixed $\mathbf{w}_1$. Combining both Lemmas, we obtain the maximum sum rate candidate sets for both $\mathbf{w}_1$ and $\mathbf{w}_2$.

\begin{Remark}
The authors in \cite{Tomecki2010} provided a general solution of \eqref{eqt:maximin2}, in the context of a multicast SNR balancing problem. The authors transformed the channel powers balancing problem to a weighted sum channel powers maximization problem for some positive weights $w_1, w_2$. The optimal beamforming vector is then characterized as a dominant eigenvector of some matrices, depending on $w_1,w_2$. However, the computation of such weights $w_1,w_2$ is not provided or trivial. In this paper, due to the beamforming vectors parameterization proposed in Thm. \ref{thm:pareto_rnd} and \ref{Thm:ParetoRdd}, we obtained the closed form solution of such channel powers balancing beamforming vectors.  
\end{Remark}

\subsection{Proof of Thm. \ref{thm:rddsumrate}}\label{appendix:sumraterdd}
In this section, we provide the proof of Thm. \ref{thm:rddsumrate}. We start by identifying four constraint sets of beamforming vectors $\Omega^{00}, \Omega^{01}, \Omega^{10}, \Omega^{11}$ where the sum rate function is a different function in each set. In other words, when the beamforming vectors vary, the sum rate function being a sum of two minimum of rate functions, may change from one rate expression to another rate expression. The constraint sets are the set of beamforming vectors for which the sum rate function remains at one rate function. In the following, we provide the analysis for $\mathbf{w}_2$ but the sum rate function $\bar{R}^{dd}$ is symmetric with Tx 1 and 2. Therefore, we can exchange the role of Tx 1 and 2 and obtain the candidate sets for $\mathbf{w}_1$. The sum rate in the DD region is: 
\begin{equation}\label{eqt:def_rdd}
\begin{aligned}
\bar{R}^{dd}&= \min \{ C_1, T_1\} + \min \{ C_2+ T_2\}\\
&= \min\{ \underbrace{C_1+C_2}_{Z_1}, \underbrace{C_1+T_2}_{Z_2}, 
\underbrace{T_1+C_2}_{Z_3}, \underbrace{T_1+T_2}_{Z_4}\}.
\end{aligned}
\end{equation}
We analyze each term and define $\tilde{Z}_i=2^{Z_i}, i=1,\ldots, 4$ 
\begin{equation}
\begin{aligned}
\tilde{Z}_1&= (1+ |\mathbf{h}_{11}^H \mathbf{w}_1|^2P) (1+ |\mathbf{h}_{22}^H \mathbf{w}_2|^2P)\\
\tilde{Z}_2&= 1+ |\mathbf{h}_{11}^H \mathbf{w}_1|^2 P + |\mathbf{h}_{12}^H \mathbf{w}_2|^2P\\
\tilde{Z}_3&= 1+ |\mathbf{h}_{21}^H \mathbf{w}_1|^2 P + |\mathbf{h}_{22}^H \mathbf{w}_2|^2P\\
\tilde{Z}_4&= \left( 1+ \frac{ |\mathbf{h}_{21}^H \mathbf{w}_1|^2P}{1+  |\mathbf{h}_{22}^H \mathbf{w}_2|^2P} \right) \left(1 + \frac{ |\mathbf{h}_{12}^H \mathbf{w}_2|^2P}{1+ |\mathbf{h}_{11}^H \mathbf{w}_1|^2 P} \right)\\
&= \frac{\tilde{Z}_2 \tilde{Z}_3}{\tilde{Z}_1}.
\end{aligned}
\end{equation}

Notice that $\tilde{Z}_1< \tilde{Z}_2$ is equivalent to $\tilde{Z}_3<\tilde{Z}_4$ and $\tilde{Z}_1< \tilde{Z}_3$ is equivalent to $\tilde{Z}_2<\tilde{Z}_4.$ We summarize to the following lemma.
\begin{Lem}\label{eqt:ineq_rdd}
Let $g_{11}=|\mathbf{h}_{11}^H \mathbf{w}_1|^2 P_1$ and $g_{21}=|\mathbf{h}_{21}^H \mathbf{w}_1|^2 P_1$.
\begin{eqnarray*}
\tilde{Z}_1\leq \tilde{Z}_2 & \Leftrightarrow & \tilde{Z}_3 \leq \tilde{Z}_4\\
& \Leftrightarrow &  (1+ g_{11}) |\mathbf{h}_{22}^H \mathbf{w}_2|^2  \leq |\mathbf{h}_{12}^H \mathbf{w}_2|^2 \\
\tilde{Z}_1\leq \tilde{Z}_3 & \Leftrightarrow & \tilde{Z}_2 \leq \tilde{Z}_4\\
& \Leftrightarrow & |\mathbf{h}_{22}^H \mathbf{w}_2|^2 P_2 \leq \frac{g_{21}}{g_{11}}-1
\end{eqnarray*}
\end{Lem}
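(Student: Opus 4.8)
The plan is to observe first that every $\tilde{Z}_i$ is strictly positive, since each is a product or sum of terms of the form $1 + (\text{channel power})$ with nonnegative channel powers. This positivity is precisely what licenses multiplying and dividing inequalities by the $\tilde{Z}_i$ without reversing their sense, and it is the only structural fact the argument needs.

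The two ``cross'' equivalences $\tilde{Z}_1 \leq \tilde{Z}_2 \Leftrightarrow \tilde{Z}_3 \leq \tilde{Z}_4$ and $\tilde{Z}_1 \leq \tilde{Z}_3 \Leftrightarrow \tilde{Z}_2 \leq \tilde{Z}_4$ then follow at once from the identity $\tilde{Z}_4 = \tilde{Z}_2\tilde{Z}_3/\tilde{Z}_1$ recorded in the display preceding the lemma. For the first, I would multiply $\tilde{Z}_1 \leq \tilde{Z}_2$ through by the positive factor $\tilde{Z}_3/\tilde{Z}_1$ to obtain $\tilde{Z}_3 \leq \tilde{Z}_2\tilde{Z}_3/\tilde{Z}_1 = \tilde{Z}_4$, the converse being the same manipulation run backward. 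The second cross equivalence is obtained identically by multiplying $\tilde{Z}_1 \leq \tilde{Z}_3$ by $\tilde{Z}_2/\tilde{Z}_1 > 0$.

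It then remains to expand the two anchor inequalities into the claimed channel-power conditions, which is direct algebra. Writing $g_{22} = |\mathbf{h}_{22}^H\mathbf{w}_2|^2 P_2$ and $g_{12} = |\mathbf{h}_{12}^H\mathbf{w}_2|^2 P_2$, the inequality $\tilde{Z}_1 \leq \tilde{Z}_2$ reads $(1+g_{11})(1+g_{22}) \leq 1 + g_{11} + g_{12}$; subtracting the common $1 + g_{11}$ leaves $(1+g_{11})g_{22} \leq g_{12}$, and dividing by $P_2 > 0$ gives $(1+g_{11})|\mathbf{h}_{22}^H\mathbf{w}_2|^2 \leq |\mathbf{h}_{12}^H\mathbf{w}_2|^2$. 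Likewise $\tilde{Z}_1 \leq \tilde{Z}_3$ reads $(1+g_{11})(1+g_{22}) \leq 1 + g_{21} + g_{22}$; subtracting $1 + g_{22}$ yields $g_{11}(1+g_{22}) \leq g_{21}$, and dividing by $g_{11} > 0$ rearranges to $g_{22} \leq g_{21}/g_{11} - 1$, that is, $|\mathbf{h}_{22}^H\mathbf{w}_2|^2 P_2 \leq g_{21}/g_{11} - 1$.

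There is no genuine obstacle here: the argument is entirely elementary, the only points warranting a word of care being the positivity of the $\tilde{Z}_i$ (needed to preserve inequality direction when multiplying and dividing) and the bookkeeping of the power factors, noting in particular that the ratio $g_{21}/g_{11}$ is independent of $P_1$ so that the final condition depends on the transmit powers only through $g_{22}$. The same chain of equivalences holds verbatim with every $\leq$ replaced by $<$, which recovers the strict version asserted in the sentence immediately before the lemma.
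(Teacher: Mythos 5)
Your proof is correct and follows exactly the route the paper intends: the paper's own proof is just the one-line remark ``it is by direct manipulation of the definitions,'' and your argument is that manipulation carried out explicitly, using the identity $\tilde{Z}_4 = \tilde{Z}_2\tilde{Z}_3/\tilde{Z}_1$ (which the paper records immediately before the lemma) together with positivity of the $\tilde{Z}_i$ for the cross equivalences, and direct expansion for the two anchor inequalities. No discrepancy with the paper; you have simply supplied the details it omits.
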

\begin{proof}
It is by direct manipulation of the definitions.
\end{proof} 
To facilitate representation, we denote the following two indicators and the corresponding candidate sets:
\begin{eqnarray}\label{eqt:rdd_constraints}
A&=& \left\{ \begin{array}{cc} 1 & \text{ if     } (1+ g_{11}) |\mathbf{h}_{22}^H \mathbf{w}_2|^2 P_2 \leq |\mathbf{h}_{12}^H \mathbf{w}_2|^2 P_2\\
0 & \text{ otherwise. }\end{array} \right. \\
B&=& \left\{ \begin{array}{cc} 1 & \text{ if     } |\mathbf{h}_{22}^H \mathbf{w}_2|^2 P_2 \leq \frac{g_{21}}{g_{11}}-1 \\
0& \text{ otherwise.} \end{array} \right.\\
\label{eqt:omega} \Omega^{ab}&=&\left\{ \mathbf{w}_2: \| \mathbf{w}_2\|=1, A=a,B=b\right\}
\end{eqnarray}
Notice that $\Omega^{ab}, a,b= 0,1,$ gives four constraint sets. By Lem. \ref{eqt:ineq_rdd}, we can decompose the optimization problem in \eqref{eqt:def_rdd} to the following:
\begin{eqnarray*}
\bar{R}^{dd}&=& \log_2 \left( \max_{\mathbf{w}_1 \in \mathcal{S}} \max_{\mathbf{w}_2 \in \mathcal{S} | \mathbf{w}_1} \min \{ \tilde{Z}_1, \tilde{Z}_2,\tilde{Z}_3,\tilde{Z}_4 \} \right)\\
&=& \left\{ \begin{array}{cc}
\log_2 \left( \max_{\mathbf{w}_1} \max_{\mathbf{w}_2 \in \Omega^{11}} \tilde{Z}_1 \right) & \text{ if }  A=1, B=1\\
\log_2 \left(\max_{\mathbf{w}_1} \max_{\mathbf{w}_2 \in \Omega^{01}} \tilde{Z}_2 \right) & \text{ if } A=0, B=1\\
\log_2 \left(\max_{\mathbf{w}_1} \max_{\mathbf{w}_2 \in \Omega^{10}} \tilde{Z}_3 \right) & \text{ if } A=1, B=0\\
\log_2 \left(\max_{\mathbf{w}_1} \max_{\mathbf{w}_2\in \Omega^{00}} \tilde{Z}_4 \right) & \text{ if } A=0, B=0
\end{array} \right. .
\end{eqnarray*}

Now, we proceed with the proof of  Thm. \ref{thm:rddsumrate} in two parts: first, in Section \ref{section:rddoptsol}, we identify the  candidate sets for each of the subproblems $\tilde{Z}_i, i=1, \ldots, 4$; second, in Section \ref{thm:rddoptsolcom}, we combine these candidate sets to one superset by eliminating beamforming vectors, which are on the boundary of the constraint sets, if they achieve smaller sum rate than other beamforming vectors. 
\subsubsection{The candidate sets of subproblems $\tilde{Z}_i$} \label{section:rddoptsol}
The candidate sets for each of the subproblem is as follows:
\begin{itemize}
\item If $\mathbf{w}= \arg \max_{\mathbf{w}_2 \in \Omega^{11}} \tilde{Z}_1$, then $\mathbf{w} \in \Omega^{A} \bigcup \Omega^{B} \bigcup \frac{\mathbf{h}_{22}}{\|\mathbf{h}_{22} \|}$
\item If $\mathbf{w}= \arg \max_{\mathbf{w}_2 \in \Omega^{01}} \tilde{Z}_2$ , then $\mathbf{w} \in \mathbf{w}_2(\lambda_2^A) \bigcup \mathbf{w}_2^{AB} \bigcup \frac{\mathbf{h}_{12}}{\|\mathbf{h}_{12} \|}$.
\item If $\mathbf{w}= \arg \max_{\mathbf{w}_2 \in \Omega^{10}} \tilde{Z}_3$, then
$\mathbf{w} \in \mathbf{w}_2(\lambda_2^A) \bigcup  \frac{\mathbf{h}_{22}}{\|\mathbf{h}_{22} \|}$ .
\item If $\mathbf{w}= \arg \max_{\mathbf{w}_2 \in \Omega^{00}} \tilde{Z}_4$, then
$\mathbf{w} \in \Omega^{A} \bigcup \tilde{\mathcal{W}}_{2}^{dd}$
\end{itemize}
where the constraints $\Omega^A$ and $\Omega^B$ are the set of beamforming vectors that satisfy the constraints by equality $\Omega^{A}= \left\{ \mathbf{w}_2: (1+ g_{11}) |\mathbf{h}_{22}^H \mathbf{w}_2|^2= |\mathbf{h}_{12}^H \mathbf{w}_2|^2 \right\}$ and 
 $\Omega^{B}= \left\{ \mathbf{w}_2:  |\mathbf{h}_{22}^H \mathbf{w}_2|^2 P_2 = \frac{g_{21}}{g_{11}}-1 \right\}$. The beamforming vector $\mathbf{w}_2(\lambda_2^A)$ is the beamforming vector in $\Omega^A$ that maximizes the desired channel power, $\mathbf{w}_2(\lambda^{A}_2)= \arg \max_{\mathbf{w}_2 \in \Omega^{A}} (1+ g_{11}) |\mathbf{h}_{22}^H \mathbf{w}_2|^2$. The beamforming vector $\mathbf{w}_2^{AB}$ is a unique vector that is a member of both $\Omega^A $ and $\Omega^B$, $\mathbf{w}_2^{AB}= \Omega^{A} \bigcap \Omega^{B}$. Lastly, we have $\tilde{\mathcal{V}}_2$ which is a subset of $\mathcal{V}_2$,
$\tilde{\mathcal{V}}_2=\left\{ \mathbf{w}_2: \sqrt{\lambda_2} \frac{\Pi_{22}\mathbf{h}_{12}}{\|\Pi_{22}\mathbf{h}_{12} \|}+ \sqrt{1-\lambda_2} \frac{\Pi_{22}^\perp \mathbf{h}_{12}}{\|\Pi_{22}^\perp \mathbf{h}_{12} \|} , \lambda_2^{A} \leq \lambda_2 \leq \lambda_2^{\mrt} \right\} $ where $\lambda_2^{\mrt}=\frac{|\mathbf{h}_{22}^H\mathbf{h}_{12}|^2}{\|\mathbf{h}_{22}\|^2 \|\mathbf{h}_{12}\|^2}$.

Notice that beamforming vectors in $\Omega^{A}, \Omega^{B}$ satisfy the constraints in  \eqref{eqt:rdd_constraints} with equality. To see this, we have the following observations:

1. $\tilde{Z}_1$ is monotonically increasing with $|\mathbf{h}_{22}^H \mathbf{w}_2|^2$. If the constraints are not active, the optimal solution is $\frac{\mathbf{h}_{22}}{\|\mathbf{h}_{22} \|}$. If the constraints are active, $\tilde{Z}_1$ is maximized over constraint set $\Omega^{11}$ and therefore the optimal solutions are in $\Omega^{A} \bigcup \Omega^{B}$. 

2. $\tilde{Z}_2$ is monotonically increasing with $|\mathbf{h}_{12}^H \mathbf{w}_2|^2$in constraint set $\Omega^{01}$. If the constraints are not active, the optimal solution is $\frac{\mathbf{h}_{12}}{\|\mathbf{h}_{12} \|}$.  There are an upper bound on $|\mathbf{h}_{12}^H \mathbf{w}_2|^2$ and an upper bound on $|\mathbf{h}_{22}^H \mathbf{w}_2|^2$ which in turn upper bound $|\mathbf{h}_{12}^H \mathbf{w}_2|^2$. If the constraint for $|\mathbf{h}_{12}^H \mathbf{w}_2|^2$ is active, the solution is $\mathbf{w}_2(\lambda_2^A)$. If both are active, the solution is $\mathbf{w}_2^{AB}$. Thus, the candidate set is $\left\{ \mathbf{w}_2(\lambda_2^A), \mathbf{w}_2^{AB}, \frac{\mathbf{h}_{12}}{\|\mathbf{h}_{12} \|}\right\}.$

3. $\tilde{Z}_3$ is monotonically increasing with $|\mathbf{h}_{22}^H \mathbf{w}_2|^2$ in constraint set $\Omega^{10}$. There is only one constraint that upper bound the value of $|\mathbf{h}_{22}^H \mathbf{w}_2|^2$, the optimal solution is in set $\Omega^{A}$ which at the same time maximizes $|\mathbf{h}_{22}^H \mathbf{w}_2|^2$, denote as $\mathbf{w}_2(\lambda_2^A)$. If the constraints are not active, we have $\frac{\mathbf{h}_{22}}{\| \mathbf{h}_{22}\|}$.

4. $\tilde{Z}_4$ is monotonically increasing with $|\mathbf{h}_{12}^H \mathbf{w}_2|^2$ and decreasing with $|\mathbf{h}_{22}^H \mathbf{w}_2|^2$ in constraint set $\Omega^{00}$. If the constraints are active, the optimal solutions are in $\Omega^{A}$. If the constraints are not active, the optimal solutions are in $\tilde{\mathcal{V}}_2$. Similar to the case of $\bar{\mathcal{R}}^{nd}$, any $\lambda \notin (\lambda_2^{A} \leq \lambda_2 \leq \lambda_2^{\mrt})$ cannot attain maximum value of $\tilde{Z}_4$ in $\Omega^{00}$. Notice that for any $\lambda \geq \lambda^{mrt}$, $|\mathbf{h}_{12}^H \mathbf{w}_2(\lambda)|^2 \leq |\mathbf{h}_{12}^H \mathbf{w}_2(\lambda_2^{\mrt})|^2$ and $|\mathbf{h}_{22}^H \mathbf{w}_2(\lambda)|^2 =\lambda \| \mathbf{h}_{12}\|^2 \geq \lambda_2^{\mrt} \| \mathbf{h}_{12}\|^2 = |\mathbf{h}_{12}^H \mathbf{w}_2(\lambda_2^{mrt})|^2$ . Thus, $\tilde{Z}_4(\lambda)\leq \tilde{Z}_4(\lambda_2^{\mrt}),$  for any $ \lambda \geq \lambda_2^{\mrt}$.
Also, for any $\lambda < \lambda_2^{A}$, $(1+g_{11})|\mathbf{h}_{22}^H \mathbf{w}_2(\lambda)|^2 \leq |\mathbf{h}_{12}^H \mathbf{w}_2(\lambda)|^2$. It is because $|\mathbf{h}_{12}^H \mathbf{w}_2(\lambda)|^2$ is concave in $\lambda$ and attains maximum at $\lambda_2^{\mrt}$ whereas $|\mathbf{h}_{22}^H \mathbf{w}_2(\lambda)|^2$ is linearly increasing with $\lambda$. Since the intersection point $\lambda_2^A \leq \lambda_2^{mrt}$, we have $(1+g_{11})|\mathbf{h}_{22}^H \mathbf{w}_2(\lambda)|^2 \leq |\mathbf{h}_{12}^H \mathbf{w}_2(\lambda)|^2$ which violates the constraint set requirement.

\subsubsection{Eliminating non-sum-rate optimal solutions}\label{thm:rddoptsolcom}
Now, we  combine the above results. For any solutions $\omega \in \Omega^A$, we have $\omega \in \Omega^{0b}$ and $\omega \in \Omega^{1b}$ for $b=0,1$. This is because $\omega$ is on the boundary separating $\Omega^{0b}, \Omega^{1b}$. Thus, if $\omega$ is sum rate optimal in $\Omega^{0b}$ but  \emph{not} sum rate optimal in  $\Omega^{1b}$, then the sum rate optimal solutions in $\Omega^{1b}$ achieves a higher sum rate than $\omega$ which is in the \emph{same} constraint set. Then, $\omega$ can be removed from the candidate sets of the maximum sum rate point over constraint sets $\Omega^{0b} \bigcup \Omega^{1b}$. Applying this argument, we combine the following:

For any $\omega \in \Omega^A$ which maximizes $\tilde{Z}_4$ in $\Omega^{00}$ is also in $\Omega^{10}$ and achieves a \emph{smaller} $\tilde{Z}_3$ than other sum rate optimal solutions in $\Omega^{10}$, namely $\mathbf{w}_2(\lambda_2^A) \bigcup \frac{\mathbf{h}_{22}}{\|\mathbf{h}_{22} \|}$. Thus, we have: 
\begin{equation}
\text{ If } \omega= \arg \max_{\mathbf{w}_2 \in \Omega^{10} \bigcup \Omega^{00}} \min\left\{ \tilde{Z}_3,\tilde{Z}_4\right\},   \text{ then } \omega \in \mathbf{w}_2(\lambda_2^A) \bigcup \frac{\mathbf{h}_{22}}{\|\mathbf{h}_{22} \|} \bigcup \tilde{\mathcal{V}}_2^{dd}
\end{equation}

Similarly, for any $\omega \in \Omega^B$ which maximizes $\tilde{Z}_2$ in the constraint set $\Omega^{01}$ is also in $\Omega^{00}$ and achieves a smaller $\tilde{Z}_4$ than other solutions. Thus, we have:
 \begin{equation}
\text{ If } \omega= \arg \max_{\mathbf{w}_2 \in \Omega^{01} \bigcup \Omega^{10} \bigcup \Omega^{00}} \min\left\{ \tilde{Z}_2,\tilde{Z}_3,\tilde{Z}_4\right\}, \text{ then }  \omega \in \mathbf{w}_2(\lambda_2^A) \bigcup \frac{\mathbf{h}_{22}}{\|\mathbf{h}_{22} \|} \bigcup \tilde{\mathcal{V}}_2^{dd}.
\end{equation} The candidate set remains unchanged because $\mathbf{w}_2^{AB}$ performs worse than other solutions and $\frac{\mathbf{h}_{12}}{\| \mathbf{h}_{12}\|} \in \tilde{\mathcal{V}}_2^{dd}$.

 Lastly,  for any $\omega_b \in \Omega^B$ which maximizes $\tilde{Z}_1$ in constraint set $\Omega^{11}$ is also in $\Omega^{10}$ and achieves a smaller $\tilde{Z}_3$ than other solutions. And for $\omega_a \in \Omega^A$ which maximizes $\tilde{Z}_1$ in constraint set $\Omega^{11}$ is also in $\Omega^{01}$ and achieves a less $\tilde{Z}_2$ than other solutions. Thus, we have:
 \begin{equation}
\text{ If } \omega= \arg \max_{\mathbf{w}_2 \in \Omega^{11} \bigcup \Omega^{01} \bigcup \Omega^{10} \bigcup \Omega^{00}} \min\left\{ \tilde{Z}_1, \tilde{Z}_2,\tilde{Z}_3,\tilde{Z}_4\right\},  \text{ then }  \omega \in \mathbf{w}_2(\lambda_2^A) \bigcup \frac{\mathbf{h}_{22}}{\|\mathbf{h}_{22} \|} \bigcup \tilde{\mathcal{V}}_2^{dd}.
\end{equation} 

Therefore, the final candidate sets are
\begin{equation}
\mathcal{V}_i^{dd}=\left\{  \frac{\mathbf{h}_{ii}}{\|\mathbf{h}_{ii} \|}, \tilde{\mathcal{V}}_i^{dd}, \mathbf{w}_i(\lambda_i^A)\right\}.
\end{equation} This result holds for $\mathbf{w}_1$ because the optimization problem is symmetric.

To compute the closed form $\lambda_2^A$ where $\mathbf{w}_2(\lambda_2^A)$ balances channel powers, we can use the same approach as before to obtain
$\lambda_2^{A}=\frac{\|\Pi_{22}^\perp \mathbf{h}_{12} \|}{\|\mathbf{h}_{12}\|^2 + (1+g_{11})\|\mathbf{h}_{22} \|^2-2 |\mathbf{h}_{22}^H \mathbf{h}_{12}| \sqrt{1+g_{11}}}$.

\subsection{Proof of MRT optimality conditions in the ND region}\label{app:mrtopt}

Define the following two functions in $0 \leq \lambda_2 \leq 1$:
\begin{itemize}
\item $F_1(\lambda_2)$ is a concave function in $\lambda_2$ and attains maximum at $\lambda_2^{\mrt}$.
\item $F_2(\lambda_2)$ is an arbitrary function in $\lambda_2$ but satisfy the following properties:
\begin{itemize}
\item for any $\lambda>\lambda_2^{\mrt}$, $F_2(\lambda)< F_2(\lambda_2^{\mrt})$.
\item $\left. \frac{\partial F_2(\lambda)}{\partial \lambda_2} \right|_{\lambda_2=\lambda_2^{\mrt}}<0 $.
\end{itemize}
\end{itemize}
\begin{Lem}\label{lem:rnd_mrt_opt}
The condition $F_2(\lambda_2^{\mrt})\geq F_1(\lambda_2^{\mrt})$ is a necessary and sufficient condition for 
\begin{equation}
\lambda_2^{\mrt}=\argmax_{0 \leq \lambda_2 \leq 1} \min\{ F_1(\lambda_2), F_2(\lambda_2)\}.
\end{equation}
\end{Lem}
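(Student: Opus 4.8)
The plan is to prove the two implications separately, viewing $\min\{F_1,F_2\}$ as the pointwise minimum of the two given curves on $[0,1]$ and exploiting the single structural fact that $F_1$ is concave and peaks \emph{exactly} at $\lambda_2^{\mrt}$.

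For \textbf{sufficiency} I would assume $F_2(\lambda_2^{\mrt})\geq F_1(\lambda_2^{\mrt})$. Then at $\lambda_2^{\mrt}$ the minimum is attained by $F_1$, i.e. $\min\{F_1(\lambda_2^{\mrt}),F_2(\lambda_2^{\mrt})\}=F_1(\lambda_2^{\mrt})$. For any other $\lambda_2\in[0,1]$ the minimum is never larger than $F_1$ itself, and concavity of $F_1$ with maximizer $\lambda_2^{\mrt}$ gives $F_1(\lambda_2)\leq F_1(\lambda_2^{\mrt})$. Chaining these,
\begin{equation}
\min\{F_1(\lambda_2),F_2(\lambda_2)\}\leq F_1(\lambda_2)\leq F_1(\lambda_2^{\mrt})=\min\{F_1(\lambda_2^{\mrt}),F_2(\lambda_2^{\mrt})\},
\end{equation}
so $\lambda_2^{\mrt}$ attains the maximum. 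This direction uses nothing beyond concavity and the inequality at the peak.

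For \textbf{necessity} I would argue by contraposition: assume $F_2(\lambda_2^{\mrt})<F_1(\lambda_2^{\mrt})$ and exhibit a feasible point strictly beating $\lambda_2^{\mrt}$. At $\lambda_2^{\mrt}$ the minimum equals $F_2(\lambda_2^{\mrt})$. By continuity of $F_1$ and $F_2$ the strict inequality $F_2<F_1$ persists on a left neighborhood $(\lambda_2^{\mrt}-\delta,\lambda_2^{\mrt})$, so on it $\min\{F_1,F_2\}=F_2$. Because $\left.\partial F_2/\partial\lambda_2\right|_{\lambda_2=\lambda_2^{\mrt}}<0$, the function $F_2$ is strictly decreasing at $\lambda_2^{\mrt}$, hence $F_2(\lambda_2^{\mrt}-\epsilon)>F_2(\lambda_2^{\mrt})$ for all sufficiently small $\epsilon>0$. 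Choosing $0<\epsilon<\delta$ gives a feasible point with $\min\{F_1(\lambda_2^{\mrt}-\epsilon),F_2(\lambda_2^{\mrt}-\epsilon)\}=F_2(\lambda_2^{\mrt}-\epsilon)>F_2(\lambda_2^{\mrt})=\min\{F_1(\lambda_2^{\mrt}),F_2(\lambda_2^{\mrt})\}$, so $\lambda_2^{\mrt}$ is not the maximizer. This is exactly the contrapositive of necessity.

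The main obstacle is the interior/boundary issue in the necessity step: the left perturbation requires $\lambda_2^{\mrt}>0$ so that $\lambda_2^{\mrt}-\epsilon$ stays in $[0,1]$, and it requires $F_2$ to be differentiable there for the derivative hypothesis to bite. Since $\lambda_2^{\mrt}=|\mathbf{h}_{12}^H\mathbf{h}_{22}|/(\|\mathbf{h}_{12}\|\,\|\mathbf{h}_{22}\|)$ is strictly positive whenever the desired and interference channels are not orthogonal, the perturbation is admissible in the nondegenerate case, and I would state this genericity assumption explicitly. I would also remark that the monotonicity property $F_2(\lambda)<F_2(\lambda_2^{\mrt})$ for $\lambda>\lambda_2^{\mrt}$ is not actually needed to \emph{decide} whether $\lambda_2^{\mrt}$ is optimal, since the left perturbation alone settles it; rather, it guarantees that no competing maximizer can appear to the right of $\lambda_2^{\mrt}$, thereby pinning $\lambda_2^{\mrt}$ down as the unique optimizer under the stated condition.
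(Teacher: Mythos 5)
Your proof is correct and is essentially the paper's own argument: sufficiency follows from the concavity of $F_1$ peaking at $\lambda_2^{\mrt}$ combined with the hypothesis $F_2(\lambda_2^{\mrt})\geq F_1(\lambda_2^{\mrt})$, and necessity is the same left-perturbation contradiction built on continuity and $\left.\partial F_2/\partial \lambda_2\right|_{\lambda_2=\lambda_2^{\mrt}}<0$. The only minor difference is that the paper's sufficiency step (run through the set $\Lambda=\{\lambda_2 : F_1(\lambda_2)\leq F_2(\lambda_2)\}$ and an arbitrary maximizer $\lambda_2^*$) simultaneously yields uniqueness of the maximizer from the unique peak of $F_1$, whereas your direct chain establishes attainment and defers uniqueness to your closing remark, where you attribute it to the right-decay property of $F_2$ even though it already follows from the unique peak of $F_1$ alone.
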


\begin{proof}
\begin{description}
\item[ \text{``}$\Rightarrow$\text{''}:] Let $\lambda_2^{*}=\argmax_{0 \leq \lambda_2 \leq 1} \min\{ F_1(\lambda_2), F_2(\lambda_2)\}$. Denote a set $\Lambda$
\begin{equation}
\Lambda=\{ \lambda_2: F_1(\lambda_2)\leq F_2(\lambda_2) \}
\end{equation} Let $\lambda_2'\in \Lambda$ and therefore  $F_1(\lambda_2')\leq F_1(\lambda_2^*)\leq F_1(\lambda_2^{\mrt})$. The second inequality is due to the fact that $F_1(\lambda_2)$ attains maximum at $\lambda_2^{\mrt}$. Note that $F_1(\lambda_2^{\mrt}) \leq F_2(\lambda_2^{\mrt})$ by assumption, thus $\lambda_2^{\mrt}\in \Lambda$ and we can write
\begin{equation}
F_1(\lambda_2^{\mrt}) \leq F_1(\lambda_2^*) \leq F_1(\lambda^{\mrt}_2).
\end{equation}
Since $F_1(\lambda_2)$ is a concave and has unique maximum at $\lambda^{\mrt}_2$, we have $\lambda_2^*=\lambda_2^{\mrt}$.
\item[ \text{``}$\Leftarrow$\text{''}:] We start with $\lambda_2^{\mrt}=\argmax_{\lambda_2} \min\{ F_1(\lambda_2), F_2(\lambda_2)\}$ and we proceed with contradiction. Assume $F_1(\lambda_2^{\mrt})>F_2(\lambda_2^{\mrt})$. Since $\left. \frac{\partial F_2(\lambda)}{\partial \lambda_2}  \right|_{\lambda_2=\lambda_2^{\mrt}}<0 $, there exist $\lambda_2^- =\lambda_2^{\mrt}-\epsilon$ with arbitrary small $\epsilon>0$, which satisfies $F_1(\lambda_2^{-})\geq F_2(\lambda_2^-)\geq F_2(\lambda_2^{\mrt})$ and contradicts  to the assumption that $\lambda_2^{\mrt}=\argmax_{\lambda_2} \min\{ F_1(\lambda_2), F_2(\lambda_2)\}$ .  
\end{description}
\end{proof}

Note that the sum rate in the ND region is
\begin{equation}
\bar{R}^{nd}=\max_{0 \leq \lambda_1,\lambda_2 \leq 1} \min\{  C_2(\lambda_2)+T_1(\lambda_1,\lambda_2), C_2(\lambda_2)+D_1(\lambda_1,\lambda_2)\}.
\end{equation} From Thm. \ref{thm:pareto_rnd}, we can write the Pareto optimal beamforming vectors $\mathbf{w}_i$ in the ND region as a function of the real valued parameter $\lambda_i$ in \eqref{eqt:bf_set_w}. Hence, we can rewrite the rate expressions in \eqref{eqt:ch4_rate_def} as functions of $\lambda_i, i=1,2$:
\begin{equation}
\begin{aligned}
2^{C_2(\lambda_2)+ T_1(\lambda_1,\lambda_2)}&= 1+ g_2(\lambda_2) P_{max} + \lambda_1 \| \mathbf{h}_{21} \|^2 P_{max}\\
2^{C_2(\lambda_2)+ D_1(\lambda_1,\lambda_2)}&= \left(1+ g_2(\lambda_2) P_{max} \right) \left( 1+ \frac{g_1(\lambda_1) P_{max}}{1+ \lambda_2 \| \mathbf{h}_{12}\|^2 P_{max}}\right)\\
\end{aligned}
\end{equation} where
$g_1(\lambda_1)= \left( \sqrt{\lambda_1} \|\Pi_{21}\mathbf{h}_{11}\| +\sqrt{1-\lambda_1} \|\Pi_{21}^\perp \mathbf{h}_{11} \| \right)^2$ and
$g_2(\lambda_2)= \left( \sqrt{\lambda_2} \|\Pi_{12}\mathbf{h}_{22} \|+\sqrt{1-\lambda_2} \|\Pi_{12}^\perp \mathbf{h}_{22} \| \right)^2$.
Since logarithm function is monotonic, it does not change the maximization solution and from now on, we consider maximizing the minimum of the following two functions,
\begin{eqnarray}
F_1(\lambda_1,\lambda_2)&=&1+ g_2(\lambda_2) P_{max} + \lambda_1 \| \mathbf{h}_{21} \|^2 P_{max}\\
F_2(\lambda_1,\lambda_2)&=& \left(1+ g_2(\lambda_2)  P_{max}\right) \left( 1+ \frac{g_1(\lambda_1) P_{max}}{1+ \lambda_2 \| \mathbf{h}_{12}\|^2 P_{max}}\right).
\end{eqnarray}

\begin{Lem}
It can be shown that the function $g_i(\lambda_i)$ is concave in $\lambda_i$ for $i=1,2$. $F_1(\lambda_1,\lambda_2)$ is concave in $\lambda_2$ and attains its maximum at $\lambda_2^{\mrt}=\frac{\| \Pi_{21}\mathbf{h}_{11}\|^2}{ \|\mathbf{h}_{21} \|^2 \| \mathbf{h}_{11}\|^2}$.
\end{Lem}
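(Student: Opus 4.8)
The plan is to reduce the entire statement to one concavity fact about the scalar functions $g_i$, exploiting that $F_1$ depends on $\lambda_2$ only through $g_2$. Writing $F_1(\lambda_1,\lambda_2)=1+g_2(\lambda_2)P_{max}+\lambda_1\|\mathbf{h}_{21}\|^2P_{max}$, I observe that for fixed $\lambda_1$ the map $\lambda_2\mapsto F_1$ is an increasing affine function of $g_2$ (slope $P_{max}>0$). Hence $F_1(\cdot,\lambda_2)$ inherits concavity in $\lambda_2$ verbatim from $g_2$, and its maximizer over $\lambda_2$ coincides with the maximizer of $g_2$. So the real content is (i) concavity of each $g_i$ and (ii) the location of the maximizer of $g_2$, after which the $F_1$ claims are immediate.

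For (i), set $\alpha_i=\|\Pi_{ji}\mathbf{h}_{ii}\|\ge0$ and $\beta_i=\|\Pi_{ji}^\perp\mathbf{h}_{ii}\|\ge0$ and expand
\begin{equation}
g_i(\lambda_i)=\big(\sqrt{\lambda_i}\,\alpha_i+\sqrt{1-\lambda_i}\,\beta_i\big)^2=\alpha_i^2\lambda_i+\beta_i^2(1-\lambda_i)+2\alpha_i\beta_i\sqrt{\lambda_i(1-\lambda_i)}.
\end{equation}
The first two terms are affine in $\lambda_i$, so concavity of $g_i$ reduces to concavity of $\sqrt{\lambda_i(1-\lambda_i)}$ on $[0,1]$, scaled by the nonnegative constant $2\alpha_i\beta_i$. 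I would establish the latter by the composition rule: $\lambda_i(1-\lambda_i)$ is a concave (downward) quadratic, nonnegative on $[0,1]$, and $\sqrt{\cdot}$ is concave and nondecreasing, so the composition is concave; equivalently one checks $\tfrac{d^2}{d\lambda_i^2}\sqrt{\lambda_i(1-\lambda_i)}\le0$ on $(0,1)$. I expect \emph{this} to be the main obstacle, in the sense that the naive route of differentiating $g_i=\phi_i^2$ twice yields a term $2(\phi_i')^2\ge0$ that hides the sign; the decisive observation is that after expansion the only nonaffine piece is $\sqrt{\lambda_i(1-\lambda_i)}$, whose concavity is transparent.

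Given concavity, the maximizer of $g_2$ follows from the stationarity condition applied to $\phi_2(\lambda_2)=\sqrt{\lambda_2}\,\alpha_2+\sqrt{1-\lambda_2}\,\beta_2$ (maximizing $\phi_2\ge0$ maximizes $g_2=\phi_2^2$). Setting $\phi_2'(\lambda_2)=0$ gives $\alpha_2^2(1-\lambda_2)=\beta_2^2\lambda_2$, i.e.\ $\lambda_2^{\mrt}=\alpha_2^2/(\alpha_2^2+\beta_2^2)=\|\Pi_{12}\mathbf{h}_{22}\|^2/\|\mathbf{h}_{22}\|^2$, using $\|\Pi_{12}\mathbf{h}_{22}\|^2+\|\Pi_{12}^\perp\mathbf{h}_{22}\|^2=\|\mathbf{h}_{22}\|^2$; since $g_2$ is concave this interior critical point is the global maximum on $[0,1]$. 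At $\lambda_2^{\mrt}$ the two orthogonal components of $\mathbf{w}_2$ are in the proportion $\|\Pi_{12}\mathbf{h}_{22}\|:\|\Pi_{12}^\perp\mathbf{h}_{22}\|$, so $\mathbf{w}_2(\lambda_2^{\mrt})=\mathbf{h}_{22}/\|\mathbf{h}_{22}\|$, the matched filter toward the desired channel, which justifies the label $\lambda_2^{\mrt}$. Combining with the first paragraph gives the concavity of $F_1$ in $\lambda_2$ and its maximizer, and the same computation on $g_1$ settles the $i=1$ case.
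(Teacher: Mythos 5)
Your proof is correct, and it establishes exactly the two facts the paper needs (concavity of $g_i$ and the location of its maximizer), but by a cleaner route than the paper's own proof. The paper proceeds by direct computation: it differentiates $g_i$ twice, obtains $\frac{\partial^2}{\partial \lambda_i^2} g_i(\lambda_i)= -\frac{\| \Pi_{ji}\mathbf{h}_{ii}\| \| \Pi_{ji}^\perp \mathbf{h}_{ii}\|}{2 \lambda_i^{3/2} (1-\lambda_i)^{3/2}} \leq 0$, and then sets the first derivative to zero to find the maximum, leaving the reduction from $F_1$ to $g_2$ implicit. You instead (i) make that reduction explicit ($F_1$ is an increasing affine function of $g_2(\lambda_2)$ for fixed $\lambda_1$, so it inherits concavity and the maximizer), (ii) obtain concavity structurally by expanding $g_i$ into an affine part plus $2\alpha_i\beta_i\sqrt{\lambda_i(1-\lambda_i)}$ and invoking the concave-increasing composition rule, and (iii) locate the maximizer by stationarity of $\phi_i=\sqrt{g_i}$, which together with $\alpha_i^2+\beta_i^2=\|\mathbf{h}_{ii}\|^2$ yields the closed form with almost no algebra. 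Your route buys transparency (no second-derivative bookkeeping, and the identification $\mathbf{w}_2(\lambda_2^{\mrt})=\mathbf{h}_{22}/\|\mathbf{h}_{22}\|$ falls out of the component proportions); the paper's route yields explicit derivative expressions, though these are not actually reused in the later lemmas, which only invoke concavity and the maximizer. One point worth flagging: your value $\lambda_2^{\mrt}=\|\Pi_{12}\mathbf{h}_{22}\|^2/\|\mathbf{h}_{22}\|^2=\cos^2(\theta_2)$ does not literally match the formula printed in the lemma, $\frac{\| \Pi_{21}\mathbf{h}_{11}\|^2}{ \|\mathbf{h}_{21} \|^2 \| \mathbf{h}_{11}\|^2}$; that printed formula carries index and normalization typos (it refers to user 1's channels and is not dimensionless), whereas your value is the one consistent with how $\lambda_2^{\mrt}$ is used elsewhere in the paper, e.g. $g_2(\lambda_2^{\mrt})=\|\mathbf{h}_{22}\|^2$ and $\lambda_2^{\mrt}\|\mathbf{h}_{12}\|^2=\|\mathbf{h}_{12}\|^2\cos^2(\theta_2)$ in the MRT optimality conditions. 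So your derivation is sound and in effect corrects the statement rather than contradicting it.
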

\begin{proof}
Note that the first and second derivatives of $g_i(\lambda_i)$ with respect to $\lambda_i$ are $\frac{\partial }{\partial \lambda_i} g_i(\lambda_i)= \| \Pi_{ji}\mathbf{h}_{ii}\|^2 - \| \Pi_{ji}^\perp\mathbf{h}_{ii}\|^2 + \| \Pi_{ji}\mathbf{h}_{ii}\| \| \Pi_{ji}^\perp \mathbf{h}_{ii}\| \left( \frac{1-2 \lambda_i}{\sqrt{\lambda_i} \sqrt{1-\lambda_i}}\right)$ and $ \frac{\partial^2 }{\partial \lambda_i^2} g_i(\lambda_i)= -\frac{\| \Pi_{ji}\mathbf{h}_{ii}\| \| \Pi_{ji}^\perp \mathbf{h}_{ii}\|}{2 \lambda_i^{3/2} (1-\lambda_i)^{3/2}}$.
Since $\lambda_i$ is  between zero and one, the second derivative of $g_i(\lambda_i)$ is always negative, for all $\lambda_i$. Set the first derivative to zero and we obtain the maximum $\lambda_2^{\mrt}=\frac{\| \Pi_{21}\mathbf{h}_{11}\|^2}{ \|\mathbf{h}_{21} \|^2 \| \mathbf{h}_{11}\|^2}$.
\end{proof}
\begin{Lem}
$F_2(\lambda_1,\lambda_2)$ satisfies 
\begin{equation*}
\left. \frac{\partial F_2(\lambda_1,\lambda_2)}{\partial \lambda_2}  \right|_{\lambda_2=\lambda_2^{\mrt}}<0 .
\end{equation*}
\end{Lem}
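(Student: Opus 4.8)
The plan is to exploit the product structure of $F_2$. I would write $F_2(\lambda_1,\lambda_2)=A(\lambda_2)\,B(\lambda_2)$, where
\[
A(\lambda_2)=1+ g_2(\lambda_2) P_{max}, \qquad B(\lambda_2)=1+ \frac{g_1(\lambda_1) P_{max}}{1+ \lambda_2 \| \mathbf{h}_{12}\|^2 P_{max}},
\]
with the fixed $\lambda_1$ suppressed in the notation. By the product rule,
\[
\frac{\partial F_2}{\partial \lambda_2}= A'(\lambda_2)\,B(\lambda_2) + A(\lambda_2)\,B'(\lambda_2),
\]
so it suffices to evaluate the two summands at $\lambda_2=\lambda_2^{\mrt}$ and show their sum is strictly negative. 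The whole argument reduces to identifying the sign (or vanishing) of each factor there.

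First I would note that $A$ depends on $\lambda_2$ only through $g_2(\lambda_2)$, exactly as $F_1$ does. The preceding lemma has already shown $g_2$ to be concave with its maximum at $\lambda_2^{\mrt}$; hence $g_2'(\lambda_2^{\mrt})=0$ and therefore $A'(\lambda_2^{\mrt})=P_{max}\,g_2'(\lambda_2^{\mrt})=0$, which annihilates the first summand regardless of the sign of $B(\lambda_2^{\mrt})$. Next I would treat $B$: since $g_1(\lambda_1)$ is held fixed and the denominator $1+\lambda_2\|\mathbf{h}_{12}\|^2 P_{max}$ is strictly increasing in $\lambda_2$, the fraction is strictly decreasing, giving $B'(\lambda_2)<0$ for every $\lambda_2\in[0,1]$ whenever the desired and interference channel powers are nonzero. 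Finally $A(\lambda_2^{\mrt})=1+g_2(\lambda_2^{\mrt})P_{max}\geq 1>0$. Substituting,
\[
\left.\frac{\partial F_2}{\partial \lambda_2}\right|_{\lambda_2^{\mrt}}= \underbrace{A'(\lambda_2^{\mrt})}_{=0}\,B(\lambda_2^{\mrt}) + \underbrace{A(\lambda_2^{\mrt})}_{>0}\,\underbrace{B'(\lambda_2^{\mrt})}_{<0}<0,
\]
which is the claim.

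The only genuine content, and the step I would be most careful about, is the vanishing $A'(\lambda_2^{\mrt})=0$: this is precisely the assertion that the MRT parameter $\lambda_2^{\mrt}$ coincides with the maximizer of $g_2$, which I would import directly from the previous lemma rather than recompute. The monotonicity of $B$ is routine, the one caveat being the degenerate cases $g_1(\lambda_1)=0$ (forcing $\mathbf{h}_{11}=\mathbf{0}$) or $\mathbf{h}_{12}=\mathbf{0}$, in which $B'\equiv 0$ and the derivative merely vanishes; the \emph{strict} inequality thus rests on the desired and interference links carrying nonzero power, an assumption I would state explicitly alongside the non-parallel-channel hypothesis already invoked in the full-power argument.
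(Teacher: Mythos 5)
Your proof is correct and takes essentially the same route as the paper: the paper likewise applies the product rule, annihilates the first summand via the fact that $g_2$ is concave and maximized at $\lambda_2^{\mrt}$, and concludes from the strictly decreasing fraction, arriving at the explicit value $-\left(1+\|\mathbf{h}_{22}\|^2 P_{max}\right)\frac{g_1(\lambda_1)\|\mathbf{h}_{12}\|^2 P_{max}}{(1+\lambda_2^{\mrt}\|\mathbf{h}_{12}\|^2 P_{max})^2}$. Your caveat on the degenerate cases $g_1(\lambda_1)=0$ or $\mathbf{h}_{12}=\mathbf{0}$ is a point the paper silently glosses over (and note that at the endpoints $\lambda_1\in\{0,1\}$, $g_1(\lambda_1)=0$ can occur with parallel or orthogonal nonzero channels, not only when $\mathbf{h}_{11}=\mathbf{0}$).
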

\begin{proof}
\begin{eqnarray*}
 \left. \frac{\partial }{\partial \lambda_2} F_2(\lambda_1,\lambda_2) \right|_{\lambda_2=\lambda_2^{\mrt}} 
&=& \left. \left( \frac{\partial}{\partial \lambda_2} \left(1+ g_2(\lambda_2) P_{max} \right) \right)   \left( 1+ \frac{g_1(\lambda_1) P_{max}}{1+ \lambda_2 \| \mathbf{h}_{12}\|^2 P_{max}}\right) \right|_{\lambda_2=\lambda_2^{\mrt} }\\
& &  + \left.  \left(1+ g_2(\lambda_2) P_{max}\right) \left(\frac{\partial}{\partial \lambda_2} \left( 1+ \frac{g_1(\lambda_1)P_{max}}{1+ \lambda_2 \| \mathbf{h}_{12}\|^2 P_{max}} \right) \right) \right|_{\lambda_2=\lambda_2^{\mrt}} \\
&\overset{(a)}{=}&  \left. \left(1+ g_2(\lambda_2) P_{max}\right) \left(\frac{\partial}{\partial \lambda_2} \left( 1+ \frac{g_1(\lambda_1)P_{max}}{1+ \lambda_2 \| \mathbf{h}_{12}\|^2P_{max}}\right) \right)\right|_{\lambda_2=\lambda_2^{\mrt}} \\
&=& -\left(1+ \| \mathbf{h}_{22}\|^2 P_{max}\right) \frac{g_1(\lambda_1) \| \mathbf{h}_{12}\|^2 P_{max}}{(1+ \lambda_2^{\mrt} \| \mathbf{h}_{12}\|^2 P_{max} )^2 }\\
&<& 0 \text{ for any } 0\leq \lambda_1 \leq 1
\end{eqnarray*}
where $(a)$ is due to the fact that $g(\lambda_2)$ is concave and attains its maximum at $\lambda_2^{\mrt}$.
\end{proof}
\begin{Lem}
For any $\lambda> \lambda_2^{\mrt}$, $F_2(\lambda)< F_2(\lambda_2^{\mrt})$.
\end{Lem}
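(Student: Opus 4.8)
The plan is to exploit the product structure of $F_2$ and show that, for $\lambda_2$ to the right of $\lambda_2^{\mrt}$, it is a product of two strictly positive factors, each of which is non-increasing (and at least one strictly decreasing) in $\lambda_2$. Concretely, for any fixed $\lambda_1$ I would write $F_2(\lambda_1,\lambda_2)=A(\lambda_2)\,B(\lambda_2)$ with $A(\lambda_2)=1+g_2(\lambda_2)P_{max}$ and $B(\lambda_2)=1+\frac{g_1(\lambda_1)P_{max}}{1+\lambda_2\|\mathbf{h}_{12}\|^2 P_{max}}$, both of which are strictly positive on $0\le\lambda_2\le 1$.

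First I would control the factor $A$. By the preceding lemma, $g_2(\lambda_2)$ is strictly concave on $[0,1]$ and attains its unique maximum at $\lambda_2^{\mrt}$; hence $g_2$ is strictly decreasing for $\lambda_2>\lambda_2^{\mrt}$, so that $A(\lambda)<A(\lambda_2^{\mrt})$ whenever $\lambda>\lambda_2^{\mrt}$. Next I would handle $B$: since the denominator $1+\lambda_2\|\mathbf{h}_{12}\|^2 P_{max}$ is strictly increasing in $\lambda_2$ and $g_1(\lambda_1)P_{max}\ge 0$, the factor $B(\lambda_2)$ is non-increasing in $\lambda_2$, so $B(\lambda)\le B(\lambda_2^{\mrt})$.

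Combining the two, and using that both $A$ and $B$ are strictly positive, I would chain the inequalities $A(\lambda)B(\lambda)<A(\lambda_2^{\mrt})B(\lambda)\le A(\lambda_2^{\mrt})B(\lambda_2^{\mrt})$, where the first step multiplies the strict inequality $A(\lambda)<A(\lambda_2^{\mrt})$ by $B(\lambda)>0$ and the second multiplies $B(\lambda)\le B(\lambda_2^{\mrt})$ by $A(\lambda_2^{\mrt})>0$. This yields $F_2(\lambda)<F_2(\lambda_2^{\mrt})$ for every $\lambda>\lambda_2^{\mrt}$, and since $\lambda_1$ was arbitrary the argument holds for all $\lambda_1$.

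The main obstacle is really just bookkeeping around the strictness and the degenerate cases: I must ensure $g_2$ is strictly, not merely weakly, decreasing past its maximum, which follows from the strictly negative second derivative recorded in the earlier lemma (valid when the channels are neither parallel nor orthogonal), and I should note that even if $g_1(\lambda_1)=0$ (so that $B$ is constant) the strict decrease of $A$ alone still forces the product to decrease strictly. No delicate estimate is needed, since the result is a pure monotonicity statement; the only care required is to keep the positivity of both factors explicit when passing each inequality through the product.
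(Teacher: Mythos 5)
Your proof is correct and follows essentially the same route as the paper's: factor $F_2(\lambda_1,\lambda_2)$ into the product $\bigl(1+g_2(\lambda_2)P_{max}\bigr)\bigl(1+\tfrac{g_1(\lambda_1)P_{max}}{1+\lambda_2\|\mathbf{h}_{12}\|^2P_{max}}\bigr)$ and use the monotonicity of each positive factor, with $g_2$ maximized at $\lambda_2^{\mrt}$. The only difference is where the strictness is obtained: you get it from the strict decrease of $g_2$ beyond its maximum and need only that the second factor is non-increasing, whereas the paper uses the weak bound $g_2(\lambda)\le g_2(\lambda_2^{\mrt})$ and derives strictness from the strictly increasing denominator of the second factor (which implicitly requires $g_1(\lambda_1)>0$); your allocation has the minor advantage of also covering the degenerate case $g_1(\lambda_1)=0$.
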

\begin{proof}
For any $\lambda$, $g_2(\lambda) \leq g_2(\lambda_2^{\mrt})$ because $g_2(.)$ is a concave function and attains maximum at $\lambda_2^{\mrt}$. Also, for any $\lambda > \lambda_2^{\mrt}$, the denominator of $F_2(\lambda_2)$ increases. Thus, for any $\lambda> \lambda_2^{\mrt}$, $F_2(\lambda)< F_2(\lambda_2^{\mrt})$.
\end{proof}

By Lemma \ref{lem:rnd_mrt_opt}, $\bar{R}^{nd}$ is maximized by $\lambda_2^{\mrt}$ for arbitrary fixed $\lambda_1$ if and only if $F_1(\lambda_1,\lambda_2^{\mrt}) \leq F_2(\lambda_1,\lambda_2^{\mrt})$ which is equivalent to the following:
\begin{eqnarray}
F_1(\lambda_1,\lambda_2^{\mrt}) \leq F_2(\lambda_1,\lambda_2^{\mrt})
&\Leftrightarrow & \lambda_1 \| \mathbf{h}_{21}\|^2 \leq \frac{1+ \| \mathbf{h}_{22}\|^2 P_{max}}{1+ \| \mathbf{h}_{12}\|^2 \cos^2(\theta_2)P_{max}} g_1(\lambda_1).
\end{eqnarray}
Also, for arbitrary fixed $\lambda_2$, $F_1(\lambda_1,\lambda_2)$ is linearly increasing with $\lambda_1$ and $F_2(\lambda_1,\lambda_2)$ is concave in $\lambda_1$ and attains maximum at $\lambda_1^{\mrt}$. Similar to the argument before, there are at most 2 intersection points between $F_1(\lambda_1,\lambda_2)$ and $F_2(\lambda_1,\lambda_2)$. We observe that  
\begin{equation}
\begin{aligned}
F_1(0,\lambda_2)&= 1+ g_2(\lambda_2)P_{max}\\
&< \left(1+g_2(\lambda_2)P_{max} \right) \left( 1+ \frac{g_1(0)P_{max}}{1+ \lambda_2 \|\mathbf{h}_{12} \|^2 P_{max}}\right)\\
&= F_2(0,\lambda_2).
\end{aligned}
\end{equation} Note that $g_1(0)>0$ except when $\mathbf{h}_{11}$ is orthogonal to $\mathbf{h}_{21}$ whose probability is zero almost surely.
Since $F_1(0,\lambda_2) < F_2(0, \lambda_2)$ for any $\lambda_2$, there is at most 1 intersection point.
If there is no intersection point, the curve $F_2(\lambda_1, \lambda_2)$ is above $F_1(\lambda_1, \lambda_2)$ for any $\lambda_1$ and therefore the optimal value of $\lambda_1$ which maximizes $F_1(\lambda_1,\lambda_2)$ at $\lambda_1=1$. If there is 1 intersection point, denote the intersection solution as $\lambda_1^{(b)}$. Graphically, it is clear to see that if and only if $\lambda_1^{(b)}<\lambda_1^{\mrt}$,  the optimal solution is $\lambda_1=\lambda_1^{\mrt}$.
Thus, we have for arbitrary fixed $\lambda_2$,
\begin{eqnarray*}
\lambda_1=1 &\text{ is optimal if and only if  }& F_1(1,\lambda_2) < F_2(1, \lambda_2)\\
\lambda_1= \lambda_1^{\mrt} &\text{ is optimal if and only if } &
 \lambda_1^{(b)} \leq \lambda_1^{\mrt}
  \end{eqnarray*} where $\lambda_1^{(b)}$ is given in \eqref{eqt:lambda_bal_1}. Now we combine the conditions for $\lambda_1$ and $\lambda_2$ and after some manipulations, we obtain the following:
  \begin{eqnarray}
  && \nonumber \left( \lambda_1^{\mrt}, \lambda_2^{\mrt}\right) \text{ is optimal if and only if }  \\
  & & \frac{c_1 \| \Pi_{21}^\perp \mathbf{h}_{11}\|^2}{c_2 \| \mathbf{h}_{21}\|^2 - 2 \sqrt{c_1 c_2} |\mathbf{h}_{21}^H \mathbf{h}_{11} | + c_1 \| \mathbf{h}_{11}\|^2}  < \cos^2(\theta_1) \leq \frac{(1+ \| \mathbf{h}_{22}\|^2 P_{max}) \| \mathbf{h}_{11}\|^2}{(1+ \| \mathbf{h}_{12} \|^2 \cos^2(\theta_2)P_{max} ) \| \mathbf{h}_{21}\|^2 } \\
&&  \left( \lambda_1=1, \lambda_2^{\mrt} \right) \text{ is optimal if and only if  }  \| \mathbf{h}_{21}\|^2 \leq (1+ \| \mathbf{h}_{22}\|^2 P_{max}) \frac{\| \Pi_{21} \mathbf{h}_{11}\|^2 }{1+ \| \mathbf{h}_{12}\|^2 \cos^2(\theta_2) P_{max}}
  \end{eqnarray} where $c_1=\frac{P_{max}}{\| \mathbf{h}_{12}\|^2 \cos^2(\theta_2) P_{max} + 1}$ and $c_2=\frac{P_{max}}{\|\mathbf{h}_{22} \|^2 P_{max}+1}$.

\subsection{Proof of MRT optimality in the DD region}\label{app:mrtopt_rdd}
In this section, we provide the sum rate optimality conditions for two MRT strategies, namely: interference amplifying beamforming $\mathbf{w}_i=\frac{\mathbf{h}_{ji}}{\| \mathbf{h}_{ji}\|}$, in Section \ref{section:mrtoptrddh21} and direct channel beamforming $\mathbf{w}_i=\frac{\mathbf{h}_{ii}}{\| \mathbf{h}_{ii}\|}$, in Section \ref{section:mrtoptrddh11}.

\subsubsection{Optimality conditions of amplifying interference in the DD region}\label{section:mrtoptrddh21} 
We aim to prove that the beamforming vector $\mathbf{w}_i=\frac{\mathbf{h}_{ji}}{\|\mathbf{h}_{ji} \|}$ is sum rate optimal in $\mathbf{R}^{dd}$ if and only if $(1+g_{jj}) \|  \mathbf{h}_{ii}\|^2 \cos^2(\theta_i)  \geq  \| \mathbf{h}_{ji}\|^2$ and 
$ \|\mathbf{h}_{ii}\|^2\cos^2(\theta_i) P_{max} \leq \frac{g_{ij}}{g_{jj}}-1$, where $g_{km}=\| \mathbf{h}_{km}^H \mathbf{w}_m\|^2$.

Due to symmetry of the problem, the proof for $\mathbf{w}_1$ and $\mathbf{w}_2$ is similar and we only give the proof for $\mathbf{w}_2$ here. First, by the definition of $\Omega^{01}$ \eqref{eqt:omega}, we observe that $\mathbf{w}_2(\lambda_2^{\mrt})=\frac{\mathbf{h}_{12}}{\|\mathbf{h}_{12} \|} $ is in the constraint set $\Omega^{01}$ if and only if the following constraints are satisfied:
\begin{equation*}
\left\{
\begin{aligned}
(1+g_{11}) \left| \mathbf{h}_{22}^H \mathbf{w}_2(\lambda_2^{\mrt}) \right|^2 & \geq  | \mathbf{h}_{12}^H \mathbf{w}_2(\lambda_2^{\mrt}) |^2\\
 \left| \mathbf{h}_{22}^H \mathbf{w}_2(\lambda_2^{\mrt}) \right|^2 P_{max} &\leq \frac{g_{21}}{g_{11}}-1
 \end{aligned} \right.
\end{equation*} 
which are equivalent to 
\begin{equation}\label{eqt:mrtoptrdd}
\left\{
\begin{aligned}
(1+g_{11}) \|  \mathbf{h}_{22}\|^2 \cos^2(\theta_2) & \geq  \| \mathbf{h}_{12}\|^2\\
 \|\mathbf{h}_{22}\|^2\cos^2(\theta_2) P_{max} &\leq \frac{g_{21}}{g_{11}}-1
 \end{aligned} \right. .
\end{equation} 

Thus, if and only if \eqref{eqt:mrtoptrdd} is satisfied, the beamforming vector $\mathbf{w}_2(\lambda_2^{\mrt})=\frac{\mathbf{h}_{12}}{\|\mathbf{h}_{12} \|}$ is in $\Omega^{01}$. Now we establish that this is the sum rate optimal solution.

$\tilde{Z}_1$ is monotonically increasing with $g_{22}$ in constraint set $\Omega^{11}$.  Note that $\mathbf{w}_2(1)=\frac{\mathbf{h}_{22}}{\| \mathbf{h}_{22}\|}$ \footnote{From now on, we write $\mathbf{w}_2(\lambda_2=1)$ as $\mathbf{w}_2(1)$. We must not confuse this with the first element of vector $\mathbf{w}_2$.} is \emph{not} in the constraint set $\Omega^{11}$. We see this by observing the constraint set of $\Omega^{11}$ requires:
\begin{itemize}
 \item $|\mathbf{h}_{22}^H \mathbf{w}_2(\lambda_2) |^2= \lambda_2 \| \mathbf{h}_{22}\|^2$. Thus, we have $|\mathbf{h}_{22}^H \mathbf{w}_2(\lambda_2^{\mrt}) |^2 \leq |\mathbf{h}_{22}^H \mathbf{w}_2(1) |^2$.
 \item $|\mathbf{h}_{12}^H \mathbf{w}_2(\lambda_2) |^2= \left( \sqrt{\lambda_2} \| \Pi_{22} \mathbf{h}_{12}\| + \sqrt{1-\lambda_2} \| \Pi_{22}^\perp \mathbf{h}_{12}\|\right)^2$ is concave in $\lambda_2$ and attains maximum at $\lambda_2^{\mrt}$ where $\mathbf{w}_2(\lambda_2^{\mrt})=\frac{\mathbf{h}_{12}}{\| \mathbf{h}_{12}\|}$. Thus, $|\mathbf{h}_{12}^H \mathbf{w}_2(1)|^2 \leq |\mathbf{h}_{12}^H \mathbf{w}_2(\lambda_2^{\mrt}) |^2$.
 \end{itemize}
 If the conditions in \eqref{eqt:mrtoptrdd} are satisfied, we have
 \begin{equation*}
 (1+g_{11})|\mathbf{h}_{22}^H \mathbf{w}_2(1) |^2 \geq (1+g_{11}) \|  \mathbf{h}_{22}\|^2 \cos^2(\theta_2)  \geq  \| \mathbf{h}_{12}\|^2 \geq |\mathbf{h}_{12}^H \mathbf{w}_2(1) |^2.
 \end{equation*} To satisfy the constraints of both $\Omega^{01}$ and $\Omega^{11}$ the sum rate optimal solution lies on the boundary between $\Omega^{01}$ and $\Omega^{11}$, namely $\Omega^{A}$.

The constraints set $\Omega^{10}$ is empty. Using the same argument as in the case of $\tilde{Z}_1$, for any $\lambda_2$ that satisfies $|\mathbf{h}_{22}^H \mathbf{w}_2(\lambda_2)|^2 \geq \frac{g_{21}}{g_{11}}-1$ must satisfy $\lambda_2 \geq \lambda_2^{\mrt}$. Also, any $\lambda_2 \geq \lambda_2^{\mrt}$ satisfies
 \begin{equation}
 (1+g_{11})|\mathbf{h}_{22}^H \mathbf{w}_2(\lambda_2) |^2 \geq (1+g_{11}) \|  \mathbf{h}_{22}\|^2 \cos^2(\theta_2)  \geq  \| \mathbf{h}_{12}\|^2 \geq |\mathbf{h}_{12}^H \mathbf{w}_2(\lambda_2) |^2.
 \end{equation} Thus, for any $\lambda_2$ that satisfies $B=0$ must have $A=0$, whih indicates that the constraint is empty.

Similar to the argument before, $|\mathbf{h}_{22}^H \mathbf{w}_2(\lambda_2)|^2 \geq \frac{g_{21}}{g_{11}}-1$ is a tighter constraint for $|\mathbf{h}_{22}^H \mathbf{w}_2(\lambda_2)|^2 $ than $(1+ g_{11})|\mathbf{h}_{22}^H \mathbf{w}_2(\lambda_2) |^2 \geq |\mathbf{h}_{12}^H \mathbf{w}_2(\lambda_2)|^2$ in $\Omega^{00}$.  As $\tilde{Z}_4$ is monotonically decreasing with $|\mathbf{h}_{22}^H \mathbf{w}_2(\lambda_2)|^2$ and increasing with $|\mathbf{h}_{12}^H \mathbf{w}_2(\lambda_2)|^2$ in $\Omega^{00}$ , the sum rate optimal solution in this case is the beamforming vector which satisfies:
 \begin{equation*}\left\{
 \begin{aligned}
 |\mathbf{h}_{22}^H \mathbf{w}_2(\lambda_2)|^2 &= \frac{g_{21}}{g_{11}}-1\\
 (1+g_{11}) |\mathbf{h}_{22}^H \mathbf{w}_2(\lambda_2)|^2 &= |\mathbf{h}_{12}^H \mathbf{w}_2(\lambda_2) |^2
 \end{aligned} \right.
 \end{equation*} which is in $\Omega^B$.
 
 Now we show that $\mathbf{w}_2(\lambda_2^{\mrt})=\frac{\mathbf{h}_{12}}{\| \mathbf{h}_{12}\|}$ is the optimal solution in $\Omega^{01}$. $\tilde{Z}_2$ is monotonically increasing with $g_{12}$. Since we assumed that $\frac{\mathbf{h}_{12}}{\| \mathbf{h}_{12}\|}$ is in $\Omega^{01}$, it is the optimal solution.

Finally, we notice that for any sum rate optimal solutions in $\Omega^A$ which maximizes $\tilde{Z}_1$ in $\Omega^{11}$, it is also in the constraint set $\Omega^{01}$ and therefore achieves a smaller sum rate than $\frac{\mathbf{h}_{12}}{\| \mathbf{h}_{12}\|}$. Similarly, any solution in $\Omega^B$ that maximizes $\tilde{Z}_4$ is also in constraint set $\Omega^{01}$ and therefore achieves a smaller sum rate than $\frac{\mathbf{h}_{12}}{\| \mathbf{h}_{12}\|}$.

\subsubsection{Optimality conditions of direct channel beamforming in the DD region}\label{section:mrtoptrddh11}

Now we prove that the beamforming vector $\mathbf{w}_i(\lambda_i)=\frac{\mathbf{h}_{ii}}{\| \mathbf{h}_{ii}\|}$ attains the maximum sum rate in the DD region, for arbitrary fixed $\mathbf{w}_j$ if $(1+g_{jj}) | \mathbf{h}_{ii}^H \mathbf{w}_i(1)|^2 \leq |\mathbf{h}_{ji}^H \mathbf{w}_i(1) |^2  $.

We provide the proof for $\mathbf{w}_2$ for simplicity as by reversing the role Tx 1 and Tx 2, the proof for $\mathbf{w}_1$ can be obtained.
Notice that if the optimality condition is true: $(1+g_{11}) | \mathbf{h}_{22}^H \mathbf{w}_2(1)|^2 \leq |\mathbf{h}_{12}^H \mathbf{w}_2(1) |^2  $, then the following arguments are true.

The constraint sets $\Omega^{01}$ and $\Omega^{00}$ are empty. 
This is because 
$$(1+g_{11}) | \mathbf{h}_{22}^H \mathbf{w}_2(0)|^2 =(1+g_{11}) \left| \mathbf{h}_{22}^H \frac{\Pi_{22}^\perp \mathbf{h}_{12}}{\|\Pi_{22}^\perp \mathbf{h}_{12} \|}\right|^2 = 0 \leq  |\mathbf{h}_{12}^H \mathbf{w}_2(0)|^2.$$ Thus, together with the assumption above: $(1+g_{11}) | \mathbf{h}_{22}^H \mathbf{w}_2(1)|^2 \leq |\mathbf{h}_{12}^H \mathbf{w}_2(1) |^2  $, we have
\begin{equation*}\left\{
\begin{aligned}
|\mathbf{h}_{12}^H \mathbf{w}_2(1)|^2 & \geq (1+g_{11}) |\mathbf{h}_{22}^H \mathbf{w}_2(1)|^2\\
|\mathbf{h}_{12}^H \mathbf{w}_2(0)|^2 & \geq (1+g_{11}) |\mathbf{h}_{22}^H \mathbf{w}_2(0)|^2.
\end{aligned}\right. 
\end{equation*} Since $|\mathbf{h}_{22}^H \mathbf{w}_2(\lambda)|^2$ is linearly increasing with $\lambda$ and $|\mathbf{h}_{12}^H \mathbf{w}_2(\lambda)|^2$ is concave in $\lambda$, we draw the conclusion that for all $\lambda$, $|\mathbf{h}_{12}^H \mathbf{w}_2(\lambda)|^2 \geq (1+g_{11}) |\mathbf{h}_{22}^H \mathbf{w}_2(\lambda)|^2$. Thus, for all $\lambda$, $A=1$.

 The sum rate optimal solution in $\Omega^{11}$ is either $\Omega^B$ or $\frac{\mathbf{h}_{22}}{\|\mathbf{h}_{22} \|}$. If the optimal solution is $\frac{\mathbf{h}_{22}}{\|\mathbf{h}_{22} \|}$ then we know that $\Omega^{10}$ is empty and $\frac{\mathbf{h}_{22}}{\|\mathbf{h}_{22} \|}$ is sum rate optimal. If the sum rate optimal solution in $\Omega^{11}$ is in $\Omega^B$, then these solutions are also in constraint set $\Omega^{10}$ which achieve a smaller sum rate of $\tilde{Z}_3$ than $\frac{\mathbf{h}_{22}}{\|\mathbf{h}_{22} \|}$.

Now, we obtained the MRT optimality conditions for each transmit beamformer $\mathbf{w}_i$ given $\mathbf{w}_j$. Apply the same approach and reverse the role of Tx 1 and 2, we obtain the conditions for $\mathbf{w}_j$. Combine both inequalities to obtain the conditions as shown in Theorem \ref{thm:mrtopt_rdd}.

\bibliographystyle{IEEEbib}
\bibliography{transit2011.bib} 

\end{document}